\numberwithin{equation}{section}
\newtheorem{theorem}{Theorem}[section]
\newtheorem{proposition}[theorem]{Proposition}
\newtheorem{corollary}[theorem]{Corollary}
\newtheorem{lemma}[theorem]{Lemma}
\newcommand{\rd}{{\rm d}}
\newcommand{\be}{\begin{equation}}
\newcommand{\ee}{\end{equation}}
\newcommand{\bey}{\begin{eqnarray}}
\newcommand{\eey}{\end{eqnarray}}
\newcommand{\eps}{\varepsilon}
\newcommand{\bx}{{\bf x}}
\newcommand{\by}{{\bf y}}
\newcommand{\ph}{\varphi}
\newcommand{\la}{\langle}
\newcommand{\ra}{\rangle}
\renewcommand{\a}{\alpha}
\newcommand{\cU}{{\cal U}}
\newcommand{\cM}{{\cal M}}
\newcommand{\bR}{{\mathbb R}}
\newcommand{\bC}{{\mathbb C}}
\newcommand{\bN}{{\mathbb N}}
\newcommand{\bi}{\bigskip}
\newcommand{\tr}{\mbox{Tr}}
\newcommand{\wt}{\widetilde}
\newcommand{\const}{\mathrm{const}}
\newcommand{\cF}{{\cal F}}
\newcommand{\cE}{{\cal E}}
\newcommand{\cV}{{\cal V}}
\newcommand{\cW}{{\cal W}}
\newcommand{\cK}{{\cal K}}
\newcommand{\cH}{{\cal H}}
\newcommand{\cL}{{\cal L}}
\newcommand{\cN}{{\cal N}}
\newcommand{\donothing}[1]{}
\newcommand{\re}{\text{Re}}
\newcommand{\im}{\text{Im}}
\newcommand{\trr}{\mathrm{Tr}}
\newcommand{\UU}{\mathcal{U}_N(t;0)}
\newcommand{\UUc}{\mathcal{U}^*_N(t;0)}
\newcommand{\tUU}{\widetilde{\mathcal{W}}_N(t;0)}
\newcommand{\LL}{\mathcal{L}_N(t)}
\newcommand{\tLL}{\widetilde{\mathcal{M}}_N(t)}
\begin{document}



\title{Dynamical Collapse of Boson Stars}

\author{Alessandro Michelangeli${}^1$, Benjamin Schlein${}^2$  \\ \\
\small{1. Institute of Mathematics, LMU Munich} \\
\small{Theresienstr. 39, 80333 Munich, Germany} \\ \small{michel@math.lmu.de} \\ \\
\small{2. Institute for Applied Mathematics, University of Bonn} \\
\small{Endenicher Allee 60, 53115 Bonn, Germany} \\
\small{benjamin.schlein@hcm.uni-bonn.de}}

\date{}

\maketitle

\begin{abstract}
We study the time evolution in system of $N$ bosons with a relativistic dispersion law interacting through a Newtonian gravitational potential with coupling constant $G$. We consider the mean field scaling where $N$ tends to infinity, $G$ tends to zero and $\lambda = G N$ remains fixed. We investigate the relation between the many body quantum dynamics governed by the Schr\"odinger equation and the effective evolution described by a (semi-relativistic) Hartree equation. In particular, we are interested in the 
super-critical regime of large $\lambda$ (the sub-critical case has been studied in \cite{ES,KP}), where the nonlinear Hartree equation is known to have solutions which blow up in finite time. To inspect this regime, we need to regularize the interaction in the many body Hamiltonian with an $N$ dependent cutoff that vanishes in the limit $N\to \infty$. We show, first, that if the solution of the nonlinear equation does not blow up in the time interval $[-T,T]$, then the many body Schr\"odinger dynamics (on the level of the reduced density matrices) can be approximated by the nonlinear Hartree dynamics, just as in the sub-critical regime. Moreover, we prove that if the solution of the nonlinear Hartree equation blows up at time $T$ (in the sense that the $H^{1/2}$ norm of the solution diverges as time approaches $T$), then also the solution of the linear Schr\"odinger equation collapses (in the sense that the kinetic energy per particle diverges) if $t \to T$ and, simultaneously, $N \to \infty$ sufficiently fast. This gives the first dynamical description of the phenomenon of gravitational collapse as observed directly on the many body level. 
\end{abstract}

\section{Introduction and main results}
\label{sec:intro}

We consider systems of gravitating bosons known as {\it boson stars}. Assuming the particles to have a relativistic dispersion, but the interaction to be treated classically (Newtonian gravity), we arrive at the $N$-particle Hamiltonian 
\[ H_{\text{grav}} = \sum_{j=1}^N \sqrt{1-\Delta_{x_j}} - G \sum_{i<j}^N \frac{1}{|x_i -x_j|} \] acting on the Hilbert space $L^2_s (\bR^{3N})$, the subspace of $L^2 (\bR^{3N})$ containing all functions symmetric with respect to arbitrary permutations (here we use units with $\hbar=1$, $c=1$, and $m=1$, where $m$ denotes the mass of the bosons). 

\medskip

We are interested in the mean field limit where $N \to \infty$, $G\to 0$ so that $NG=:\lambda$ remains fixed. In other words, we are going to study a family of systems, parametrized by the number of bosons $N$, described by the $N$ particle Hamiltonian
\begin{equation}\label{eq:HN1}
H_N = \sum_{j=1}^N \sqrt{1-\Delta_{x_j}} - \frac{\lambda}{N} \sum_{i<j}^N \frac{1}{|x_i -x_j|} \, . 
\end{equation}

\medskip

The system is critical, and it behaves very differently depending on the value of the coupling constant $\lambda >0$. The criticality of the system is a consequence of the fact that the kinetic energy scales, for large momenta, like the potential energy (both scales as an inverse length). The potential energy can be made arbitrarily large (and negative) by moving the particles closer and closer together ($N$ particle in a box of volume $\ell^3$ have a potential energy of the order $N \ell^{-1}$, taking also into account the $1/N$ factor in front of the interaction energy). However, in order to localize particles in a small volume we have to pay a price in terms of kinetic energy (to localize $N$ particles within a box of volume $\ell^3$, we need an energy proportional to $N\ell^{-1}$). This simple observation implies that, for small values of the coupling constant $\lambda$, the kinetic energy dominates the potential energy, and that, for sufficiently large $\lambda$, the kinetic energy needed to bring particles together is not sufficient to compensate for the gain in the potential energy.

\medskip

For every $N \in \bN$, there exists therefore a critical coupling constant $\lambda_{\text{crit}} (N)$ such that $H_N$ is bounded below for all $\lambda < \lambda_{\text{crit}} (N)$ and such that
\[ \inf_{\psi \in L^2 (\bR^{3N})} \frac{\langle \psi, H_N \psi \rangle}{\| \psi \|^2} = - \infty \] for all $\lambda > \lambda_{\text{crit}} (N)$. It was proven in \cite{LY} that the critical constant is given, as $N \to \infty$, by the critical coupling constant for the Hartree energy functional
\begin{equation}\label{eq:enhar} \cE_{\text{Hartree}} (\ph) = \int \rd x \, \left| (1-\Delta)^{1/4} \ph (x)\right|^2 - \frac{\lambda}{2} \int \rd x \rd y \frac{|\ph (x)|^2 |\ph (y)|^2}{|x-y|}\, . \end{equation}
More precisely, it was proven in \cite{LY} that, as $N \to \infty$, $\lambda_{\text{crit}} (N) \to \lambda^{\text{H}}_{\text{crit}}$, where 
\[ \frac{1}{\lambda_{\text{crit}}^\text{H}} = \sup_{\ph \in L^2 (\bR^3), \| \ph \|=1} \frac{1}{2} \frac{
\int \rd x \rd y \, |\ph (x)|^2 |\ph (y)|^2 \, |x-y|^{-1}}{\int \rd x \, | |\nabla|^{1/2} \ph (x)|^2} \, .  \]  Note that, with this definition, $\cE_{\text{Hartree}} (\ph) \geq 0$ for all $\ph \in H^{1/2} (\bR^3)$ if $\lambda \leq \lambda^\text{H}_{\text{crit}}$ while \[ \inf_{\ph \in H^{1/2} (\bR^3) , \| \ph \| =1} \cE_{\text{Hartree}} (\ph) = -\infty \] if $\lambda > \lambda^\text{H}_{\text{crit}}$. It is also possible (see \cite{LY}) to give bounds on the fluctuations of $\lambda_{\text{crit}} (N)$ around $\lambda_{\text{crit}}^\text{H}$:
\[ \lambda_{\text{crit}}^\text{H} (1- c_1 N^{-1/3}) \leq \lambda_{\text{crit}} (N) \leq \lambda_{\text{crit}}^\text{H} (1+ c_2 N^{-1}) \,  \] for appropriate constants $c_1, c_2>0$.
The value of $\lambda_{\text{crit}}^\text{H}$ is not explicitly known. By Kato's inequality, $|x_i - x_j|^{-1} \leq  (\pi/2) |\nabla_{x_j}|$, it is easy to see that $\lambda^\text{H}_{\text{crit}} \geq (4/ \pi) \simeq 1.3$. In \cite{LT,LY}, it is also shown that $\lambda_{\text{crit}}^\text{H} \leq 2.7$.

\medskip

Let us discuss first the {\it subcritical} case $\lambda < \lambda^\text{H}_{\text{crit}}$. In this case, the Hamiltonian (\ref{eq:HN1}) has, at least for sufficiently large $N$ (so that $\lambda < \lambda_{\text{crit}} (N)$), a unique realization as a self-adjoint operator on $L^2_s (\bR^{3N})$ and therefore generates the one-parameter group of unitary transformation $U_N (t) = e^{-iH_N t}$, with $t\in\bR$. The unique global solution of the $N$-particle Schr\"odinger equation 
\begin{equation}\label{eq:schrN} i\partial_t \psi_{N,t} = H_N \psi_{N,t}, \qquad \psi_{N,t=0} = \psi_N \in L^2_s (\bR^{3N}) \end{equation}
which governs the time evolution of an arbitrary initial $N$-particle wave function $\psi_N$ is then given by $\psi_{N,t} = U_N (t) \psi_N$. 

\medskip

Consider now the time evolution (\ref{eq:schrN}) of a factorized initial data $\psi_N = \ph^{\otimes N}$ for some $\ph \in L^2 (\bR^3)$ (here we use the notation $\ph^{\otimes N} (\bx) = \prod_{j=1}^N \ph (x_j)$, where $\bx = (x_1, \dots ,x_N)$). Of course, factorization is not preserved by the evolution. Nevertheless, because of  the mean field character of the interaction, one may expect factorization of the evolved wave-function $\psi_{N,t}$ to be restored, in an appropriate sense, in the limit of large $N$. If we assume, formally, that \begin{equation}\label{eq:fact-for} \psi_{N,t} \simeq \ph_t^{\otimes N},\end{equation} then it is easy to derive a self-consistent equation for the evolution of the one-particle orbital $\ph_t$. In fact, if (\ref{eq:fact-for}) is correct, the potential experienced by the particles can be approximated by an averaged, mean field, potential given by the convolution $-\lambda \, |.|^{-1} * |\ph_t|^2$. Therefore, (\ref{eq:fact-for}) implies that $\ph_t$ must evolve according to the semirelativistic nonlinear Hartree equation 
\begin{equation}\label{eq:hartree0}
i\partial_t \ph_t =  \sqrt{1-\Delta} \, \ph_t -\lambda \left( \frac{1}{|.|} * |\ph_t|^2 \right) \ph_t 
\end{equation}
with initial data $\ph_{t=0}= \ph$. The question now is in which sense can the factorization (\ref{eq:fact-for}) hold true. It turns out that (\ref{eq:fact-for}) should be understood on the level of the marginal densities associated with $\psi_{N,t}$. 

\medskip

Let $\gamma_{N,t} = |\psi_{N,t} \rangle \langle \psi_{N,t}|$ be the orthogonal projection onto the solution of the $N$-particle Scr\"odinger equation $\psi_{N,t} = e^{-iH_N t} \psi_N$ with factorized initial data $\psi_N = \ph^{\otimes N}$. Then, for $k=1, \dots ,N$, we define the $k$-particle marginal (or reduced) density $\gamma^{(k)}_{N,t}$ associated with $\psi_{N,t}$ by taking the partial trace of $\gamma_{N,t}$ over the last $N-k$ particles; that is 
\[ \gamma^{(k)}_{N,t} = \tr_{k+1, k+2, \dots , N} \, \gamma_{N,t} \, . \]
In other words, $\gamma^{(k)}_{N,t}$ is defined as a non-negative trace class operator on $L^2 (\bR^{3k})$ with kernel 
\[ \begin{split} \gamma^{(k)}_{N,t} (\bx_k, \by_k) &= \int \rd \bx_{N-k} \, \gamma_{N,t} (\bx_k, \bx_{N-k} ; \by_k, \bx_{N-k})  \\ &= \int \rd \bx_{N-k} \, \psi_{N,t} (\bx_k, \bx_{N-k}) \, \overline{\psi}_{N,t} (\by_k, \bx_{N-k}) , \end{split} \] 
where we introduced the notation $\bx_k = (x_1, \dots, x_k)$, $\by_k = (y_1, \dots , y_k)$, $\bx_{N-k} = (x_{k+1}, \dots , x_N)$. 

\medskip

The first rigorous proof of the validity of (\ref{eq:fact-for}) on the level of the marginal densities (and hence the first derivation of (\ref{eq:hartree0})) has been obtained, for the subcritical regime, in \cite{ES}. More precisely, under the condition that $\lambda < 4/\pi$ (which is smaller than $\lambda_{\text{crit}}^\text{H}$), it is proven in \cite{ES} that, for every fixed $k \geq 1$, and for every $t \in \bR$, 
\begin{equation}\label{eq:conv-sub} \tr \, \left| \gamma^{(k)}_{N,t} - |\ph_t \rangle \langle \ph_t|^{\otimes k} \right| \to 0 \end{equation} as $N \to \infty$. Here $\ph_t$ is the solution of the semirelativistic nonlinear Hartree equation (\ref{eq:hartree0}) with initial data $\ph_{t=0} = \ph$. Note that the convergence of the marginal densities implies that, for an arbitrary $k$-particle observable $J^{(k)}$, we have
\[ \langle \psi_{N,t} , (J^{(k)} \otimes 1^{(N-k)} ) \psi_{N,t} \rangle \to \langle \ph_t^{\otimes k} , J^{(k)} \ph_t^{\otimes k} \rangle \] as $N \to \infty$. In this sense, the solution of the $N$-particle Schr\"odinger equation $\psi_{N,t}$ can be approximated, for large $N$, by products of the solution of the one-particle semirelativistic Hartree equation (\ref{eq:hartree0}). Observe here that the semirelativistic Hartree equation (\ref{eq:hartree0}) is locally well-posed in the energy space $H^{1/2} (\bR^3)$, for arbitrary $\lambda \in \bR$. In fact, it is proven in \cite{L} that, for every $\ph \in H^{1/2} (\bR^3)$, there exists a maximal $0 < T \leq \infty$ and a unique solution $\ph_. \in C ((-T,T), H^{1/2} (\bR^3))$ of (\ref{eq:hartree0}) in the time interval $t \in (-T,T)$. Here, either $T=\infty$ (and then the solution is global), or $T <\infty$ and $\| \ph_t \|_{H^{1/2}} \to \infty$ as $t \to T$ or as $t \to -T$ (in this case, $\ph_t$ blows up in finite time). For $\lambda < \lambda^\text{H}_{\text{crit}}$, the unique solution $\ph_t$ is also shown to be global (hence $T=\infty$ in this case); for this reason, (\ref{eq:conv-sub}) makes sense for all $t \in \bR$. 

\medskip

Note that the result (\ref{eq:conv-sub}) is just one of the several results concerning the derivation of effective evolution equations from first principle quantum dynamics which have been obtained in the last years. Other results of this type concern 
the derivation of the non-relativistic Hartree equation in the mean-field limit (see for example \cite{S,EY,BGM,RS,FKP,KP,GMM,GMM2}) and the derivation of the (non-relativistic) Gross-Pitaevskii equation for the description of the dynamics of initially trapped Bose-Einstein condensates (see \cite{ESY1,ESY2,ESY3,ESY4} and, very recently, \cite{P}). Although most of these papers deal with non-relativistic particles, the authors of \cite{KP} also consider semirelativistic bosons interacting through a Newtonian potential, in the subcritical regime $\lambda < \lambda^{\text{H}}_{\text{crit}}$. They improve (\ref{eq:conv-sub}), by giving an explicit bound on the rate of the convergence; more precisely, they show that 
\begin{equation}\label{eq:conv-sub1} \tr \, \left| \gamma^{(k)}_{N,t} - |\ph_t \rangle \langle \ph_t|^{\otimes k} \right| \leq \frac{C_{k,t}}{\sqrt{N}} \end{equation}
for a constant $C_{k,t} =C_t \, \sqrt{k}$, where $C_t$ grows at most exponentially in $t \in \bR$ (under additional assumptions on the dispersion of $\ph_t$, $C_t$ is bounded uniformly in $t$). 

\medskip

So far, we considered the subcritical regime $\lambda < \lambda^{\text{H}}_{\text{crit}}$. Let us discuss now the {\it supercritical} regime $\lambda > \lambda^{\text{H}}_{\text{crit}}$. The criticality of the Hartree energy functional remarked 
earlier can also be observed on the level of the time-dependent semirelativistic nonlinear Hartree equation (\ref{eq:hartree0}). On the one hand, (\ref{eq:hartree0}) is globally well-posed for $\lambda < \lambda^{\text{H}}_{\text{crit}}$. On the other hand, it turns out that, for $\lambda > \lambda^\text{H}_{\text{crit}}$, (\ref{eq:hartree0}) has solutions that blowup in finite time. More precisely, it was proven in \cite{FL} that, for every spherically symmetric $\ph \in H^{1} (\bR^3)$ with $\cE_{\text{Hartree}} (\ph) < 0$ 
such that $\| |x| \ph \| < \infty$, the unique maximal solution $\ph_t \in C((-T,T), H^{1/2} (\bR^3))$ of (\ref{eq:hartree0}) with initial data $\ph_{t=0} = \ph$ blows up in finite time, in the sense that $T<\infty$ and 
\[ \| \ph_t \|_{H^{1/2}} \to \infty, \qquad \text{as } t \to T \quad \text{or as } t \to - T \, . \] 

\medskip

Solutions of (\ref{eq:hartree0}) exhibiting blowup in finite time are supposed to describe, within the framework of Chandrasekhar's theory, the gravitational collapse of bosons stars. The expectation that blowup solutions of the semirelativistic Hartree equation describe the collapse of boson stars is based on the unproven assumption that the many-body dynamics can be approximated by the Hartree dynamics also in the supercritical regime $\lambda > \lambda^\text{H}_{\text{crit}}$ and all the way up to the time of the (nonlinear) blowup. In this paper, we give a rigorous proof of this physical assumption. We show, first of all, that the convergence (\ref{eq:conv-sub1}) also holds in the supercritical case $\lambda \geq \lambda^\text{H}_{\text{crit}}$, if the norm $\| \ph_s \|_{H^{1/2}}$ stays bounded in the interval $[0,t]$. Moreover, we prove that the convergence of the $N$-particle Schr\"odinger evolution towards the Hartree dynamics does not only hold in the sense of (\ref{eq:conv-sub1}); instead, it also holds (again assuming that the norm $\| \ph_s \|_{H^{1/2}}$ remains bounded in $[0,t]$) with respect to the stronger energy norm, at least for the one-particle marginal density (this means that $(1-\Delta)^{1/4} \gamma^{(1)}_{N,t} (1-\Delta)^{1/4}$ converges to $(1-\Delta)^{1/4} |\ph_t \rangle \langle \ph_t| (1-\Delta)^{1/4}$ in the trace norm, as $N \to \infty$). Note that this is the first proof of the convergence of the many body Schr\"odinger evolution towards the Hartree dynamics with respect to the energy norm, not only for supercritical semirelativistic boson stars, but for any mean field system.  As a consequence of the convergence in energy, we show that the solution of the Hartree equation (\ref{eq:hartree0}) really describes the collapse of the many body system.

\medskip

Let us now describe our results in more details. We are interested in the dynamics generated by the $N$-particle Hamiltonian (\ref{eq:HN1}) in the supercritical regime $\lambda > \lambda^\text{H}_{\text{crit}}$ (although at the end our results will also hold in the sub-critical regime). The first issue that we have to face is that, since the form defined by $H_N$ is not bounded below, the Hamiltonian $H_N$ does not necessarily have a unique realization as a self-adjoint operator on the Hilbert space $L_s^2 (\bR^{3N})$ (and thus the time-evolution is not necessarily well-defined). For this reason we choose a sequence $\alpha = (\alpha_N)_{N\geq 1}$ with $\alpha_N > 0$ for all $N \in \bN$ and $\alpha_N \to 0$ as $N \to \infty$, and we define the regularized $N$-particle Hamiltonian 
\begin{equation}\label{eq:regHN} H_N^{\alpha} = \sum_{j=1} \sqrt{1-\Delta_{x_j}} - \frac{\lambda}{N} \sum_{i<j} \frac{1}{|x_i -x_j| + \alpha_N} \,. \end{equation}
The regularized Hamiltonian $H_N^{\alpha}$ defines now a quadratic form on $L^2_s (\bR^{3N})$ which is clearly bounded below, for every $N \in \bN$, since 
\[ \langle \psi_{N} , H_N^{\alpha} \, \psi_{N} \rangle \geq - \frac{\lambda N}{2\alpha_N} \| \psi_N \|^2 \, . \]  By Friedrichs theorem, $H_N^{\alpha}$ has a unique extension as a self-adjoint operator on $L^2_s (\bR^{3N})$, with domain $H^{1/2} (\bR^{3N})$. Hence $H_N^{\alpha}$ generates the one-parameter group of unitary transformations $U_N^\alpha (t) = e^{-itH_N^{\alpha}}$, $t \in \bR$, and therefore the $N$-particles Schr\"odinger equation
\begin{equation}\label{eq:schr-reg} i\partial_t \psi_{N,t} = H^{\alpha}_N \psi_{N,t} \qquad \text{with initial condition } \psi_{N,t=0} = \psi_N \end{equation} is globally well-posed (it has the unique solution $\psi_{N,t} = e^{-iH_N^{\alpha}t} \psi_N$, for all $t \in \bR$). 

\medskip

{F}rom the physical point of view, the introduction of the cutoff $\alpha$ is justified by the observation that on very short length scales, the Newtonian potential is effectively regularized by the presence of other forces (such as electromagnetic or nuclear forces) or because of general relativity effects. The results that we will state and prove below concern the limit of large $N$ and small $\alpha_N$. How fast $\alpha_N$ tends to zero is irrelevant to establish the convergence of the Schr\"odinger evolution to the Hartree dynamics in the trace norm, analogously to (\ref{eq:conv-sub}) (although, of course, the rate of the convergence depends on $\alpha_N$). On the other hand, to show convergence in the energy norm, we will need to assume that $\alpha_N$ does not converge to zero too fast; more precisely, we will suppose that there exists $\beta >0$ such that $N^{\beta} \alpha_N \to \infty$. This condition, which allow for any power law decay, still leaves a lot of 
freedom in the choice of $\alpha_N$ (physically, conditions on the decay of $\alpha$ translate into restrictions of the range of systems for which the approximation of the many body evolution by the Hartree dynamics is applicable). 

\medskip

We study the time evolution generated by the regularized Hamiltonian (\ref{eq:regHN}) on factorized initial data $\psi_N = \ph^{\otimes N}$ for $\ph \in H^2 (\bR^3)$. We compare the marginal densities associated with the solution of the $N$ particle Schr\"odinger equation $\psi_{N,t} = e^{-iH_N^{\alpha} t} \psi_N$ with products of the solution to the Hartree equation (\ref{eq:hartree0}). Note that the cutoff disappears in the limiting Hartree equation, because of the assumption that $\alpha_N \to 0$ as $N \to \infty$ (part of the proof of the convergence will consists in estimating the distance between the solution $\ph_t$ of (\ref{eq:hartree0}) and the solution $\ph_t^{(\alpha)}$ of a regularized Hartree equation with interaction $-\lambda / |x|$ replaced by the regularized interaction $-\lambda/ (|x| + \alpha)$ in the limit of small $\alpha$). 

\medskip

The first main result of this paper is the following theorem. Under the assumption that the solution $\ph_t$ of (\ref{eq:hartree0}) has a bounded $H^{1/2}$-norm in the interval $[-T,T]$ (which means that there is no blowup, up to time $T$), we prove the convergence of the marginal densities associated with the solution $\psi_{N,t}$ of the $N$-particle Schr\"odinger equation (\ref{eq:schr-reg}) to the orthogonal projections onto products of $\ph_t$. The theorem also gives an explicit upper bound on the fluctuations around the Hartree dynamics.

\begin{theorem}\label{Thm:no-blowup}
Fix $\lambda \in \bR$, $\ph \in H^2 (\bR^3)$ with $\| \ph \| = 1$ and set $\psi_N = \ph^{\otimes N}$. 
Let $\psi_{N,t} = e^{-iH^{\alpha}_N t} \psi_N$ be the evolution of the the initial wave function $\psi_N$ with respect to the Hamiltonian (\ref{eq:regHN}), and let $\gamma_{N,t}^{(1)}$ be the one-particle reduced density associated with $\psi_{N,t}$. 

\medskip

Denote by $\ph_t$ the solution of the nonlinear Hartree equation (\ref{eq:hartree0}) 
with initial data $\ph_{t=0} = \ph$. Fix $T>0$ such that
\begin{equation}\label{eq:ka1}
\kappa := \sup_{|t|\leq T} \|\ph_t\|_{H^{1/2}} <  \infty \,.
\end{equation}

\medskip

Then there exists a constant $C=C(\kappa,T, \| \ph \|_{H^2})$ such that
\begin{equation}\label{eq:conv1-tr}
\trr\,\big|\gamma_{N,t}^{(1)}-|\ph_t\rangle\langle\ph_t|\,\big|\;\leq\;C \, \Big(\frac{1}{\sqrt{N}}+\alpha_N\Big) \,.
\end{equation}
for all $t\in \bR$ with $|t| \leq T$, and for all $N$ sufficiently large. In particular, if $\alpha_N \to 0$ as $N \to \infty$, it follows that $\gamma_{N,t}^{(1)} \to |\ph_t \rangle \langle \ph_t|$ in trace-norm, as $N \to \infty$.
\end{theorem}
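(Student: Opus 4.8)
\medskip\noindent\textbf{Proof strategy.}

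The plan is to prove \eqref{eq:conv1-tr} by comparing the many-body dynamics not directly with the Hartree flow \eqref{eq:hartree0}, but with the flow of the \emph{regularized} Hartree equation
\begin{equation}\label{eq:hartreeA}
i\partial_t \ph_t^{(\alpha)} = \sqrt{1-\Delta}\,\ph_t^{(\alpha)} - \lambda\Big(\frac{1}{|\cdot|+\alpha_N}*|\ph_t^{(\alpha)}|^2\Big)\ph_t^{(\alpha)}, \qquad \ph_{t=0}^{(\alpha)}=\ph,
\end{equation}
so that the left side of \eqref{eq:conv1-tr} is split by the triangle inequality as
\begin{equation}\label{eq:split}
\trr\big|\gamma_{N,t}^{(1)}-|\ph_t\rangle\langle\ph_t|\big| \;\leq\; \trr\big|\gamma_{N,t}^{(1)}-|\ph_t^{(\alpha)}\rangle\langle\ph_t^{(\alpha)}|\big| \;+\; 2\,\|\ph_t-\ph_t^{(\alpha)}\|\,,
\end{equation}
and the two contributions are handled separately; the first will produce the $N^{-1/2}$ term, the second the $\alpha_N$ term.

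For the second term I would first note that the hypothesis \eqref{eq:ka1} controls only $\|\ph_t\|_{H^{1/2}}$, but a persistence-of-regularity argument (a Gr\"onwall estimate for $\|\ph_t\|_{H^2}$ in which $\kappa$ enters the coefficient) upgrades it to $\sup_{|t|\le T}\|\ph_t\|_{H^2}\le C(\kappa,T,\|\ph\|_{H^2})$; the same argument applied to \eqref{eq:hartreeA}, using $0\le(|x|+\alpha_N)^{-1}\le|x|^{-1}\le(\pi/2)|\nabla|$, gives the analogous bound for $\ph_t^{(\alpha)}$ \emph{uniformly in} $\alpha_N$. Setting $w_t=\ph_t-\ph_t^{(\alpha)}$, subtracting \eqref{eq:hartree0} from \eqref{eq:hartreeA} and pairing with $w_t$ eliminates the self-adjoint kinetic term, and the difference of the nonlinearities decomposes into terms multilinear in $w_t$ — whose coefficients are bounded by $\sup_{|t|\le T}(\|\ph_t\|_{H^{1/2}}+\|\ph_t^{(\alpha)}\|_{H^{1/2}})$ via Kato's inequality — plus the genuine error $\big[(|x|^{-1}-(|x|+\alpha_N)^{-1})*|\ph_t|^2\big]\ph_t$, whose $L^2$ norm is $O(\alpha_N)$ since $|x|^{-1}-(|x|+\alpha_N)^{-1}=\alpha_N|x|^{-1}(|x|+\alpha_N)^{-1}$ and $|\ph_t|^2$ is pointwise bounded and decaying by the $H^2$ estimate. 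Gr\"onwall then yields $\sup_{|t|\le T}\|w_t\|\le C\alpha_N$.

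For the first term of \eqref{eq:split} I would use the coherent-state method in the bosonic Fock space $\cF=\bigoplus_{n\ge0}L^2_s(\bR^{3n})$: lift $H_N^\alpha$ to its second quantization $\cH_N^\alpha$, compare $e^{-i\cH_N^\alpha t}W(\sqrt N\,\ph)\Omega$ with the evolved coherent state $W(\sqrt N\,\ph_t^{(\alpha)})\Omega$, and conjugate out the Weyl operators and the Hartree phase to obtain the fluctuation dynamics $\cU_N^\alpha(t;0)$ with time-dependent generator $\cL_N^\alpha(t)$. By standard estimates the trace-norm distance in question is then $O(N^{-1/2})$ provided the expected number of excitations over the condensate stays bounded, i.e.\ provided $\langle\cU_N^\alpha(t;0)\Omega,(\cN+1)\,\cU_N^\alpha(t;0)\Omega\rangle\le Ce^{c|t|}$ uniformly in $\alpha_N$ and in $N$ large (equivalently, one may run Pickl's method and bound the fraction of particles outside $\ph_t^{(\alpha)}$). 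This in turn follows from a Gr\"onwall estimate for $\tfrac{d}{dt}\langle\cU_N^\alpha(t;0)\Omega,(\cN+1)\,\cU_N^\alpha(t;0)\Omega\rangle$, the dangerous pair-creation/annihilation pieces of $\cL_N^\alpha(t)$ being controlled by bounding $(|x_i-x_j|+\alpha_N)^{-1}\le(\pi/2)|\nabla_{x_i}|$ and invoking the uniform $H^{1/2}$ — and, for the terms in which the interaction hits the condensate, the uniform $H^2$ — control of $\ph_t^{(\alpha)}$ proved above. Crucially, the naive bound $\|(|x|+\alpha_N)^{-1}\|_\infty=\alpha_N^{-1}$ is never used: every interaction factor is absorbed into a kinetic energy, so all constants are independent of $\alpha_N$. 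Feeding these two estimates into \eqref{eq:split} gives \eqref{eq:conv1-tr}; the convergence of the higher marginals $\gamma^{(k)}_{N,t}$, $k>1$, then follows from the $k=1$ estimate by the usual argument.

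The main obstacle is the $\alpha_N$-uniformity of the fluctuation estimate in the \emph{semi-relativistic} setting. Because $\sqrt{1-\Delta}$ is nonlocal, the commutators of $\cL_N^\alpha(t)$ with $\cN$ and the a priori bounds on its summands cannot be obtained from a naive Leibniz rule; one has to use the integral representation of $\sqrt{1-\Delta}$ (or a fractional-Leibniz / Kato--Ponce inequality) to transfer the half-derivative between the interaction kernel and the one-body orbital, and then check that Kato's inequality still closes the estimate with an $O(1)$ constant. Combined with the need to propagate the $H^2$ norm of the Hartree solution — uniformly in $\alpha_N$ — from a hypothesis that only controls its $H^{1/2}$ norm, this is where essentially all the technical effort goes; the decomposition \eqref{eq:split} and the Gr\"onwall scheme themselves are routine.
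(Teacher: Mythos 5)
Your proposal follows essentially the same route as the paper's proof: triangle inequality through the regularized Hartree flow, propagation of regularity and closeness estimates for the nonlinear equation, and the Rodnianski--Schlein coherent-state method with a Gr\"onwall bound on the number-of-particles operator $\cN$ for the Fock-space part. Two points of the sketch would need repair, though. First, you identify the main obstacle as the nonlocality of $\sqrt{1-\Delta}$ affecting the commutator estimates in the $\cN$-growth bound and propose fractional-Leibniz arguments (transferring half-derivatives onto the interaction kernel) to close them; in fact, the relativistic kinetic term commutes exactly with $\cN$ and with all of its powers, so it drops out of $\frac{d}{dt}\langle\cN^k\rangle$ with no work at all. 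This is precisely the observation the paper uses to import the estimates of \cite{RS} essentially verbatim, the only new input being a uniform-in-$\alpha_N$ $H^1$ bound on $\ph_t^{(\alpha_N)}$. The generalized Leibniz rule (Lemma~\ref{lm:leib}) is indeed needed, but only in the PDE estimates of Propositions~\ref{Prop:propagation_of_regularity} and~\ref{closeness_in_H1/2}, not in the Fock-space analysis. Relatedly, your claim that every interaction factor is ``absorbed into a kinetic energy'' is not what happens in the proof of this theorem: in the $\cN$-growth bounds all interaction factors are controlled by Hardy's inequality against the condensate's $H^1$ norm; the kinetic-energy-absorption device appears in the paper only for the energy-norm convergence of Theorem~\ref{thm:conv-en} (see Lemma~\ref{Lemma:VtK}).

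Second, the plan to propagate $H^2$ regularity for $\ph_t^{(\alpha)}$ ``by the same argument applied to the regularized equation'' is circular as stated: the persistence-of-regularity Gr\"onwall requires an a priori bound on $\sup_{|t|\le T}\|\ph_t^{(\alpha)}\|_{H^{1/2}}$ as input, but \eqref{eq:ka1} only supplies this for $\ph_t$, and for $\lambda>\lambda^{\mathrm{H}}_{\mathrm{crit}}$ there is no conserved-energy bound either. The paper breaks this circle by proving the $L^2$- and $H^{1/2}$-closeness bounds \eqref{distance_in_L2}, \eqref{H1/2closeness} in a way that invokes only norms of $\ph_t$ (plus $\|\ph_t^{(\alpha)}\|_{L^2}=1$) — the term-by-term estimates in Proposition~\ref{closeness_in_H1/2} are arranged specifically to avoid $\|\ph_t^{(\alpha)}\|_{H^s}$ for $s>0$ — and only then bootstrapping to uniform $H^s$ bounds on $\ph_t^{(\alpha)}$ in Corollary~\ref{cor:reg}. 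Your Gr\"onwall coefficient $\sup_t\bigl(\|\ph_t\|_{H^{1/2}}+\|\ph_t^{(\alpha)}\|_{H^{1/2}}\bigr)$ does not close without this rearrangement. Finally note that the $O(\alpha_N)$ error term requires Hardy's inequality ($H^1$ control), not Kato's ($H^{1/2}$), since $\alpha_N|x|^{-1}(|x|+\alpha_N)^{-1}$ behaves like $\alpha_N/|x|^2$.
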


{\it Remarks.} The existence of $T>0$ such that (\ref{eq:ka1}) is satisfied is a consequence of the local well-posedness of the nonlinear Hartree equation (\ref{eq:hartree0}), see \cite{L}. Similar methods to the ones used to prove (\ref{eq:conv1-tr}) can be employed to show the convergence of 
higher order marginals $\gamma^{(k)}_{N,t}$ with the same rate $(N^{-1/2} + \alpha_N)$ for any fixed $k \in \bN$. If we are satisfied with a slower rate for higher marginals, a simple argument, outlined in Section 2 of \cite{KP}, shows that (\ref{eq:conv1-tr}) immediately implies that for any $k \in \bN$, 
\[ \trr\,\big|\gamma_{N,t}^{(k)}-|\ph_t\rangle\langle\ph_t|^{\otimes k} \,\big|\;\leq\;C \, \sqrt{k\Big(\frac{1}{\sqrt{N}}+\alpha_N\Big)} \,. \]

\medskip

The first ingredient of the proof of Theorem \ref{Thm:no-blowup} is the observation that the bound (\ref{eq:ka1}) on the $H^{1/2}$-norm of the solution of the Hartree equation (\ref{eq:hartree0}), together with the assumption $\ph \in H^2 (\bR^3)$ on the initial data, implies an upper bound on the $H^{1/2}$-norm of the solution of the regularized Hartree equation 
\begin{equation}\label{eq:reg-har1}
i\partial_t \varphi^{(\alpha)}_t=\sqrt{1-\Delta}\varphi^{(\alpha)}_t-\lambda\Big(\frac{1}{|\cdot|+\alpha}*|\varphi^{(\alpha)}_t|^2\Big)\varphi^{(\alpha)}_t \, ,
\end{equation}
uniform in the cutoff $\alpha>0$ (actually, we prove that $\| \ph_t - \ph_t^{(\alpha)}\|_{H^{1/2}}$ is small, of the order $\alpha^{1/2}$). By the propagation of regularity for the solution of the Hartree equation (both the original equation (\ref{eq:hartree0}) and the regularized equation (\ref{eq:reg-har1})), we also obtain a bound for the norm $\| \ph^{(\alpha)}_t \|_{H^2}$ uniform in $\alpha >0$ and in $t \in [-T,T]$. 

\medskip

The second ingredient in the proof of Theorem \ref{Thm:no-blowup} is the method developed in \cite{RS} to establish the convergence to the Hartree dynamics for a system of non-relativistic bosons. This method is based on the use of a Fock space representation of the many boson system, and on the study of the dynamics of coherent states. When analyzing the time-evolution of an initial coherent state, it is possible to isolate the Hartree component of the evolution. Moreover, as first observed in \cite{H} (and later in \cite{GV}), the evolution of the fluctuations around the Hartree dynamics can be expressed through a two-parameter group of unitary evolutions with an explicit time-dependent generator. The problem then reduces to deriving a bound for the growth of the number of particle operator (which measures the ``number'' of fluctuations, after second quantization) with respect to this evolution. The crucial observation is that the bound derived in \cite{RS} for non-relativistic particles can be easily extended to the relativistic setting, once a uniform bound on $\| \ph_t^{(\alpha)} \|_{H^1}$ is available. 

\medskip

In Section \ref{sec:NLS}, we show the necessary bounds on the solution $\ph^{(\alpha)}_t$ of the regularized equation (\ref{eq:reg-har1}). In Section \ref{sec:Fock} we introduce the Fock space representation, we define  the coherent states, and we discuss some of their main properties. Then, in Section \ref{sec:coh-trace}, we show that the evolution of initial coherent states can be approximated by the Hartree dynamics, and we use this fact to conclude the proof of Theorem \ref{Thm:no-blowup}.

\medskip

The next theorem is the second main result of this paper. It establishes the convergence of the one-particle marginal density associated with the solution of the regularized $N$-particle Schr\"odinger equation to the orthogonal projection onto the solution of the Hartree equation (\ref{eq:hartree0}), in the energy norm. The result holds for all $t \in [-T,T]$ under the assumption that the $H^{1/2}$ norm of the solution $\ph_t$ of (\ref{eq:hartree0}) remains bounded in $[-T,T]$ (in other words, under the assumption that there is no blowup, up to time $T$).

\begin{theorem}\label{thm:conv-en}
Fix $\ph \in H^2 (\bR^3)$ with $\| \ph \| = 1$ and set $\psi_N = \ph^{\otimes N}$.  Consider an arbitrary sequence $\alpha_N > 0$ with $\alpha_N \to 0$ and such that $N^{\beta} \alpha_N \to \infty$ as $N \to \infty$, for some $\beta >0$. Let $\psi_{N,t} = e^{-iH^{\alpha}_N t} \psi_N$ be the evolution of the initial wave function $\psi_N$ generated by the Hamiltonian (\ref{eq:regHN}) and let $\gamma_{N,t}^{(1)}$ be the one-particle reduced density associated with $\psi_{N,t}$. 

\medskip

Denote by $\ph_t$ the solution of the nonlinear Hartree equation (\ref{eq:hartree0}) with initial data $\ph_{t=0} = \ph$. Fix $T>0$ such that
\begin{equation}\label{eq:ka}
\kappa := \sup_{|t|\leq T} \|\ph_t\|_{H^{1/2}} <  \infty \,.
\end{equation}

\medskip

Then there exists a constant $C=C(\kappa,T, \| \ph \|_{H^2},\beta)$ such that
\begin{equation}\label{eq:conv-ene}
\trr\,\left|(1-\Delta)^{1/4} \left(\gamma_{N,t}^{(1)}-|\ph_t\rangle\langle\ph_t|\right) (1-\Delta)^{1/4} \,\right|\;\leq\;C \, \Big(\frac{1}{N^{1/4}}+ \alpha^{1/2}_N \Big)
\end{equation}
for all $t\in \bR$ with $|t| \leq T$, and for all $N$ sufficiently large. In particular, it follows that $\gamma_{N,t}^{(1)} \to |\ph_t \rangle \langle \ph_t|$ in energy-norm, as $N \to \infty$.
\end{theorem}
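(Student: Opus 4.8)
\emph{Proof proposal.} The plan is to upgrade the trace-norm convergence of Theorem~\ref{Thm:no-blowup} to the energy norm by combining it with conservation of energy. The starting point is the reduction: if, in addition to \eqref{eq:conv1-tr}, one knows the convergence of the one-particle kinetic energies
\begin{equation}\label{eq:kin-plan}
\Big|\,\trr\big[(1-\Delta)^{1/2}\gamma^{(1)}_{N,t}\big]-\|\ph_t\|_{H^{1/2}}^{2}\,\Big|\;\le\;C\big(N^{-1/4}+\alpha_N^{1/2}\big),\qquad|t|\le T,
\end{equation}
then \eqref{eq:conv-ene} follows from a quantitative version of the Gr\"umm--Robinson convergence lemma for density matrices: writing $A=(1-\Delta)^{1/4}$ and using that the limit $|\ph_t\rangle\langle\ph_t|$ is a rank-one projection with $\trr[A^2|\ph_t\rangle\langle\ph_t|]=\|\ph_t\|_{H^{1/2}}^2\le\kappa^2$, one gets a bound of the form
\[
\trr\big|A\big(\gamma^{(1)}_{N,t}-|\ph_t\rangle\langle\ph_t|\big)A\big|\;\le\;C\Big|\trr[A^2\gamma^{(1)}_{N,t}]-\|\ph_t\|_{H^{1/2}}^2\Big|+C\kappa\,\big(\trr\,|\gamma^{(1)}_{N,t}-|\ph_t\rangle\langle\ph_t|\,|\big)^{1/2},
\]
and the square root of $(N^{-1/2}+\alpha_N)$ reproduces exactly the rate $N^{-1/4}+\alpha_N^{1/2}$ of \eqref{eq:conv-ene}. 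So everything is reduced to \eqref{eq:kin-plan}.

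To prove \eqref{eq:kin-plan} I would use conservation of energy on both levels. Since $\langle\psi_{N,t},H_N^{\alpha}\psi_{N,t}\rangle$ is conserved and $\psi_N=\ph^{\otimes N}$, it equals $N\cE^{(\alpha)}(\ph)+O(1)$, where $\cE^{(\alpha)}$ denotes the Hartree functional with $|x|^{-1}$ replaced by $(|x|+\alpha_N)^{-1}$; the regularized equation \eqref{eq:reg-har1} conserves $\cE^{(\alpha)}$, so $\cE^{(\alpha)}(\ph)=\cE^{(\alpha)}(\ph^{(\alpha)}_t)$; and the estimates of Section~\ref{sec:NLS} --- the bound $\|\ph_t-\ph^{(\alpha)}_t\|_{H^{1/2}}\le C\alpha_N^{1/2}$, the uniform $H^2$-bound on $\ph^{(\alpha)}_t$, and Kato's inequality to handle the Coulomb quadratic form --- give $|\cE^{(\alpha)}(\ph^{(\alpha)}_t)-\cE_{\text{Hartree}}(\ph_t)|\le C\alpha_N^{1/2}$. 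Hence $\tfrac1N\langle\psi_{N,t},H_N^{\alpha}\psi_{N,t}\rangle=\cE_{\text{Hartree}}(\ph_t)+O(N^{-1}+\alpha_N^{1/2})$. Writing the left-hand side through the reduced densities, $\tfrac1N\langle\psi_{N,t},H_N^{\alpha}\psi_{N,t}\rangle=\trr[(1-\Delta)^{1/2}\gamma^{(1)}_{N,t}]-\tfrac{\lambda(N-1)}{2N}\trr[(|x_1-x_2|+\alpha_N)^{-1}\gamma^{(2)}_{N,t}]$, and recalling $\cE_{\text{Hartree}}(\ph_t)=\|\ph_t\|_{H^{1/2}}^2-\tfrac\lambda2\int\!\!\int\frac{|\ph_t(x)|^2|\ph_t(y)|^2}{|x-y|}\,\rd x\,\rd y$, one sees that \eqref{eq:kin-plan} is \emph{equivalent} to the convergence of the potential energy,
\begin{equation}\label{eq:pot-plan}
\Big|\trr\big[(|x_1-x_2|+\alpha_N)^{-1}\gamma^{(2)}_{N,t}\big]-\int\!\!\int\frac{|\ph_t(x)|^2|\ph_t(y)|^2}{|x-y|}\,\rd x\,\rd y\Big|\;\le\;C\big(N^{-1/4}+\alpha_N^{1/2}\big).
\end{equation}

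Establishing \eqref{eq:pot-plan} is the core. I would first replace $\ph_t$ by $\ph^{(\alpha)}_t$ and $|x|^{-1}$ by $(|x|+\alpha_N)^{-1}$ in the limiting term, an $O(\alpha_N^{1/2})$ error by the Section~\ref{sec:NLS} bounds, Kato's inequality, and the Hardy estimate $\int\!\!\int|\ph^{(\alpha)}_t|^2|\ph^{(\alpha)}_t|^2|x-y|^{-2}\lesssim\|\ph^{(\alpha)}_t\|_{H^1}^2\|\ph^{(\alpha)}_t\|_2^2$; this leaves $\trr[(|x_1-x_2|+\alpha_N)^{-1}(\gamma^{(2)}_{N,t}-|\ph^{(\alpha)}_t\rangle\langle\ph^{(\alpha)}_t|^{\otimes2})]$, for which the relevant trace-norm rate --- convergence of $\gamma^{(2)}_{N,t}$ towards the \emph{regularized} Hartree reference --- is $O(N^{-1/2})$ uniformly in $\alpha_N$, being controlled by the number-of-fluctuations estimate behind Theorem~\ref{Thm:no-blowup}, which only uses $\sup_{[-T,T]}\|\ph^{(\alpha)}_t\|_{H^1}<\infty$. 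Next I would split the interaction as $(|x|+\alpha_N)^{-1}=(|x|+\alpha_N)^{-1}\mathbbm{1}(|x|\ge\delta)+(|x|+\alpha_N)^{-1}\mathbbm{1}(|x|<\delta)$ with $\delta\in(\alpha_N,1)$: the long-range part, bounded by $\delta^{-1}$ in operator norm, contributes $\le C\delta^{-1}N^{-1/2}$; the short-range part is controlled via the Hardy inequality $(|x_1-x_2|+\alpha_N)^{-1}\mathbbm{1}(|x_1-x_2|<\delta)\le|x_1-x_2|^{-1}\mathbbm{1}(|x_1-x_2|<\delta)\le\delta\,|x_1-x_2|^{-2}\le4\delta(-\Delta_{x_1})$, which yields $4\delta\,\trr[(-\Delta)\gamma^{(1)}_{N,t}]$ on the many-body side and $\le C\delta^2$ on the Hartree side (using $\|\ph^{(\alpha)}_t\|_\infty\le C$ from the uniform $H^2$-bound). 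Optimizing $\delta$ then gives \eqref{eq:pot-plan}, \emph{provided} one has a bound on the $\dot H^1$ kinetic energy per particle $\trr[(-\Delta)\gamma^{(1)}_{N,t}]$.

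The main obstacle is therefore this last bound, i.e.\ propagation of an $H^1$-bound at the many-body level: $\trr[(-\Delta)\gamma^{(1)}_{N,t}]=\tfrac1N\langle\psi_{N,t},\sum_j(-\Delta_{x_j})\psi_{N,t}\rangle$ must be controlled, uniformly in $N$ or at worst by a power of $\alpha_N^{-1}$ that $\beta$ can absorb. For this I would return to the coherent-state representation of Sections~\ref{sec:Fock}--\ref{sec:coh-trace} and split $\tfrac1N\langle\psi_{N,t},\sum_j(-\Delta_{x_j})\psi_{N,t}\rangle$ into the Hartree part $\|\ph^{(\alpha)}_t\|_{\dot H^1}^2$, uniformly bounded by Section~\ref{sec:NLS}, plus fluctuation contributions governed by the growth of the kinetic energy of the fluctuation field along the fluctuation dynamics $\tUU$ --- that is, by Gronwall estimates for $\langle\tUU\Omega,\,d\Gamma(-\Delta)\,\tUU\Omega\rangle$ and for the higher moments $\langle\tUU\Omega,\cN^m\tUU\Omega\rangle$, in the same spirit as the number-operator bound used to prove Theorem~\ref{Thm:no-blowup}. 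The generator of $\tUU$ contains terms built from $(|x|+\alpha_N)^{-1}$, whose $L^\infty$-, gradient-, and Laplacian-norms grow like $\alpha_N^{-1}$, $\alpha_N^{-2}$, and $\alpha_N^{-3/2}$; these are precisely the terms that prevent an estimate uniform in $\alpha_N$, and the role of $N^{\beta}\alpha_N\to\infty$ is to keep them harmless --- whence the $\beta$-dependence of the optimal $\delta$ and of the constant in \eqref{eq:conv-ene}. Checking that these dangerous terms can indeed be absorbed, by trading factors of $(|x|+\alpha_N)^{-1}$ against fluctuation kinetic energy via Kato's inequality and using the uniform $H^1/H^2$-bounds on $\ph^{(\alpha)}_t$, is the technical heart of the argument.
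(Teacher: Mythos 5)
The proposal is a genuinely different strategy from the paper's, and it has two concrete gaps, either of which is fatal as things stand.

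The paper proves the energy-norm bound by establishing a Hilbert--Schmidt estimate
\[
\big\| (1-\Delta)^{1/4}\big(\gamma^{(1)}_{N,t}-|\ph_t\rangle\langle\ph_t|\big)(1-\Delta)^{1/4}\big\|_{\mathrm{HS}}\lesssim N^{-1/4}+\alpha_N^{1/2},
\]
which is obtained by bounding the growth of $\cK=\mathrm{d}\Gamma\big((1-\Delta)^{1/2}\big)$ along the fluctuation dynamics (Proposition \ref{prop:Kbd}), together with the corresponding trace estimate. You instead try to assemble the result from (a) the $L^2$ trace-norm bound of Theorem \ref{Thm:no-blowup} and (b) convergence of the scalar $\trr[(1-\Delta)^{1/2}\gamma^{(1)}_{N,t}]$ obtained via energy conservation.

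\textbf{Gap 1: the Gr\"umm--Robinson-type lemma you invoke is false at the claimed rate.} Write $A=(1-\Delta)^{1/4}$, $P=|\ph\rangle\langle\ph|$. Take $\gamma=|\psi\rangle\langle\psi|$ with $\psi=\sqrt{1-\eps^2}\,\ph+\eps g$, where $g\perp\ph$ in $L^2$, $\|g\|=1$, $\langle A\ph,Ag\rangle=0$, and $\|Ag\|^2=\eps^{-3/2}$. Then $\trr|\gamma-P|\sim\eps$ and $|\trr[A^2\gamma]-\trr[A^2P]|=\eps^2(\|Ag\|^2-\|A\ph\|^2)\sim\eps^{1/2}$, while a direct $2\times2$ computation in the basis $\{A\ph,Ag/\|Ag\|\}$ gives
\[
\trr\big|A(\gamma-P)A\big|=\sqrt{\eps^4\big(\|Ag\|^2-1\big)^2+4\eps^2\|Ag\|^2(1-\eps^2)}\ \sim\ \eps^{1/4}.
\]
So the left-hand side of your inequality behaves like $\eps^{1/4}$ while the right-hand side behaves like $\eps^{1/2}$: the inequality fails, and moreover $\trr[A^2\gamma]\approx1+\eps^{1/2}$ is bounded, so this example is entirely compatible with everything you have assumed. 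Trace-norm control in $L^2$ plus control of the single scalar $\trr[A^2\gamma^{(1)}_{N,t}]$ does not force trace-norm control of $A(\gamma^{(1)}_{N,t}-P)A$ at the stated rate. The paper's argument is immune to this because it produces the Hilbert--Schmidt bound in the energy norm directly, and then the rank-one-projection trick (trace norm $\leq 2\cdot$HS norm for density matrices converging to a rank-one projection) goes through.

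\textbf{Gap 2: your short-range estimate needs an $H^1$-level bound $\trr[(-\Delta)\gamma^{(1)}_{N,t}]\lesssim1$, which is a strictly stronger input than anything the paper establishes.} The paper's Proposition \ref{prop:Kbd} controls $\cK=\mathrm{d}\Gamma((1-\Delta)^{1/2})$, which is the quantity that actually appears in the energy norm; it does \emph{not} give a uniform bound on $\mathrm{d}\Gamma(-\Delta)$. The Hamiltonian's dispersion is $\sqrt{1-\Delta}$, so energy conservation only gives $H^{1/2}$-level information on the many-body side; there is no conserved quantity giving you an $H^1$-bound, and the generator of the fluctuation dynamics has terms whose Sobolev norms blow up as inverse powers of $\alpha_N$. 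You acknowledge that propagating an $H^1$ bound is ``the technical heart,'' but what you sketch (Gronwall for $\langle\tUU\Omega,\mathrm{d}\Gamma(-\Delta)\tUU\Omega\rangle$) is a substantially harder estimate than Proposition \ref{prop:Kbd}, and you have not checked that the $\alpha_N$-dependent commutator terms are absorbable even with the assumption $N^\beta\alpha_N\to\infty$. Since your entire short-range/long-range split is engineered around this $H^1$ bound, the argument does not close.

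The right fix, and what the paper actually does, is to work at the $H^{1/2}$ level throughout: prove a uniform bound on $\cK$ along the fluctuation dynamics (this is the real technical work, done by comparing $\cU_N$ to a cutoffed dynamics and bounding $\cK^2$ with respect to it), deduce a Hilbert--Schmidt estimate in the weighted space from the representation \eqref{eq:repr-coh2}, and only then pass to trace norm by normalizing and using that the limit is a rank-one projection. Your conservation-of-energy idea does give, essentially for free, the scalar trace convergence $\trr[(1-\Delta)^{1/2}\gamma^{(1)}_{N,t}]\to\|\ph_t\|_{H^{1/2}}^2$ once the potential energy is controlled, but the potential-energy control itself again wants either an $H^1$-bound (your route) or the two-body HS-type bound in the energy norm (which the paper's machinery provides).
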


{\it Remark.} We believe that the same arguments used to show (\ref{eq:conv-ene}) can be extended to prove the convergence (in energy norm) of the higher marginal $\gamma^{(k)}_{N,t}$. To keep the paper readable, we do not follow this direction here. 
Note that a simple argument, similar to the one presented in Section 2 of \cite{KP} (and mentioned in the remark after Theorem \ref{Thm:no-blowup}), shows that (\ref{eq:conv-ene}) implies 
\[ \trr\,\left|(1-\Delta_{x_1})^{1/4} \left(\gamma_{N,t}^{(k)}-|\ph_t\rangle\langle\ph_t|^{\otimes k} \right) (1-\Delta_{x_1})^{1/4} \,\right|\;\leq\;C \,\sqrt{k \, \Big(\frac{1}{N^{1/4}}+ \alpha^{1/2}_N \Big)} \, 
\] for all $k\in \bN$. 

\bigskip
 
The proof of Theorem \ref{thm:conv-en} is based again on a Fock space representation of the many body system, and on the use of coherent states as initial data. As in the proof of Theorem \ref{Thm:no-blowup}, the Hartree component of the evolution of the initial coherent states can be isolated, and the dynamics of the fluctuations can be written through a two-parameter group of unitary transformations with an explicit generator. To obtain convergence in the energy norm, however, instead of controlling the growth of the number of particle operator, we need to control the growth of the kinetic energy operator with respect to the fluctuation dynamics. Technically, this step (contained in Proposition \ref{prop:Kbd}) is the most challenging part of our paper. In a sense, the fact that we can control the growth of the kinetic energy of the fluctuations implies that, although on the $N$ particle level we are considering a supercritical regime, after subtracting the (supercritical) Hartree dynamics, the system on the level of the fluctuations is subcritical. 
In Section \ref{sec:conv-en} we prove the convergence to the Hartree dynamics for initial coherent states and we complete the proof of Theorem \ref{thm:conv-en} assuming Proposition \ref{prop:Kbd} to hold true. In Section \ref{sec:Kbd}, we prove Proposition \ref{prop:Kbd}.

\medskip

Theorems \ref{Thm:no-blowup} and \ref{thm:conv-en} show that, as long as a bound on the $H^{1/2}$-norm of the solution $\ph_t$ of the Hartree equation (\ref{eq:hartree0}) is available, the evolution of the marginal densities can still be approximated by $\ph_t$. Next, we ask what happens if the solution $\ph_t$ of (\ref{eq:hartree0}) exhibits blowup. Under the assumption that $\ph_t$ blows up as $t \to T$, for some $0< T<\infty$, we show that also the solution of the regularized $N$-particle Schr\"odinger equation (\ref{eq:schr-reg}) collapses, if $t \to T$, and, simultaneously, $N \to \infty$. The $N$-particle wave function $\psi_{N,t}$ collapses in the sense that the kinetic energy per particle, which remains finite, uniformly in $N$, up to time $T$, diverges to infinity as $t \to T$ if simultaneously, $N \to \infty$.  In order to make sure that the solution of the $N$ particle Schr\"odinger equation remains close to the solution of the Hartree equation as $t$ approaches the nonlinear blowup time, we have to assume that $N$ diverges to infinity sufficiently fast. Physically, this condition imposes restrictions to the range of many body systems for which the Hartree approximation is valid close to the blowup time. {F}rom a different point of view (if we think of the number of particles $N$ as fixed), it tells us how close to the nonlinear blowup time we can expect the Hartree dynamics to be a good approximation for the real many body quantum evolution. 

\begin{corollary}\label{Thm:with-blowup}
Fix $\ph \in H^2 (\bR^3)$ with $\| \ph \| = 1$ and set $\psi_N = \ph^{\otimes N}$. Consider an arbitrary sequence $\alpha_N >0$ with $\alpha_N \to 0$ and $N^{\beta} \alpha_N \to \infty$ as $N \to \infty$, for some $\beta >0$.  Let $\psi_{N,t} = e^{-iH^{\alpha}_N t} \psi_N$ be the evolution of the initial wave function $\psi_N$ generated by the Hamiltonian (\ref{eq:regHN}) and let $\gamma_{N,t}^{(1)}$ be the one-particle reduced density associated with $\psi_{N,t}$. 

\medskip

Denote by $\ph_t$ the solution of the nonlinear Hartree equation (\ref{eq:hartree0})
with initial data $\ph_{t=0} = \ph$. Suppose that $T_c >0$ is the first time of blow-up for $\ph_t$. In other words, assume that
\[ \kappa_t :=  \sup_{0< s < t} \|\ph_s\|_{H^{1/2}} < \infty \] for all $t<T_c$, and 
\[ \| \ph_t \|_{H^{1/2}} \to \infty \quad \text{as } t \to T_c^- \, . \]

\medskip

Then, for any fixed $t \in [0,T_c)$ there exists a constant $C_t >0$ such that
\[ \| (1-\Delta_{x_1})^{1/4} \, \psi_{N,t} \|^2 \trr \, (1-\Delta)^{1/2}\gamma_{N,t}^{(1)} < C_t \]
uniformly in $N \in \bN$. Moreover, for $t \in [0,T_c)$, there exists $N (t) \in \bN$ with $N(t) \to \infty$ as $t \to T_c^-$, and such that 
\begin{equation}
\| (1-\Delta_{x_1})^{1/4} \, \psi_{N(t),t} \|^2 = \trr \, (1-\Delta)^{1/2}\gamma_{N(t),t}^{(1)}\to \infty\qquad\mathrm{as}\quad t\to T^-_c\, .
\end{equation}
In other words, the kinetic energy per particle is uniformly bounded in $N$, if $0 \leq t < T_c$ but it diverges in the limit $t \to T_c^-$, if at the same time, the number of particles tends to infinity sufficiently fast. 
\end{corollary}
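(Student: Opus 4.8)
The plan is to combine the energy-norm convergence of Theorem \ref{thm:conv-en} with the hypothesis that the Hartree solution $\ph_t$ blows up in $H^{1/2}$ at time $T_c$. First I would observe that for $k=1$ the marginal density satisfies the elementary identity
\[
\trr \, (1-\Delta)^{1/2} \gamma_{N,t}^{(1)} \; = \; \big\langle \psi_{N,t} , \, (1-\Delta_{x_1})^{1/2} \psi_{N,t} \big\rangle \; = \; \big\| (1-\Delta_{x_1})^{1/4} \psi_{N,t} \big\|^2 ,
\]
which is exactly the kinetic energy per particle (using the symmetry of $\psi_{N,t}$, it equals $N^{-1} \langle \psi_{N,t}, \sum_j (1-\Delta_{x_j})^{1/2} \psi_{N,t}\rangle$). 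This gives the first equality in the displayed conclusions for free; the content is in the two bounds on this quantity.

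For the uniform-in-$N$ bound at fixed $t \in [0,T_c)$: for such $t$ we have $\kappa_t = \sup_{0\le s\le t}\|\ph_s\|_{H^{1/2}} < \infty$, so Theorem \ref{thm:conv-en} applies on $[0,t]$ with $\kappa$ replaced by $\kappa_t$, giving
\[
\trr \, \big| (1-\Delta)^{1/4} \big( \gamma_{N,t}^{(1)} - |\ph_t\rangle\langle\ph_t| \big) (1-\Delta)^{1/4} \big| \; \leq \; C(\kappa_t, t, \|\ph\|_{H^2}, \beta)\Big( N^{-1/4} + \alpha_N^{1/2}\Big).
\]
Taking the trace of the operator inside the absolute value and using that the trace is bounded by the trace norm,
\[
\Big| \trr\,(1-\Delta)^{1/2}\gamma_{N,t}^{(1)} \; - \; \big\| |\nabla|^{1/2}_{}\text{-weighted } \ph_t \big\|^2 \Big| \; \leq \; C(\kappa_t,\dots)\big(N^{-1/4}+\alpha_N^{1/2}\big),
\]
where the subtracted term is $\langle \ph_t, (1-\Delta)^{1/2}\ph_t\rangle = \|\ph_t\|_{H^{1/2}}^2 \le \kappa_t^2$. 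Since $N^{-1/4}+\alpha_N^{1/2} \le 1$ for $N$ large (as $\alpha_N\to 0$), this yields $\trr\,(1-\Delta)^{1/2}\gamma_{N,t}^{(1)} \le \kappa_t^2 + C(\kappa_t,\dots) =: C_t$ uniformly in $N$, and the product form in the statement follows from the identity in the first paragraph.

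For the divergence: fix a sequence $t_n \uparrow T_c$. For each $n$, $\|\ph_{t_n}\|_{H^{1/2}}^2 \to \infty$ by hypothesis, while by the previous paragraph
\[
\trr\,(1-\Delta)^{1/2}\gamma_{N,t_n}^{(1)} \; \geq \; \|\ph_{t_n}\|_{H^{1/2}}^2 \; - \; C(\kappa_{t_n}, t_n, \|\ph\|_{H^2},\beta)\big(N^{-1/4}+\alpha_N^{1/2}\big).
\]
The main subtlety — and the one genuine point to be careful about — is that the constant $C(\kappa_{t_n},\dots)$ itself blows up as $t_n \to T_c$ (since $\kappa_{t_n}\to\infty$), so one cannot simply send $N\to\infty$ after $t\to T_c$; the two limits must be coupled. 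I would handle this by choosing, for each $t < T_c$, an integer $N(t)$ large enough that
\[
C\big(\kappa_t, t, \|\ph\|_{H^2},\beta\big)\Big( N(t)^{-1/4} + \alpha_{N(t)}^{1/2}\Big) \; \leq \; \tfrac12\,\|\ph_t\|_{H^{1/2}}^2 ,
\]
which is possible because the left-hand side $\to 0$ as $N\to\infty$ for fixed $t$ while the right-hand side is a fixed positive number; one may additionally require $N(t)$ to be monotone in $t$ and $\ge n$ whenever $t \ge t_n$, forcing $N(t)\to\infty$ as $t\to T_c^-$. With this choice,
\[
\trr\,(1-\Delta)^{1/2}\gamma_{N(t),t}^{(1)} \; \geq \; \tfrac12\,\|\ph_t\|_{H^{1/2}}^2 \; \xrightarrow[\;t\to T_c^-\;]{} \; \infty ,
\]
and invoking the identity of the first paragraph once more gives the claimed equality together with the divergence. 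The rest is routine bookkeeping about extracting a monotone $N(t)$; no new estimates beyond Theorem \ref{thm:conv-en} are needed.
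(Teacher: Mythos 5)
Your argument is correct and follows essentially the same route as the paper: apply Theorem \ref{thm:conv-en} on $[0,t]$ with $\kappa = \kappa_t$, pass from the trace-norm bound to the scalar bound $|\trr\,(1-\Delta)^{1/2}\gamma_{N,t}^{(1)} - \|\ph_t\|_{H^{1/2}}^2| \le C_t(N^{-1/4}+\alpha_N^{1/2})$, and then couple $N \to \infty$ with $t \to T_c^-$ by choosing $N(t)$ large enough to keep the error term under control while $\|\ph_t\|_{H^{1/2}}^2 \to \infty$. The only cosmetic difference is your choice of threshold ($C_t\beta_{N(t)} \le \tfrac12\|\ph_t\|_{H^{1/2}}^2$ versus the paper's $C_t\beta_{N(t)} \le T_c - t$); both force the error to be negligible relative to the diverging main term, and your explicit remark that $C_t$ itself blows up — hence the necessity of coupling the limits — is exactly the subtlety the paper's construction of $N(t)$ is designed to handle.
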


{\it Remark 1.} The existence of blow-up for solutions of the nonlinear Hartree equation (\ref{eq:hartree0}) has been proven in \cite{FL} under the assumption that the initial data $\ph$ is spherically symmetric and that it has negative energy $\cE_{\text{Hartree}} (\ph) < 0$; see (\ref{eq:enhar}) (this is possible if $\lambda > \lambda_{\text{crit}}^\text{H}$). 

\medskip

{\it Remark 2.} The fact that $\psi_{N,t}$ collapses at some point in the interval $[0,T_c]$ follows already from the blow-up of $\ph_t$ at time $T_c$ and from the Theorem~\ref{Thm:no-blowup}. This fact, which was pointed out to us by R. Seiringer, follows from the general observation that the kinetic energy of an $L^2$-limit is always smaller than the limit of the kinetic energy. This argument, however, does not prove that the collapse takes place at time $T_c$ nor that the blow-up of the Hartree equation accurately describes it. 

\medskip

\begin{proof}[Proof of Corollary \ref{Thm:with-blowup}]
Set $\beta_N = N^{-1/4} + \alpha_N$; then $\beta_N \to 0$ as $N \to \infty$. 
For every $0< t < T_c$ there exists, by Theorem 1.2, a constant $C_t$, depending on $\beta, \kappa_t,t,\| \ph \|_{H^2}$, such that
\[ \trr\,\left|(1-\Delta)^{1/4} \left(\gamma_{N,s}^{(1)}-|\ph_s\rangle\langle\ph_s|\right) (1-\Delta)^{1/4} \,\right|\;\leq\;C_t \beta_N \]
for all $0<s<t$. In particular this implies that
\[ \left| \tr \, (1-\Delta)^{1/2} \gamma_{N,t}^{(1)} - \| \ph_t \|^2_{H^{1/2}} \right| \leq C_t \beta_N \, .\]
For $0<t < T_c$, choose now $N(t)$ sufficiently large , so that $\gamma_{N(t)} \leq (T_c - t) /C_t$ (this is certainly possible because $\beta_N \to 0$ as $N \to \infty$).  Then 
\[ \left| \tr \, (1-\Delta)^{1/2} \gamma_{N(t),t}^{(1)} - \| \ph_t \|^2_{H^{1/2}} \right| \to 0 \] as $t \to T_c$. Since $\| \ph_t \|_{H^{1/2}} \to \infty$ as $t \to T_c$, this implies that
\[ \tr\, (1-\Delta)^{1/2} \gamma_{N(t),t}^{(1)} \to \infty \, . \]
\end{proof}

{\it Acknowledgements.} B. Schlein would like to thank I. Rodnianski for precious discussions, from which the present paper originated. He is also happy to 
thank C. Hainzl for reading a previous version of this work and for clarifying to him the relation between the critical constants $\lambda_{\text{crit}} (N)$ and $\lambda^{\text{H}}_{\text{crit}}$. Moreover he would like to thank M. Lewin and R. Seiringer for useful discussions and remarks.  A. Michelangeli was partially supported by a INdAM-GNFN grant "Progetto Giovani 2009". B. Schlein acknowledges support from the ERC Starting Grant MAQD - 240518.

\section{Bounds on solutions of nonlinear Hartree equations}
\label{sec:NLS}

In this section, we study properties of the solution of the nonlinear Hartree equation (\ref{eq:hartree0}). In particular, we need to compare the solution of (\ref{eq:hartree0}) with the solution of regularized Hartree equations (like (\ref{NLS_Hartree_rel}), with $\alpha>0$). To this end, we first need to establish the property of propagation of initial regularity, under the assumption of a bound on the $H^{1/2}$-norm.

\begin{proposition}[Propagation of regularity] \label{Prop:propagation_of_regularity}
Fix $s > 1/2$ and $\alpha \geq 0$. Let $\ph \in H^s(\mathbb{R}^3)$ with $\|\ph\|=1$. Let $\ph_t$ denote the solution to the nonlinear Hartree equation 
\begin{equation}\label{NLS_Hartree_rel}
i\partial_t\varphi_t=\sqrt{1-\Delta}\varphi_t-\lambda\Big(\frac{1}{|\cdot|+\alpha}*|\varphi_t|^2\Big)\varphi_t
\end{equation}
with the initial condition $\ph_{t=0}=\ph$. Fix $T >0$ such that 
\begin{equation}\label{assumption_boundednessH12}
\kappa := \sup_{|t|\leq T}\|\varphi_t\|_{H^{1/2}} \;<\; \infty\,.
\end{equation}
Then there exists a constant $\nu = \nu (\kappa,T,s, \| \ph \|_{H^s}) < \infty$ (but independent of $\alpha$) such that
\begin{equation}\label{phi_Hs}
\sup_{|t|\leq T}\|\varphi_t\|_{H^s} \leq \nu \, .  
\end{equation}
\end{proposition}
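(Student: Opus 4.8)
The plan is to run a bootstrap (continuous induction) argument on the $H^s$ norm of $\varphi_t$, using the $H^{1/2}$ bound (\ref{assumption_boundednessH12}) as the input that keeps the nonlinearity under control. Write the equation in Duhamel form,
\[
\varphi_t = e^{-it\sqrt{1-\Delta}}\varphi - i\lambda \int_0^t e^{-i(t-\tau)\sqrt{1-\Delta}}\Big[\big(\tfrac{1}{|\cdot|+\alpha}*|\varphi_\tau|^2\big)\varphi_\tau\Big]\,\rd\tau,
\]
and note that the free propagator $e^{-it\sqrt{1-\Delta}}$ is an isometry on every $H^s$, so estimating $\|\varphi_t\|_{H^s}$ reduces to bounding the $H^s$ norm of the nonlinear term $F(\varphi_\tau) := \big((|\cdot|+\alpha)^{-1}*|\varphi_\tau|^2\big)\varphi_\tau$ and integrating. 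Since $(|\cdot|+\alpha)^{-1} \leq |\cdot|^{-1}$ pointwise, all estimates on the potential can be taken uniformly in $\alpha \geq 0$; this is what delivers the $\alpha$-independence of the final constant $\nu$.

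First I would record the key product/commutator estimate: for $s > 1/2$ one has a bound of the form
\[
\|F(\varphi)\|_{H^s} \;\lesssim\; \big(1 + \|\varphi\|_{H^{1/2}}\big)^{2}\,\|\varphi\|_{H^s}\,,
\]
or more precisely a bound where the top-order $H^s$ derivative hits only one factor of $\varphi$ while the remaining two factors are controlled by $H^{1/2}$ (the energy norm). Concretely, using the fractional Leibniz rule $\||\nabla|^s(fg)\|_2 \lesssim \||\nabla|^s f\|_2\|g\|_\infty + \|f\|_\infty \||\nabla|^s g\|_2$, Sobolev embedding $H^{1/2+}(\bR^3)\hookrightarrow$ suitable $L^p$, the Hardy--Littlewood--Sobolev / Young inequality to handle the convolution $\|(|\cdot|^{-1}*|\varphi|^2)\|_\infty \lesssim \|\varphi\|_{H^{1/2}}^2$ (using $|x|^{-1}\in L^{3,\infty}$ and $|\varphi|^2 \in L^{3/2+}$), and Kato's inequality $|x|^{-1}\leq(\pi/2)|\nabla|$ when a derivative needs to be moved onto the potential. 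The essential structural point is that each term produced by distributing $|\nabla|^s$ across $F(\varphi)$ has exactly one factor carrying $s$ derivatives and the other factors measured in a norm no stronger than $H^{1/2}$; by (\ref{assumption_boundednessH12}) those factors are bounded by $\kappa$ uniformly on $[-T,T]$.

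With that estimate in hand, set $M(t) := \sup_{|\tau|\leq t}\|\varphi_\tau\|_{H^s}$. The Duhamel identity then gives, for $|t|\leq T$,
\[
M(t) \;\leq\; \|\varphi\|_{H^s} + C(\kappa)\int_0^{t} M(\tau)\,\rd\tau,
\]
and Gronwall's lemma yields $M(t)\leq \|\varphi\|_{H^s}\, e^{C(\kappa)|t|}\leq \|\varphi\|_{H^s}\,e^{C(\kappa)T} =: \nu$, a constant depending only on $\kappa$, $T$, $s$, $\|\varphi\|_{H^s}$ and manifestly \emph{not} on $\alpha$. To make this fully rigorous one should first run the argument for $\alpha>0$, where $F$ is a genuinely bounded, Lipschitz perturbation and local $H^s$ well-posedness together with the standard continuation criterion justifies the a priori estimate on the maximal interval; one checks the $H^s$ solution exists at least as long as the $H^{1/2}$ solution does, hence on all of $[-T,T]$. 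The case $\alpha=0$ then follows either directly by the same estimates (the Hardy/Kato bounds are insensitive to $\alpha=0$) or by passing to the limit $\alpha\to 0$ using the uniform bound just obtained.

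The main obstacle is the product estimate on $\|F(\varphi)\|_{H^s}$ for $s$ possibly large: distributing $|\nabla|^s$ onto the convolution factor $(|\cdot|+\alpha)^{-1}*|\varphi|^2$ produces terms like $\big(|\nabla|^{s-1}\,(\text{something})\big)$ that must be re-expressed so that derivatives land on $|\varphi|^2$ and are then absorbed into $H^s$ of $\varphi$ times $H^{1/2}$ of $\varphi$; controlling these cross terms — and in particular making sure no term ever needs two factors of $\varphi$ in a norm stronger than $H^{1/2}$, which would break the closure of the Gronwall argument — is the delicate bookkeeping. Kato's inequality and the $L^{3,\infty}$ membership of $|x|^{-1}$ are the tools that let every ``dangerous'' derivative be traded for a half-derivative on $\varphi$, keeping the nonlinearity effectively critical-but-subcritical in the energy space; verifying this carefully, uniformly in $\alpha$, is where the real work lies.
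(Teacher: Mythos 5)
Your proof follows essentially the same route as the paper: write the Duhamel formula, establish the key nonlinear estimate $\|J(\varphi)\|_{H^s}\lesssim\|\varphi\|_{H^{1/2}}^2\|\varphi\|_{H^s}$ using fractional Leibniz plus Kato/HLS, note that $(|\cdot|+\alpha)^{-1}\le|\cdot|^{-1}$ makes all constants $\alpha$-independent, and close with Gronwall. The one place you should be more careful is the specific form of the Leibniz rule you wrote down,
\[
\||\nabla|^s(fg)\|_2\;\lesssim\;\||\nabla|^s f\|_2\|g\|_\infty+\|f\|_\infty\||\nabla|^s g\|_2\,,
\]
which is too weak for this problem: the term where the full $|\nabla|^s$ hits the convolution factor would require a bound on $\|(-\Delta)^{s/2}\big(|\cdot|^{-1}*|\varphi|^2\big)\|_\infty$, and that quantity is \emph{not} controlled by $\|\varphi\|_{H^{1/2}}^2\|\varphi\|_{H^s}$ (for instance $s=1$ gives $\||\cdot|^{-2}*|\varphi|^2\|_\infty\lesssim\|\varphi\|_{H^1}^2$, which is circular). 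What the paper actually uses is the generalized Leibniz rule of Lemma~\ref{lm:leib}, splitting with the mixed $L^6\cdot L^3$ pairing
\[
\Big\|\,\frac{1}{|\cdot|}*|(-\Delta)^{s/2}|\varphi|^2|\,\Big\|_6\,\|\varphi\|_3
\]
so that the Hardy--Littlewood--Sobolev inequality can trade $|\cdot|^{-1}*$ for the gain $L^{6/5}\to L^6$, yielding $\|(-\Delta)^{s/2}|\varphi|^2\|_{6/5}\|\varphi\|_{H^{1/2}}$, which a second application of the Leibniz rule closes as $\|\varphi\|_{H^s}\|\varphi\|_{H^{1/2}}^2$. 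You do flag exactly this bookkeeping as the delicate point and name HLS and the $L^{3,\infty}$-membership of $|x|^{-1}$ as the right tools, so the plan is correct; just be aware that the $L^\infty$-version of Leibniz you quoted is not the one that actually makes the argument close.
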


\bi

\begin{proof}
We follow here the proof of \cite[Lemma 3]{L} with some modifications. Let $J(\ph):= \big( (|\,\cdot\,|+\alpha)^{-1}*|\ph|^2\big) \ph$. Then we claim that 
\begin{equation}\label{Lenzmann-estimate}
\|J(\ph)\|_{H^s}\;\lesssim\;\|\ph\|_{H^{1/2}}^2\,\|\ph\|_{H^s}
\,,\qquad \text{ for all } \ph \in H^s(\mathbb{R}^3)\,.
\end{equation} 
In fact, $\|J(\ph)\|_{H^s}\lesssim\|J(\ph)\|_{2}+\|(-\Delta)^{s/2}J(\ph)\|_{2}$ and
\begin{equation}\label{L__2}
\|J(\ph)\|_{2}\;\lesssim\;\Big\|\Big(\frac{1}{|\cdot|+\alpha}*|\ph|^2\Big)\ph \Big\|_2\;\lesssim\;\Big\|\frac{1}{|\cdot|}*|\ph|^2\Big\|_{\infty}\,\|\ph\|_2\;\lesssim\;\|\ph\|_{H^{1/2}}^2\|\ph\|_2\,.
\end{equation} 
Moreover,
\begin{equation}\label{L__>1}
\begin{split}
\big\|(-\Delta)^{s/2}J(\psi)\big\|_{2}\;&=\;\left\|(-\Delta)^{s/2} \left(\frac{1}{|.| +\alpha} * |\ph|^2\right) \ph \right\|_2 \\
& \lesssim \; \left\| \frac{1}{|.|} * | (-\Delta)^{s/2}|\ph|^2 | \right\|_{6}  \, \|\ph\|_{3} +\left\|\frac{1}{|.| + \alpha} * |\ph|^2 \right\|_{\infty} \, \big\|(-\Delta)^{s/2}\ph \big\|_2 \\
& \lesssim \; \big\|(-\Delta)^{s/2}|\ph|^2 \|_{6/5} \| \ph \|_{H^{1/2}} + \| \ph \|^2_{H^{1/2}} \| \ph \|_{H^s} \\ & \lesssim \| \ph \|^2_{H^{1/2}} \| \ph \|_{H^s} \,. 
\end{split}
\end{equation}
Here we used the generalized Leibniz rule (see Lemma \ref{lm:leib}) in the first inequality. In the second inequality, we used the Hardy-Littlewood-Sobolev inequality and the Sobolev inequality $\| \ph \|_3 \lesssim \| \ph \|_{H^{1/2}}$ to bound the first term, and Kato's inequality 
\begin{equation}\label{eq:kato}
\sup_{x \in \bR^3} \int \rd y \, \frac{|\ph (y)|^2}{|x-y|}  \leq \frac{\pi}{2} \int \rd y \, \left| |\nabla|^{1/2} \, \ph (y) \right|^2 \leq  \frac{\pi}{2}  \, \| \ph \|^2_{H^{1/2}} 
\end{equation}
to bound the second term. Finally, in the third inequality, we used again the generalized Leibniz rule. This shows (\ref{Lenzmann-estimate}). 

\medskip

Next, we write $\ph_t$ as 
\begin{equation}
\varphi_t\;=\;e^{-i\sqrt{1-\Delta}\,t}\varphi+i\lambda\int_0^t\rd s\, e^{-i\sqrt{1-\Delta}\,(t-s)}\Big(\frac{1}{|.|+\alpha_N}*|\varphi_s|^2\Big)\,\varphi_s
\end{equation}
and we obtain, by \eqref{Lenzmann-estimate} and \eqref{assumption_boundednessH12}, that
\begin{equation}\label{Hs_pre_gronwall}
\begin{split}
\|\varphi_t\|_{H^s}\;&\lesssim\;\|\varphi\|_{H^s}+\int_0^t\|J(\varphi_\tau)\|_{H^s}\,\rd \tau\;\lesssim\;\| \ph \|_{H^s} +\kappa^2\int_0^t\|\varphi_\tau\|_{H^s}\,\rd \tau\,.
\end{split}
\end{equation} 
The proposition now follows applying Gronwall's inequality to \eqref{Hs_pre_gronwall}.
\end{proof}

Next, under the assumption that  the solution $\ph_t$ of the Hartree equation (\ref{eq:hartree0}) with initial data $\ph \in H^2 (\bR^3)$ stays bounded in $H^{1/2}$ in the interval $[-T,T]$, we show the vicinity (in the $H^{1/2}$-norm) of the solution $\ph_t^{(\alpha)}$ of the regularized equation (\ref{NLS_Hartree_rel}), with $\alpha >0$ and small, to $\ph_t$. 

\begin{proposition}\label{closeness_in_H1/2}
Fix $\ph \in H^1(\mathbb{R}^3)$ with $\| \ph \| = 1$ and let $\ph_t$ denote the solution of the nonlinear Hartree equation (\ref{eq:hartree0}). 
with initial condition $\varphi_{t=0}=\varphi$. Let $T>0$ be such that
\begin{equation}\label{eq:kaa}
\kappa := \sup_{|t|\leq T}\|\varphi_t\|_{H^{1/2}} \;< \;\infty\,.
\end{equation}
For $\alpha>0$, let $\varphi_t^{(\alpha)}$ be the solution to the regularised Hartree  equation
\begin{equation}\label{eq:hara1}
i\partial_t\varphi_t^{(\alpha)}=\sqrt{1-\Delta}\, \varphi_t^{(\alpha)}-\lambda\Big(\frac{1}{|\cdot|+\alpha}*|\varphi_t^{(\alpha)}|^2\Big)\varphi_t^{(\alpha)}
\end{equation}
with initial condition $\varphi_{t=0}=\varphi$. 

\medskip

Then there exists a constant $C = C (T,\kappa, \| \ph \|_{H^1})<\infty$, such that
\begin{equation}\label{distance_in_L2}
\big\|\varphi_t-\varphi_t^{(\alpha)}\big\|_2\;\leq\;C \,\alpha \, \qquad \text{for all $|t|\leq T$ and all } \alpha >0.
\end{equation} 

\medskip

Moreover, if we assume additionally that $\varphi\in H^2(\mathbb{R}^3)$, then we can also find a constant $D = D (T,\kappa, \| \ph \|_{H^2}) < \infty$ such that 
\begin{equation}\label{H1/2closeness}
\big\|\varphi_t-\varphi_t^{(\alpha)}\big\|_{H^{1/2}}\;\leq\; D \,\alpha^{1/2}\, ,
\end{equation}
for all $|t| \leq T$ and $0 < \alpha <1$. 
\end{proposition}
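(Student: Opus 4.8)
The plan is to control $w_t:=\varphi_t-\varphi_t^{(\alpha)}$ through a Duhamel/Gronwall estimate run inside a continuity bootstrap. Since $\varphi_t$ and $\varphi_t^{(\alpha)}$ solve \eqref{eq:hartree0} and \eqref{eq:hara1} with the \emph{same} initial datum $\varphi$ and the same unitary, $H^s$-commuting propagator $e^{-i\sqrt{1-\Delta}\,t}$, subtracting the two Duhamel formulas gives
\[
w_t \;=\; i\lambda\int_0^t e^{-i\sqrt{1-\Delta}\,(t-s)}\Big[\big(\tfrac{1}{|\cdot|}\ast|\varphi_s|^2\big)\varphi_s-\big(\tfrac{1}{|\cdot|+\alpha}\ast|\varphi_s^{(\alpha)}|^2\big)\varphi_s^{(\alpha)}\Big]\,\rd s ,
\]
the free parts cancelling. (For each fixed $\alpha>0$, equation \eqref{eq:hara1} is globally well posed in $H^1$ and in $H^2$, since the self-consistent potential $\tfrac{1}{|\cdot|+\alpha}\ast|\varphi|^2$ is bounded by $\alpha^{-1}\|\varphi\|_2^2$; in particular $s\mapsto w_s$ is continuous in $H^{1/2}$.) The obstruction to a direct estimate is that, in the supercritical regime, there is \emph{no} a priori bound on $\|\varphi_s^{(\alpha)}\|_{H^{1/2}}$ uniform in $\alpha$: conservation of the regularized energy only yields $(1-\tfrac{\lambda\pi}{4})\|\varphi_s^{(\alpha)}\|_{H^{1/2}}^2\lesssim 1$, which is vacuous for $\lambda\ge 4/\pi$. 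Hence I would propagate $w$ and the regularity of $\varphi^{(\alpha)}$ together: set $t^*=t^*_\alpha:=\sup\{\tau\le T:\ \sup_{0\le s\le\tau}\|w_s\|_{H^{1/2}}\le 1\}$, which is positive because $w_0=0$. On $[0,t^*]$ one has $\|\varphi_s^{(\alpha)}\|_{H^{1/2}}\le\kappa+1$, so Proposition \ref{Prop:propagation_of_regularity} (with $s=1$, and also $s=2$ under the extra hypothesis $\varphi\in H^2$) furnishes bounds $\|\varphi_s^{(\alpha)}\|_{H^1}\le\nu_1'$, $\|\varphi_s^{(\alpha)}\|_{H^2}\le\nu_2'$ on $[0,t^*]$ that are \emph{uniform in $\alpha$}, alongside the analogous $\alpha$-free bounds $\|\varphi_s\|_{H^1}\le\nu_1$, $\|\varphi_s\|_{H^2}\le\nu_2$ for the unregularized flow.

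Next I would run the $L^2$ estimate on $[0,t^*]$, splitting the bracket above as $(\mathrm{I})+(\mathrm{II})+(\mathrm{III})$, where $(\mathrm{I})=\big(\tfrac{\alpha}{|\cdot|(|\cdot|+\alpha)}\ast|\varphi_s|^2\big)\varphi_s$, $(\mathrm{II})=\big(\tfrac{1}{|\cdot|+\alpha}\ast(|\varphi_s|^2-|\varphi_s^{(\alpha)}|^2)\big)\varphi_s$ and $(\mathrm{III})=\big(\tfrac{1}{|\cdot|+\alpha}\ast|\varphi_s^{(\alpha)}|^2\big)w_s$. For $(\mathrm{I})$, the elementary pointwise bound $\tfrac{\alpha}{|x|(|x|+\alpha)}\le\alpha|x|^{-2}$ and Hardy's inequality give $\big\|\tfrac{\alpha}{|\cdot|(|\cdot|+\alpha)}\ast|\varphi_s|^2\big\|_\infty\le 4\alpha\|\nabla\varphi_s\|_2^2\le 4\alpha\nu_1^2$, whence $\|(\mathrm{I})\|_2\le 4\alpha\nu_1^2$. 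For $(\mathrm{II})$, writing $|\varphi_s|^2-|\varphi_s^{(\alpha)}|^2=\varphi_s\overline{w_s}+\overline{\varphi_s^{(\alpha)}}\,w_s$ and using Cauchy--Schwarz together with Hardy's inequality, $\big\|\tfrac{1}{|\cdot|}\ast(|\varphi_s|\,|w_s|)\big\|_\infty\le 2\|\nabla\varphi_s\|_2\|w_s\|_2$ and similarly with $\varphi_s^{(\alpha)}$, so $\|(\mathrm{II})\|_2\le 2(\nu_1+\nu_1')\|w_s\|_2$. For $(\mathrm{III})$, Kato's inequality \eqref{eq:kato} gives $\big\|\tfrac{1}{|\cdot|+\alpha}\ast|\varphi_s^{(\alpha)}|^2\big\|_\infty\le\tfrac{\pi}{2}\|\varphi_s^{(\alpha)}\|_{H^{1/2}}^2\le\tfrac{\pi}{2}(\kappa+1)^2$, so $\|(\mathrm{III})\|_2\le\tfrac{\pi}{2}(\kappa+1)^2\|w_s\|_2$. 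Inserting these into the Duhamel identity, using that the propagator is unitary on $L^2$ and $\|\varphi_s\|_2=1$, and applying Gronwall's lemma then yields $\|w_t\|_2\le A\,\alpha$ on $[0,t^*]$, with $A=A(\lambda,T,\kappa,\|\varphi\|_{H^1})$.

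To close the bootstrap I would interpolate: on $[0,t^*]$ one also has the crude uniform bound $\|w_t\|_{H^1}\le\|\varphi_t\|_{H^1}+\|\varphi_t^{(\alpha)}\|_{H^1}\le\nu_1+\nu_1'=:B$, and since $\|w\|_{H^{1/2}}^2\lesssim\|w\|_2\|w\|_{H^1}$ this gives $\|w_t\|_{H^{1/2}}\le D\,\alpha^{1/2}$ on $[0,t^*]$ with $D$ depending only on $\lambda,T,\kappa,\|\varphi\|_{H^1}$. For $\alpha$ small enough that $D\alpha^{1/2}<1$, this strictly improves the inequality defining $t^*$, so $t^*=T$ by continuity; hence both $\|w_t\|_2\le A\alpha$ and $\|w_t\|_{H^{1/2}}\le D\alpha^{1/2}$ hold for $0\le t\le T$, and for $-T\le t\le 0$ by time reversal. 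The finitely many remaining $\alpha\in[\alpha_0,1)$ are absorbed by enlarging the constants, using $\|w_t\|_2\le 2$ and the (finite, by continuity in $\alpha$) quantity $\sup_{\alpha\in[\alpha_0,1],\,|s|\le T}\|\varphi_s^{(\alpha)}\|_{H^{1/2}}$. This establishes \eqref{distance_in_L2}, and \eqref{H1/2closeness} follows from the same interpolation step. (The hypothesis $\varphi\in H^2$ in the statement is more than enough: Proposition \ref{Prop:propagation_of_regularity} then also gives uniform $H^2$ bounds on $\varphi_t$ and $\varphi_t^{(\alpha)}$, which may be used in place of the $H^1$ bounds, although $\varphi\in H^1$ already suffices for \eqref{H1/2closeness}.)

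The hard part is the bootstrap itself: the argument must feed the smallness of $w$ back into $\alpha$-uniform regularity bounds for $\varphi^{(\alpha)}$, precisely because supercriticality forbids obtaining such bounds directly from the conserved energy; the scheme works because the improved bound $\|w_t\|_{H^{1/2}}\lesssim\alpha^{1/2}$ comfortably beats the $O(1)$ threshold used to invoke the propagation-of-regularity estimate. A secondary point is that the gain of a positive power of $\alpha$ must come from the cutoff-difference term $(\mathrm{I})$, where Kato's inequality alone is insufficient (it only yields $O(1)$); the gain is produced by $\tfrac{\alpha}{|x|(|x|+\alpha)}\le\alpha|x|^{-2}$ followed by Hardy's inequality. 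Equivalently, one may estimate $(\mathrm{I})$ directly in $H^{1/2}$ via the fractional Leibniz rule (Lemma \ref{lm:leib}) and the scaling identity $\tfrac{\alpha}{|x|(|x|+\alpha)}=\alpha^{-1}K_1(x/\alpha)$ with $K_1(x)=|x|^{-1}(|x|+1)^{-1}$, which forces $\big\|(-\Delta)^{1/4}\big(\tfrac{\alpha}{|\cdot|(|\cdot|+\alpha)}\big)\big\|_{L^{3/2}}=\alpha^{1/2}\big\|(-\Delta)^{1/4}K_1\big\|_{L^{3/2}}$ with $(-\Delta)^{1/4}K_1\in L^{3/2}$, exhibiting the $\alpha^{1/2}$ rate of \eqref{H1/2closeness} directly.
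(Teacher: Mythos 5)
Your proposal is correct, but it follows a genuinely different route from the paper's, in both halves of the proposition. For the $L^2$ bound, the paper does not use Duhamel at all: it differentiates $\|\varphi_t-\varphi_t^{(\alpha)}\|_2^2 = 2 - 2\,\mathrm{Re}\,\langle\varphi_t,\varphi_t^{(\alpha)}\rangle$ directly, lets the free parts cancel by self-adjointness, and then exploits the algebraic identity $\mathrm{Im}\,\langle\varphi_t,\,W\,\varphi_t^{(\alpha)}\rangle=\mathrm{Im}\,\langle\varphi_t,\,W\,(\varphi_t^{(\alpha)}-\varphi_t)\rangle$ (valid for any real-valued multiplication operator $W$) to make \emph{every} term proportional to $w_t=\varphi_t-\varphi_t^{(\alpha)}$. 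Because of this cancellation, the term analogous to your $(\mathrm{III})$ — a potential built from $|\varphi_s^{(\alpha)}|^2$ acting on $w_s$ — never appears in the paper, so the paper's $L^2$ estimate requires no control of $\|\varphi_s^{(\alpha)}\|_{H^{1/2}}$ or $\|\varphi_s^{(\alpha)}\|_{H^1}$ at all (only $L^2$-conservation of $\varphi^{(\alpha)}$ and Proposition \ref{Prop:propagation_of_regularity} applied to the \emph{unregularized} $\varphi_t$), and hence needs no bootstrap; this also keeps the logical order clean, since the paper then deduces the $\alpha$-uniform regularity of $\varphi^{(\alpha)}$ (Corollary \ref{cor:reg}) as a \emph{consequence} of Proposition \ref{closeness_in_H1/2} rather than as an input. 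Your Duhamel decomposition necessarily produces the $(\mathrm{III})$-type term and therefore must feed in an a priori $\alpha$-uniform bound on $\varphi^{(\alpha)}$, which your continuity bootstrap correctly supplies; the argument is sound but more roundabout. For the $H^{1/2}$ bound, the comparison goes the other way: the paper runs a five-term Duhamel decomposition with repeated use of the fractional Leibniz rule, the Hardy--Littlewood--Sobolev inequality, and the nontrivial pointwise estimate $|(-\Delta)^{1/4}(|x|+\alpha)^{-1}|\lesssim(|x|+\alpha)^{-3/2}$, and in doing so genuinely uses $\|\varphi_s\|_{H^2}$ (see the term $\|\varphi_s\|_{2(2+\varepsilon)/\varepsilon}$). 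Your interpolation $\|w\|_{H^{1/2}}^2\le\|w\|_2\,\|w\|_{H^1}$, combined with the $L^2$ bound and the $\alpha$-uniform $H^1$ bound from the bootstrap, bypasses all of that, and — as you note — even shows that $\varphi\in H^1$ already suffices for \eqref{H1/2closeness}, a marginal strengthening of the statement. In short: the paper's $L^2$ step is sharper (it eliminates the need for the bootstrap via an algebraic cancellation), while your $H^{1/2}$ step is cleaner (interpolation replaces a substantial technical computation); overall the two proofs have complementary strengths and both are valid.
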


\bigskip

\begin{proof}
We start by proving (\ref{distance_in_L2}). Since $\varphi\in H^1(\mathbb{R}^3)$, 
we can find, by (\ref{eq:kaa}) and Proposition \ref{Prop:propagation_of_regularity}, $\nu = \nu (T,\kappa, \|\ph \|_{H^1}) < \infty$ such that
\begin{equation}
\sup_{|t|\leq T}\|\varphi_t\|_{H^1}\; \leq \;\nu  \, .
\end{equation}

Let $t\in[-T,T]$. We have
\begin{equation*}
\begin{split}
\frac{\rd}{\rd t}\|\varphi_t-\varphi_{t}^{(\alpha)}\|_2^2 \; = &\; -2\frac{\rd}{\rd t} \re\la \varphi_{t},\varphi_{t}^{(\alpha)}\ra \\ = &\;2\lambda\,\im\Big\la \varphi_t\,,\Big(\frac{1}{|x|}*|\varphi_t|^2-\frac{1}{|x|+\alpha}*|\varphi_{t}^{(\alpha)}|^2 \Big)\varphi_{t}^{(\alpha)}\Big\ra \\
 = &\; 2\lambda\,\im\bigg\{\Big\la \varphi_t\,,\Big(\frac{\alpha}{|x|(|x|+\alpha))}*|\varphi_t|^2\Big)(\varphi_{t}^{(\alpha)}-\varphi_t)\Big\ra + \\ &\hspace{1cm} +
\Big\la \varphi_t\,,\Big(\frac{1}{|x|+\alpha}*\big(|\varphi_t|^2-|\varphi_{t}^{(\alpha)} |^2\big)\Big)(\varphi_{t}^{(\alpha)}-\varphi_t)\Big\ra\,\bigg\}\,. 
\end{split}
\end{equation*}
Therefore
\begin{equation}\label{Gronwall-L2-start}
\begin{split}
\bigg|\, \frac{\rd}{\rd t}\|\varphi_t-\varphi_{t}^{(\alpha)}\|_2^2\,\bigg|\;& \leq\;2|\lambda|\,\bigg\{\,\Big|\, \la \varphi_t\,,\Big(\frac{\alpha}{|x|(|x|+\alpha))}*|\varphi_t|^2\Big)(\varphi_{t}^{(\alpha)}-\varphi_t)\Big\ra \,\Big| \\
& \qquad + \Big|\, \la \varphi_t\,,\Big(\frac{1}{|x|+\alpha}*\big(|\varphi_t|^2-|\varphi_{t}^{(\alpha)} |^2\big)\Big)(\varphi_{t}^{(\alpha)}-\varphi_t)\Big\ra  \,\Big|\,\bigg\}\,.
\end{split}
\end{equation}
The first summand in the r.h.s.~of \eqref{Gronwall-L2-start} can be estimated as
\begin{equation*}
\begin{split}
\Big|\, \la \varphi_t\,,\Big(\frac{\alpha}{|x|(|x|+\alpha))}*|\varphi_t|^2\Big)(\varphi_{t}^{(\alpha)}-\varphi_t)\Big\ra \,\Big| \; & \leq \; \Big\|\,\frac{\alpha}{|x|(|x|+\alpha))}*|\varphi_t|^2\,\Big\|_{\infty}\|\varphi_t-\varphi_t^{(\alpha)} \|_2 \\
& \leq\;\alpha\,\Big\|\,\frac{1}{\,|x|^2}*|\varphi_t|^2\,\Big\|_{\infty} \|\varphi_t-\varphi_t^{(\alpha)} \|_2 \\
&\lesssim \; \alpha\, \|\varphi_t-\varphi_t^{(\alpha)} \|_2 \|\varphi_t\|_{H^1}^2\;\lesssim\;\alpha\,\nu^2\index{\footnote{}} \|\varphi_t-\varphi_t^{(\alpha)} \|_2\,.
\end{split}
\end{equation*}
The second summand can be estimated by
\begin{equation*}
\begin{split}
& \Big|\, \la \varphi_t\,,\Big(\frac{1}{|x|+\alpha}*\big(|\varphi_t|^2-|\varphi_{t}^{(\alpha)} |^2\big)\Big)(\varphi_{t}^{(\alpha)}-\varphi_t)\Big\ra  \,\Big| \\
& \leq \; \int_{\mathbb{R}^3\times\mathbb{R}^3}\rd x \,\rd y\,|\varphi_t(x)| \,\frac{1}{|x-y|+\alpha}\,|\varphi_t(y)-\varphi_{t}^{(\alpha)}(y)|\,\big(|\varphi_{t}(y)|+|\varphi_{t}^{(\alpha)}(y)|\big)\,\big|\varphi_{t}^{(\alpha)}(x)-\varphi_t(x)\big|\, \\
& \leq\; \int_{\mathbb{R}^3\times\mathbb{R}^3}\rd x \,\rd y\,|\varphi_t(x)|^2\,\frac{1}{\,|x-y|^2}|\varphi_t(y)-\varphi_{t}^{(\alpha)}(y)|^2 \\
& \qquad + 2\int_{\mathbb{R}^3\times\mathbb{R}^3}\rd x \,\rd y\,\big(|\varphi_{t}(y)|^2+|\varphi_{t}^{(\alpha)}(y)|^2\big)\,|\varphi_{t}^{(\alpha)}(x)-\varphi_t(x)|^2 \\
& \leq \; \Big\|\frac{1}{|x|^2}*|\varphi_t|^2\Big\|_\infty\,\|\varphi_t-\varphi_{t}^{(\alpha)}\|_2^2\;+\;2\,\big(\|\varphi_t\|_2^2+\|\varphi_t^{(\alpha)}\|_2^2\big)\,\|\varphi_t-\varphi_{t}^{(\alpha)}\|_2^2 \\
& \lesssim \; (\|\varphi_t\|_{H^{1}}^2 + 1) \,\|\varphi_t-\varphi_{t}^{(\alpha)}\|_2^2\;\lesssim\;(1+\nu^2) \, \|\varphi_t-\varphi_{t}^{(\alpha)}\|_2^2\, ,
\end{split}
\end{equation*}
where on the last line we used Hardy's inequality 
\begin{equation}\label{eq:hardy}
\sup_{x\in \bR^3} \int \rd y \, \frac{|\ph (y)|^2}{|x-y|^2} \leq 4 \int \rd y \, |\nabla \ph (y)|^2 \leq 4 \| \ph \|^2_{H^1} \, . \end{equation}
Thus, \eqref{Gronwall-L2-start} gives
\begin{equation}\begin{split}
\frac{\rd}{\rd t}\|\varphi_t-\varphi_{t}^{(\alpha)}\|_2^2 \; & \lesssim \; (1+\nu^2) \,\Big(\alpha\,\|\varphi_t-\varphi_t^{(\alpha)} \|_2+ \|\varphi_t-\varphi_{t}^{(\alpha)}\|_2^2\Big) \\
& \lesssim\; (1+ \nu^2) \,\Big(\alpha^2+ \|\varphi_t-\varphi_{t}^{(\alpha)}\|_2^2\Big)\,.
\end{split}
\end{equation}
By Gronwall's inequality, we find $C= C (\nu,T)$ with 
\begin{equation}
\|\varphi_t-\varphi_{t}^{(\alpha)}\|_2\;\leq\;C  \alpha \, 
\end{equation}
for all $\alpha >0$.

\bi

Next, we prove (\ref{H1/2closeness}). To this end, it is enough to show that there exists $D=D(T,\kappa,\| \ph \|_{H^2})$ such that 
\begin{equation}\label{Delta^1/4_distance_in_L2}
 \big\|(-\Delta)^{1/4}\big(\varphi_t-\varphi_{t}^{(\alpha)}\big)\big\|_2\;\lesssim\; D \, \alpha^{1/2} \, .
\end{equation}
Note that, since $\varphi\in H^2 (\mathbb{R}^3)$, 
we can find, by (\ref{eq:kaa}) and Proposition \ref{Prop:propagation_of_regularity}, $\nu = \nu (T,\kappa, \|\ph \|_{H^2}) < \infty$ such that
\begin{equation}\label{eq:H2-bd}
\sup_{|t|\leq T}\|\varphi_t\|_{H^2}\; \leq \;\nu  \, .
\end{equation}


\bi

We write $\ph_t$ and $\ph_t^{(\alpha)}$ using their Duhamel expansions 
\begin{equation*}
\varphi_t\;=\;e^{-i\sqrt{1-\Delta}\,t}\varphi+i\lambda\int_0^t\rd s\, e^{-i\sqrt{1-\Delta}\,(t-s)}\Big(\frac{1}{|x|}*|\varphi_s|^2\Big)\,\varphi_s
\end{equation*}
and
\begin{equation*}
\varphi_t^{(\alpha)}\;=\;e^{-i\sqrt{1-\Delta}\,t}\varphi+i\lambda\int_0^t\rd s\, e^{-i\sqrt{1-\Delta}\,(t-s)}\Big(\frac{1}{|x|+\alpha}*|\varphi_s^{(\alpha)}|^2\Big)\,\varphi_s^{(\alpha)}
\end{equation*}
respectively. Thus
\begin{equation}\begin{split}
\big\|(-\Delta)^{1/4}\big(\varphi_t-\varphi_{t}^{(\alpha)}\big)\big\|_2\;\leq\;|\lambda|\int_0^t\rd s\,\bigg\{\,&\Big\|(-\Delta)^{1/4}\Big(\frac{1}{|x|}*|\varphi_s|^2\Big)(\varphi_s-\varphi_s^{(\alpha)}) \Big\|_2 \\
& +\Big\|(-\Delta)^{1/4}\Big(\frac{\alpha}{|x|(|x|+\alpha)}*|\varphi_s|^2\Big)\varphi^{(\alpha)}_s \,\Big\|_2 \\
& +\Big\|(-\Delta)^{1/4}\Big(\frac{1}{|x|+\alpha}*\big(|\varphi_s|^2-|\varphi_s^{(\alpha)}|^2\big)\Big) \varphi_s^{(\alpha)} \Big\|_2 \,\bigg\}\,.
\end{split}
\end{equation}
Further decomposing the second and third term in the parenthesis we find
\begin{equation}\label{D^1/4-splitting}\begin{split}
\big\|(-\Delta)^{1/4}\big(\varphi_t-\varphi_{t}^{(\alpha)}\big)\big\|_2 \leq\;|\lambda|\int_0^t\rd s\,\bigg\{\,&\Big\|(-\Delta)^{1/4}\Big(\frac{1}{|x|}*|\varphi_s|^2\Big)(\varphi_s-\varphi_s^{(\alpha)}) \Big\|_2 \\
& +\Big\|(-\Delta)^{1/4}\Big(\frac{\alpha}{|x|(|x|+\alpha)}*|\varphi_s|^2\Big)(\varphi_s-\varphi_s^{(\alpha)}) \Big\|_2 \\
& +\Big\|(-\Delta)^{1/4}\Big(\frac{\alpha}{|x|(|x|+\alpha)}*|\varphi_s|^2\Big)\,\varphi_s \,\Big\|_2 \\
& +\Big\|(-\Delta)^{1/4}\Big(\frac{1}{|x|+\alpha}*\big(|\varphi_s|^2-|\varphi_s^{(\alpha)}|^2\big)\Big)\,\varphi_s\, \Big\|_2  \\
& +\Big\|(-\Delta)^{1/4}\Big(\frac{1}{|x|+\alpha}*\big(|\varphi_s|^2-|\varphi_s^{(\alpha)}|^2\big)\Big)(\varphi_s-\varphi_s^{(\alpha)}) \Big\|_2 \,\bigg\}\,.
\end{split}
\end{equation}

\medskip

The first term is bounded by 
\begin{equation}\label{D^1/4-splitting-1summ}
\begin{split}
\Big\|(-\Delta)^{1/4}\Big(\frac{1}{|x|}*|\varphi_s|^2\Big)(\varphi_s-\varphi_s^{(\alpha)}) \Big\|_2 \; & \lesssim \; \Big\|(-\Delta)^{1/4}\Big(\frac{1}{|x|}*|\varphi_s|^2\Big)\Big\|_6\,\|\varphi_s-\varphi_s^{(\alpha)}\|_3 \\
& \quad + \Big\|\frac{1}{|x|}*|\varphi_s|^2\Big\|_\infty\,\big\|(-\Delta)^{1/4}\big(\varphi_s-\varphi_s^{(\alpha)}\big)\,\big\|_2 
\end{split}
\end{equation}
where we used the generalized Leibniz rule (see Lemma \ref{lm:leib}). Next we observe that, by Kato's inequality (\ref{eq:kato}) and by (\ref{eq:H2-bd}), we have $\| |.|^{-1} * |\ph_s|^2 \|_{\infty} \lesssim \| \ph_s \|_{H^{1/2}}^2 \lesssim \nu^2$. This, combined with the bound 
\begin{equation}\label{D1/4-1/x-u2}
\begin{split}
\Big\|(-\Delta)^{1/4}\Big(\frac{1}{|x|}*|\varphi_s|^2\Big)\Big\|_6 \; \lesssim \; \|\varphi_s\|_3^2\,  \lesssim \| \ph_s \|_{H^{1/2}}^2 \lesssim \nu^2 
\end{split}
\end{equation}
implies (using also (\ref{distance_in_L2})) that 
\begin{equation}\label{eq:first} \Big\|(-\Delta)^{1/4}\Big(\frac{1}{|x|}*|\varphi_s|^2\Big)(\varphi_s-\varphi_s^{(\alpha)}) \Big\|_2 \lesssim \nu^2 \| \ph_s - \ph^{(\alpha)}_s \|_{H^s}\, \lesssim \nu^2 \alpha + \nu^2 \| (-\Delta)^{1/4} (\ph_s - \ph_s^{(\alpha)}) \|_2 
\end{equation}
To prove \eqref{D1/4-1/x-u2}, we rewrite $|.|^{-1}*|\varphi_s|^2=-4\pi\,(-\Delta)^{-1}|\varphi_s|^2$. Then
\begin{equation}\label{introG}
\begin{split}
\Big\|(-\Delta)^{1/4}\Big(\frac{1}{|x|}*|\varphi_s|^2\Big)\Big\|_6 \; & \lesssim \; \big\|(-\Delta)^{-3/4}|\varphi_s|^2\big\|_6\;=\;\big\|G_{3/2}*|\varphi_s|^2\big\|_6\,.
\end{split}
\end{equation}
Here $G_s$, $s\in(0,3)$, is the kernel of the operator $(-\Delta)^{-s/2}$ 
which is explicitly given by 
\begin{equation}\label{G32}
G_{3/2}(x)\;=\;c_{3/2}|x|^{-3/2}
\end{equation}
with $c_{3/2}=\pi^2\sqrt{2}/\Gamma(\frac{3}{4})$.
{F}rom (\ref{introG}), we conclude by the Littlewood-Hardy-Sobolev inequality that 
\begin{equation*}
\Big\|(-\Delta)^{1/4}\Big(\frac{1}{|x|}*|\varphi_s|^2\Big)\Big\|_6 \leq \big\|G_{3/2}*|\varphi_s|^2\big\|_6\;\lesssim\; \|\varphi_s\|_3^2
\end{equation*}
and thus \eqref{D1/4-1/x-u2} follows.

\bi

The second term on the r.h.s. of \eqref{D^1/4-splitting} is estimated again by the generalized Leibniz rule as
\begin{equation}\label{2-summ}
\begin{split}
\Big\|(-\Delta)^{1/4}&\Big(\frac{\alpha}{|x|(|x|+\alpha)}*|\varphi_s|^2\Big)(\varphi_s-\varphi_s^{(\alpha)}) \Big\|_2 \\
& \lesssim\;\Big\|(-\Delta)^{1/4}\Big(\frac{\alpha}{|x|(|x|+\alpha)}*|\varphi_s|^2\Big)\Big\|_\infty\,\|\varphi_s-\varphi_s^{(\alpha)}\|_2 \\
&\quad +\Big\|\frac{\alpha}{|x|(|x|+\alpha)}*|\varphi_s|^2\Big\|_\infty\, \big\|(-\Delta)^{1/4}\big(\varphi_s-\varphi_s^{(\alpha)}\big)\,\big\|_2\,.
\end{split}
\end{equation}
Since 
\begin{equation}\label{2b}
\begin{split}
\Big\|(-\Delta)^{1/4}\Big(\frac{\alpha}{|x|(|x|+\alpha)}*|\varphi_s|^2\Big)\Big\|_\infty \; & \leq \; \Big\|\frac{\alpha}{|x|(|x|+\alpha)}\Big\|_2\,\big\|(-\Delta)^{1/4}|\varphi_s|^2\big\|_2
\\
& \lesssim \; \alpha^{1/2} \,  \|(-\Delta)^{1/4}\varphi_s\|_3\,\|\varphi_s\|_6\;\lesssim\; \alpha^{1/2} \, \|\varphi_s\|_{H^1}^2 \lesssim \alpha^{1/2} \, \nu^2 
\end{split}
\end{equation}
and, by (\ref{eq:hardy}),
\begin{equation}\label{2c}
\begin{split}
\Big\|\frac{\alpha}{|x|(|x|+\alpha)}*|\varphi_s|^2\Big\|_\infty \; & \leq \; \alpha \,\Big\|\frac{1}{|x|^{2}}*|\varphi_s|^2\Big\|_\infty\;\lesssim\; \alpha \, \|\varphi_s\|_{H^1}^2\, \lesssim \alpha \nu^2 
\end{split}
\end{equation}
we find, using (\ref{distance_in_L2}), that 
\begin{equation}\label{D^1/4-splitting-2summ}
\Big\|(-\Delta)^{1/4}\Big(\frac{\alpha}{|x|(|x|+\alpha)}*|\varphi_s|^2\Big)(\varphi_s-\varphi_s^{(\alpha)}) \Big\|_2\;\lesssim\; \alpha^{3/2} \nu^2 + \alpha \nu^2 \big\| (-\Delta)^{1/4} \big(\varphi_s-\varphi_s^{(\alpha)}\big)\,\big\|_2\,.
\end{equation}

\bi

The third summand in \eqref{D^1/4-splitting} is estimated (again using (\ref{eq:hardy})) as
\begin{equation}\label{D^1/4-splitting-3summ}
\begin{split}
\Big\|(-\Delta)^{1/4}\Big(\frac{\alpha}{|x|(|x|+\alpha)}*|\varphi_s|^2\Big)\,\varphi_s \,\Big\|_2 \;&\lesssim\;
\Big\|\frac{\alpha}{|x|(|x|+\alpha)}*|\varphi_s|^2\Big\|_\infty\,\|(-\Delta)^{1/4}\varphi_s\|_2 \\
& \quad + \Big\|(-\Delta)^{1/	4}\frac{\alpha}{|x|(|x|+\alpha)}*|\varphi_s|^2\Big\|_3\,\|\varphi_s\|_6 \\
& \lesssim \; \alpha \, \|\varphi_s\|_{H^1}^2\,\|\varphi_s\|_{H^{1/2}} \\ & \lesssim \; \alpha \, \nu^3
\end{split}
\end{equation}
where we used  \eqref{2c}, the Sobolev inequality $\| \ph_s \|_6 \lesssim \| \ph_s \|_{H^1}$ and the bound
\begin{equation*}
\begin{split}
\Big\|(-\Delta)^{1/4}\frac{\alpha}{|x|(|x|+\alpha)}*|\varphi_s|^2\Big\|_3\; & \leq\;\alpha\,\Big\|\frac{1}{|x|^2}*\big|(-\Delta)^{1/4}|\varphi_s|^2\big|\,\Big\|_{3} \\ & \lesssim\;\alpha\,\|(-\Delta)^{1/4}(\overline{\varphi_s}\varphi_s)\|_{3/2} \\
& \lesssim \; \alpha \,\|\varphi_s\|_{H^1} \| \ph_s \|_{H^{1/2}} \,
\end{split}
\end{equation*}
where we used the Littlewood-Hardy-Sobolev inequality, and, in the last inequality, the generalized Leibniz rule (see Lemma \ref{lm:leib} below). 

\bi

The fourth summand on the r.h.s. of \eqref{D^1/4-splitting} is bounded by 
\begin{equation}\label{5summand_splitted}
\begin{split}
\Big\|(-\Delta)^{1/4}\Big(\frac{1}{|x|+\alpha}*\big(|\varphi_s|^2-&|\varphi_s^{(\alpha)}|^2 \big) \Big) \varphi_s \Big\|_2 \\
\lesssim\; & \; \Big\|\Big((-\Delta)^{1/4}\frac{1}{\,|x|+\alpha} \Big) * \big(|\varphi_s|^2-|\varphi_s^{(\alpha)}|^2 \big)  \Big\|_{2+\varepsilon} \,\|\varphi_s\|_{\frac{2(2+\varepsilon)}{\varepsilon}} \\
& + \; \Big\|\frac{1}{\,|x|+\alpha} *\big(|\varphi_s|^2-|\varphi_s^{(\alpha)}|^2 \big)  \Big\|_6 \,\|(-\Delta)^{1/4}\varphi_s\|_3 
\end{split}
\end{equation}
for arbitrary $\varepsilon>0$. The second term on the r.h.s. of (\ref{5summand_splitted}) is estimated by
\begin{equation}\label{5-3}
\begin{split}
 \Big\|\frac{1}{\,|x|+\alpha} *\big(|\varphi_s|^2-|\varphi_s^{(\alpha)}|^2 \big)  \Big\|_6 \,\|(-\Delta)^{1/4}\varphi_s\|_3\; & \lesssim \; \|\varphi_s\|_{H^1}\,\Big\|\frac{1}{|x|}*\big|\,|\varphi_s|^2-|\varphi_s^{(\alpha)}|^2 \big|\,  \Big\|_6 \\
& \lesssim \;\|\varphi_s\|_{H^1}\,\big\|\,|\varphi_s-\varphi_s^{(\alpha)}|\, \big(|\varphi_s|+|\varphi_s^{(\alpha)}| \big) \big\|_{6/5} \\
& \lesssim \;\|\varphi_s\|_{H^1}\,\big(\, \|\varphi_s\|_2+\|\varphi_s^{(\alpha)}\|_2\big)\,\|\varphi_s-\varphi_s^{(\alpha)}\|_3 \\
& \lesssim \; \nu\,\|(-\Delta)^{1/4}(\varphi_s-\varphi_s^{(\alpha)})\|_2\,.
\end{split}
\end{equation}
where we used the Sobolev inequality on the first, the Hardy-Littlewood-Sobolev inequality on the second, the H\"older inequality on the third, and, finally, again the Sobolev inequality in the fourth line. 

\medskip

As for the first term on the r.h.s. of (\ref{5summand_splitted}), we notice that 
\begin{equation}\label{5-1}
\begin{split}
\Big\|\Big((-\Delta)^{1/4}\frac{1}{\,|x|+\alpha} \Big) &* \big(|\varphi_s|^2-|\varphi_s^{(\alpha)}|^2 \big)  \Big\|_{2+\varepsilon} \,\|\varphi_s\|_{\frac{2(2+\varepsilon)}{\varepsilon}} \\
& \lesssim \; \Big\|(-\Delta)^{1/4} \frac{1}{|x|+\alpha} \,\Big\|_{2+\varepsilon}\,\big\| |\varphi_s-\varphi_s^{(\alpha)}| \big(|\varphi_s|+|\varphi_s^{(\alpha)}| \big)  \big\|_1\,\|\varphi_s\|_{H^2} \\
& \leq \; \Big\|(-\Delta)^{1/4} \frac{1}{|x|+\alpha} \,\Big\|_{2+\varepsilon}\,\|\varphi_s-\varphi_s^{(\alpha)}\|_2\,\big(\|\varphi_s\|_2+\|\varphi_s^{(\alpha)}\|_2\big)\,\,\|\varphi_s\|_{H^2} \\
&  \lesssim \; \nu \,\alpha^{1-\varepsilon}
\end{split}
\end{equation}
where in the last step we used the bound
\begin{equation}\label{bound<}
\Big\|(-\Delta)^{1/4} \frac{1}{|x|+\alpha} \,\Big\|_{2+\varepsilon}\;\lesssim\;\alpha^{-\varepsilon}\end{equation}
for all $\eps >0$. The bound \eqref{bound<} follows from the pointwise estimate
\begin{equation}\label{eq:bd-ptw} \left|\left( (-\Delta)^{1/4} \frac{1}{|x|+\alpha} \right) (x) \right| \lesssim \frac{1}{(|x|+\alpha)^{3/2}} \end{equation} valid for all $x \in \bR^3$. To show (\ref{eq:bd-ptw}), we observe that \[ \left((-\Delta) \frac{1}{(|x| + \alpha)}\right) (x) = - \frac{2\alpha}{|x| (|x| + \alpha)^3} \] and therefore
\begin{equation}\label{eq:lap4} \begin{split} \left| \left((-\Delta)^{1/4} \frac{1}{|x|+\alpha} \right) (x) \right|  &\lesssim  \left| (-\Delta)^{-3/4} \frac{\alpha}{|x| (|x|+\alpha)^3} \right| \lesssim \int \rd y \, \frac{1}{|x-y|^{3/2}} \frac{\alpha}{|y| (\alpha +|y|)^3} \, . \end{split} \end{equation}
We assume first that $|x| \geq \alpha$. {F}rom (\ref{eq:lap4}) we find  
\[ \begin{split} \left| \left((-\Delta)^{1/4} \frac{1}{|x|+\alpha} \right) (x) \right|  & \lesssim \int_{|x-y| \geq |x|/2}  \, \frac{\rd y}{|x-y|^{3/2}} \frac{\alpha}{|y| (\alpha +|y|)^3} + \int_{|x-y| \leq |x|/2} \frac{\rd y}{|x-y|^{3/2}} \frac{\alpha}{|y| (\alpha +|y|)^3} \\ & \lesssim \frac{1}{|x|^{3/2}} \int \rd y \, \frac{\alpha}{|y| (\alpha +|y|)^3} + \frac{1}{|x|^3} \int_{|x-y| \leq |x|/2} \rd y \, \frac{1}{|x-y|^{3/2}}
\end{split} \]
where we used the fact that $|x-y| \leq |x|/2$ implies $|y| \geq |x|/2$. Explicit computation implies that
\begin{equation}\label{eq:x>a} \left| \left((-\Delta)^{1/4} \frac{1}{|x|+\alpha} \right) (x) \right|  \lesssim \frac{1}{|x|^{3/2}} \lesssim \frac{1}{(|x|+\alpha)^{3/2}} \qquad \text{for all $|x| \geq \alpha$.} \end{equation} For $|x| \leq \alpha$ we notice that, by (\ref{eq:lap4}),
\begin{equation}
\label{eq:x<a} \begin{split} \left| \left((-\Delta)^{1/4} \frac{1}{|x|+\alpha} \right) (x) \right|  & \lesssim \int_{|x-y| \geq \alpha}\frac{\rd y}{|x-y|^{3/2}} \frac{\alpha}{|y| (\alpha +|y|)^3} + \int_{|x-y| \leq \alpha} \, \frac{\rd y}{|x-y|^{3/2}} \frac{\alpha}{|y| (\alpha +|y|)^3} \\ & \lesssim \frac{1}{\alpha^{3/2}}  \int \rd y \, \frac{\alpha}{|y| (\alpha +|y|)^3} + \frac{1}{\alpha^2} \int_{|x-y| \leq \alpha} \rd y \, \frac{1}{|x-y|^{3/2} |y|} \end{split}\end{equation}
Since $|x| \leq \alpha$ and $|x-y| \leq \alpha$ imply that $|y| \leq 2\alpha$, the last term is bounded, for $|x| \leq \alpha$, by 
\[ \int_{|x-y| \leq \alpha} \rd y \, \frac{1}{|x-y|^{3/2} |y|} \lesssim \int_{|x-y| \leq \alpha} \frac{\rd y}{|x-y|^{5/2}} + \int_{|y| \leq 2 \alpha} \frac{1}{|y|^{5/2}} \lesssim \alpha^{1/2} \]
Inserting back in (\ref{eq:x<a}), it follows that
\[  \left| \left((-\Delta)^{1/4} \frac{1}{|x|+\alpha} \right) (x) \right|   \lesssim \frac{1}{\alpha^{3/2}} \lesssim \frac{1}{(|x| + \alpha)^{3/2}}\qquad \text{for all $|x| \leq \alpha$}. \]
Together with (\ref{eq:x>a}), this implies (\ref{eq:bd-ptw}) and therefore (\ref{5-1}). Combining (\ref{5-3}) with (\ref{5-1}), we obtain the bound \begin{equation}\label{eq:four} \Big\|(-\Delta)^{1/4}\Big(\frac{1}{|x|+\alpha}*\big(|\varphi_s|^2-|\varphi_s^{(\alpha)}|^2 \big) \Big) \varphi_s \Big\|_2 \leq \nu \| (-\Delta)^{1/4} (\ph_s - \ph_s^{(\alpha)}) \|_2 + \nu \alpha^{1-\eps} \end{equation} for all $\eps >0$.

\medskip

The fifth summand in \eqref{D^1/4-splitting} is estimated as
\begin{equation}\label{D^1/4-splitting-4summ}
\begin{split}
\Big\|(-\Delta)^{1/4}&\Big(\frac{1}{|x|+\alpha}*\big(|\varphi_s|^2-|\varphi_s^{(\alpha)}|^2\big)\Big)(\varphi_s-\varphi_s^{(\alpha)}) \Big\|_2 \\
& \lesssim \; \Big\|\Big((-\Delta)^{1/4}\frac{1}{|x|+\alpha}\Big)*\big(|\varphi_s|^2-|\varphi_s^{(\alpha)}|^2\big) \Big\|_\infty\,\|\varphi_s-\varphi_s^{(\alpha)}\|_2 \\
& \quad + \Big\|\,\frac{1}{|x|+\alpha}*\big(|\varphi_s|^2-|\varphi_s^{(\alpha)}|^2\big) \Big\|_\infty\,\, \big\|(-\Delta)^{1/4}\big(\varphi_s-\varphi_s^{(\alpha)}\big)\,\big\|_2 \\
& \lesssim \; \Big\|(-\Delta)^{1/4}\frac{1}{|x|+\alpha}\,\Big\|_\infty\,\|\varphi_s-\varphi_s^{(\alpha)}\|_2^2 + \frac{1}{\alpha}\,\|\varphi_s-\varphi_s^{(\alpha)}\|_2\,\big\|(-\Delta)^{1/4}\big(\varphi_s-\varphi_s^{(\alpha)}\big)\,\big\|_2 \\
& \lesssim \; \alpha^{1/2} + \big\|(-\Delta)^{1/4}\big(\varphi_s-\varphi_s^{(\alpha)}\big)\,\big\|_2
\end{split}
\end{equation}
where we used the generalized Leibniz rule in the first inequality and the bounds  \eqref{distance_in_L2} and (\ref{eq:bd-ptw}) in the last inequality. 

\medskip

Inserting the estimates \eqref{eq:first}, \eqref{D^1/4-splitting-2summ}, \eqref{D^1/4-splitting-3summ}, \eqref{eq:four}, and \eqref{D^1/4-splitting-4summ} into \eqref{D^1/4-splitting} yields (using that $\nu \geq 1$ and the assumption $\alpha \leq 1$)
\begin{equation}\label{D^1/4-splitting-final}
\big\|(-\Delta)^{1/4}\big(\varphi_t-\varphi_{t}^{(\alpha)}\big)\big\|_2\;\lesssim \nu^3 \int_0^t\rd s \,\bigg\{\,\big\|(-\Delta)^{1/4}\big(\varphi_s-\varphi_s^{(\alpha)}\big)\,\big\|_2 + \alpha^{1/2}\bigg\}\,.
\end{equation}
Eq. \eqref{H1/2closeness} follows by Gronwall's lemma. 
\end{proof}

In the next corollary, we summarize the consequences of the bound on the $H^{1/2}$-norm of $\ph_t$ (and of the assumption $\ph \in H^2 (\bR^3)$ on the initial data), that  will play a crucial role in the many body analysis. 

\begin{corollary}
\label{cor:reg}
Fix $s  \geq 2$ and $\varphi\in H^s (\mathbb{R}^3)$. Let $\varphi_t$ and, for any $\alpha >0$, $\ph^{(\alpha)}_t$ be the solutions of the nonlinear Hartree equations (\ref{eq:hartree0}) and, respectively,  (\ref{eq:hara1}) with initial data $\ph_{t=0} = \ph_{t=0}^{(\alpha)} = \ph$ ($\varphi_t$ is the maximal local solution of (\ref{eq:hartree0}) in $H^{1/2} (\bR^3)$; $\ph_t^{(\alpha)}$, on the other hand, is known to exist globally in $H^{1/2} (\bR^3)$). 
Fix $T>0$ such that 
\[\kappa := \sup_{|t| \leq T} \| \ph_t \|_{H^{1/2}}  < \infty \, . \]
Then there exists $\nu = \nu (s, T,\kappa, \| \ph \|_{H^s}) < \infty$ independent of $\alpha$ such that
\[ \sup_{|t| \leq T} \| \ph^{(\alpha)}_t \|_{H^s} \leq \nu \, \]
for all $\alpha >0$ small enough. 
\end{corollary}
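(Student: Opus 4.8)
The plan is to chain together the two results already established in this section. Proposition~\ref{closeness_in_H1/2} will supply a bound on $\|\varphi_t^{(\alpha)}\|_{H^{1/2}}$ that is \emph{uniform in $\alpha$}, and then Proposition~\ref{Prop:propagation_of_regularity}, applied to the regularized equation~(\ref{NLS_Hartree_rel}) with cutoff $\alpha>0$, will upgrade this to the desired $H^s$ bound. Since $s\geq 2$ we have in particular $\varphi\in H^2(\mathbb{R}^3)$, so the hypotheses needed to invoke the $H^{1/2}$ part of Proposition~\ref{closeness_in_H1/2} are met.

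First I would use (\ref{H1/2closeness}): from the hypothesis $\kappa=\sup_{|t|\leq T}\|\varphi_t\|_{H^{1/2}}<\infty$ together with $\varphi\in H^2$, there is a constant $D=D(T,\kappa,\|\varphi\|_{H^2})$ such that $\|\varphi_t-\varphi_t^{(\alpha)}\|_{H^{1/2}}\leq D\,\alpha^{1/2}$ for all $|t|\leq T$ and all $0<\alpha<1$. By the triangle inequality,
\begin{equation*}
\sup_{|t|\leq T}\|\varphi_t^{(\alpha)}\|_{H^{1/2}}\;\leq\;\kappa+D\,\alpha^{1/2}\;\leq\;\kappa+D\;=:\;\tka
\end{equation*}
for every $0<\alpha<1$, and the key point is that $\tka$ depends only on $T$, $\kappa$ and $\|\varphi\|_{H^2}\leq\|\varphi\|_{H^s}$, and \emph{not} on $\alpha$.

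Next I would apply Proposition~\ref{Prop:propagation_of_regularity} to $\varphi_t^{(\alpha)}$, the (global) solution of~(\ref{NLS_Hartree_rel}) with cutoff $\alpha>0$ and initial datum $\varphi\in H^s$. The bound just obtained provides exactly the control $\sup_{|t|\leq T}\|\varphi_t^{(\alpha)}\|_{H^{1/2}}\leq\tka<\infty$ required in~(\ref{assumption_boundednessH12}), so the proposition yields $\sup_{|t|\leq T}\|\varphi_t^{(\alpha)}\|_{H^s}\leq\nu(s,T,\tka,\|\varphi\|_{H^s})$; crucially, the constant in Proposition~\ref{Prop:propagation_of_regularity} is independent of $\alpha$. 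Unwinding the dependence of $\tka$ through $D$, the resulting $\nu$ depends only on $s$, $T$, $\kappa$ and $\|\varphi\|_{H^s}$, which is the claim.

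Once Propositions~\ref{Prop:propagation_of_regularity} and~\ref{closeness_in_H1/2} are in hand, the argument is essentially bookkeeping; the only genuine content — and the place requiring care — is that the intermediate $H^{1/2}$ bound on the \emph{regularized} flow must be uniform in $\alpha$, which is precisely why one needs the full strength of~(\ref{H1/2closeness}) (hence the hypothesis $\varphi\in H^2$) rather than merely the separate global existence of each $\varphi_t^{(\alpha)}$ in $H^{1/2}$. One could alternatively derive this uniform $H^{1/2}$ bound directly, by a Gronwall estimate for $\frac{\rd}{\rd t}\|\varphi_t^{(\alpha)}\|_{H^{1/2}}^2$ using Kato's inequality~(\ref{eq:kato}), but since~(\ref{H1/2closeness}) is already proved this detour is unnecessary.
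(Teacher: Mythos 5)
Your proof is correct and follows exactly the paper's own route: invoke Proposition~\ref{closeness_in_H1/2} (which is where the hypothesis $\ph\in H^2$ enters) to obtain an $\alpha$-uniform bound on $\sup_{|t|\leq T}\|\ph_t^{(\alpha)}\|_{H^{1/2}}$, then apply Proposition~\ref{Prop:propagation_of_regularity} to the regularized flow to upgrade to $H^s$. The only cosmetic difference is that the paper states the intermediate bound as $\leq 2\kappa$ for $\alpha$ small, whereas you use the explicit $\kappa+D\alpha^{1/2}\leq\kappa+D$; both give a constant independent of $\alpha$.
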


\begin{proof}
Since $s \geq 2$, Proposition \ref{closeness_in_H1/2} implies that
\[  \sup_{|t| \leq T} \| \ph_t^{(\alpha)} \|_{H^{1/2}} \leq 2 \kappa \]
for sufficiently small $\alpha >0$. The claim follows then by Proposition \ref{Prop:propagation_of_regularity}.
\end{proof}

To conclude this section, we state the generalized Leibniz rule for fractional derivatives. For a proof of this lemma, see \cite{GK}.

\begin{lemma}[Generalized Leibniz Rule]
\label{lm:leib} Suppose that $1<p<\infty$, $s \geq 0$, 
$\alpha \geq 0$, $\beta \geq 0$, and $1/p_i + 1/q_i = 1/p$ with $i=1,2$, $1< q_1 \leq \infty$, $1< p_i \leq \infty$. Then there exists a constant $c= c(p,p_1,p_2,s,\alpha,\beta) < \infty$ such that 
\[ \| (-\Delta)^{s/2} (fg) \|_p \leq c \left( \| (-\Delta)^{(s+\alpha)/2} f \|_{p_1} \| (-\Delta)^{-\alpha/2} g \|_{q_1} + \| (-\Delta)^{-\beta/2} f \|_{p_2} \| (-\Delta)^{(s+\beta)/2} g \|_{q_2} \right)\]
for all measurable functions $f,g$ for which the r.h.s. is finite.
\end{lemma}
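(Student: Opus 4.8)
The statement is the Kato--Ponce (fractional Leibniz) inequality in the flexible form that lets one redistribute the $s$ derivatives between the two factors at the cost of compensating negative powers; the plan is to prove it by Littlewood--Paley theory. I would view $(-\Delta)^{s/2}(fg)$ as the bilinear Fourier multiplier operator with symbol $\sigma(\xi,\eta)=|\xi+\eta|^{s}$, where $\xi$ is dual to the $f$-variable and $\eta$ to the $g$-variable, and split the frequency domain into the three Bony regions $|\xi|\gg|\eta|$, $|\eta|\gg|\xi|$, and $|\xi|\sim|\eta|$ by means of standard dyadic cutoffs $\Delta_j,S_j$, treating each region separately and then summing.

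On the region $|\xi|\gg|\eta|$ I would factor $|\xi+\eta|^{s}=|\xi|^{s+\alpha}\,|\eta|^{-\alpha}\, m(\xi,\eta)$ with $m(\xi,\eta)=|\xi+\eta|^{s}|\eta|^{\alpha}|\xi|^{-s-\alpha}$; since $|\xi+\eta|\sim|\xi|$ there, the symbol $m$ (restricted to the region) is homogeneous of degree $0$, smooth away from the axes, and bounded by $(|\eta|/|\xi|)^{\alpha}\le 1$ together with all its derivatives, i.e.\ a Coifman--Meyer symbol. Hence this contribution equals $T_{m}\!\big((-\Delta)^{(s+\alpha)/2}f,\,(-\Delta)^{-\alpha/2}g\big)$, and the Coifman--Meyer multiplier theorem (equivalently: the paraproduct estimate together with the Fefferman--Stein vector-valued maximal inequality and the square-function characterization $\|h\|_{W^{\sigma,r}}\simeq\|(\sum_j 2^{2j\sigma}|\Delta_j h|^{2})^{1/2}\|_{r}$, valid for $1<r<\infty$) gives the bound $\lesssim \|(-\Delta)^{(s+\alpha)/2}f\|_{p_1}\|(-\Delta)^{-\alpha/2}g\|_{q_1}$, the endpoint cases where one of the exponents equals $\infty$ being absorbed by the standard endpoint modifications together with Bernstein's inequality. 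The region $|\eta|\gg|\xi|$ is completely symmetric and produces the second term, with $\beta$ in place of $\alpha$ and the roles of $f$ and $g$ exchanged.

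The hard part will be the resonant region $|\xi|\sim|\eta|\sim 2^{j}$: there the output frequency $|\xi+\eta|$ may sit at any dyadic scale $2^{k}\le 2^{j}$, so $|\xi+\eta|^{s}|\xi|^{-s}$ fails to be smooth at $\xi+\eta=0$ and the Coifman--Meyer theorem cannot be applied directly. I would handle it by inserting an extra Littlewood--Paley decomposition $\sum_k\widetilde{\Delta}_k$ in the output variable, bounding the block at scale $2^{k}$ by Bernstein (which supplies the factor $2^{ks}$) and H\"older, and then summing $\sum_{k\le j}2^{ks}\lesssim 2^{js}$ --- convergence of this series is exactly where the hypothesis $s\ge 0$ enters (for $s=0$ the resonant part is controlled directly by H\"older). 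This reduces the resonant contribution to a square function in $j$ with weight $2^{js}$, which I would then split as $2^{js}=2^{j(s+\alpha)}\cdot 2^{-j\alpha}$ (or with $\beta$) and close by H\"older with the relation $1/p=1/p_i+1/q_i$ and the same square-function estimates as above, obtaining a bound by either of the two terms on the right. Adding the three contributions yields the inequality, with a constant depending only on $p,p_1,p_2,s,\alpha,\beta$; the complete argument is carried out in \cite{GK} and is modeled on the classical Kato--Ponce commutator estimate and the Coifman--Meyer bilinear multiplier theorem.
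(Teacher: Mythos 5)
The paper itself does not prove Lemma~\ref{lm:leib}: it simply cites \cite{GK} and moves on. Your sketch is therefore not in competition with a proof in the paper but with the reference, and its overall architecture --- Bony decomposition into high-low, low-high, and resonant pieces, paraproduct estimates, and a square-function treatment of the resonant part with $s>0$ providing geometric summability --- is the standard and correct route; indeed you end by deferring to \cite{GK} yourself.

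There is, however, one concrete misstep. In the high-low region you write $|\xi+\eta|^s=|\xi|^{s+\alpha}|\eta|^{-\alpha}\,m(\xi,\eta)$ with $m(\xi,\eta)=|\xi+\eta|^s|\eta|^{\alpha}|\xi|^{-s-\alpha}$ and assert that $m$, restricted to $|\eta|\ll|\xi|$, is ``bounded together with all its derivatives, i.e.\ a Coifman--Meyer symbol.'' This is false for general $\alpha\ge 0$: unless $\alpha$ is an even integer, the factor $|\eta|^{\alpha}$ is not smooth at $\eta=0$, and already $|\partial_{\eta}m|\sim|\eta|^{\alpha-1}|\xi|^{-\alpha}$, which is $\gg(|\xi|+|\eta|)^{-1}$ as $\eta\to 0$ whenever $0<\alpha<1$; higher derivatives are worse. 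Since the hyperplane $\eta=0$ passes straight through the high-low region, the classical Coifman--Meyer multiplier theorem does not apply to $m$ as you have set it up. The standard fix (and, in effect, what one must do) is to bypass the multiplier theorem and argue directly with Littlewood--Paley pieces: writing $g=(-\Delta)^{\alpha/2}G$ so that $S_{j-2}g=\sum_{k<j-2}2^{k\alpha}\widetilde{\Delta}_k G$, the geometric sum in $k$ (here is where $\alpha\ge 0$ is used) gives the pointwise bound $2^{-j\alpha}|S_{j-2}g|\lesssim MG$, after which the square-function characterization of $\dot W^{s+\alpha,p_1}$ for the $f$-factor, the Fefferman--Stein maximal inequality, and H\"older with $1/p=1/p_1+1/q_1$ close the estimate; the low-high piece is symmetric. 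With this replacement your treatment of the resonant region (and the caveats at $p_i=\infty$) is in order, and the argument goes through.
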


\bi

\section{Fock space representation}
\label{sec:Fock}
\setcounter{equation}{0}

In this section, we introduce a Fock-space representation of our system, and we define coherent states. The bosonic Fock space over $L^2 (\bR^3, \rd x)$ is defined by
\[ \cF = \bigoplus_{n \geq 0} L^2 (\bR^3 , \rd x)^{\otimes_s n} =
\bC \oplus \bigoplus_{n \geq 1} L^2_s (\bR^{3n}, \rd x_1 \dots \rd
x_n)\, ,
\] with the convention $L^2 (\bR^3)^{\otimes_s 0} = \bC$.
Vectors in $\cF$ are sequences $\psi = \{ \psi^{(n)} \}_{n \geq 0}$
of $n$-particle wave functions $\psi^{(n)} \in L^2_s (\bR^{3n})$.
On $\cF$, we introduce the scalar product 
\[ \langle \psi_1 , \psi_2 \rangle = \sum_{n \geq 0} \langle
\psi_1^{(n)} , \psi_2^{(n)} \rangle_{L^2 (\bR^{3n})} =
\overline{\psi_1^{(0)}} \psi_2^{(0)} + \sum_{n \geq 1} \int \rd x_1
\dots \rd x_n \, \overline{\psi_1^{(n)}} (x_1 , \dots , x_n)
\psi_2^{(n)} (x_1, \dots ,x_n) \,. \] It is simple to check that, with this inner product, 
$\cF$ is a Hilbert space. States with $N$ particles and with wave function $\psi_N \in L^2_s (\bR^{3N})$ are described on $\cF$ by the sequence $\{\psi^{(n)} \}_{ n \geq 0}$ where $\psi^{(n)} = 0$ for all $n \neq N$ and $\psi^{(N)} = \psi_N$. The vector $\{1, 0, 0, \dots \} \in \cF$ is called the vacuum, and will be denoted by $\Omega$.

The number of particles operator $\cN$ acts on $\cF$ according to $(\cN
\psi)^{(n)} = n \psi^{(n)}$ for all $n \in \bN$. Eigenvectors of $\cN$ are vectors of
the form $\{ 0, \dots, 0, \psi^{(m)}, 0,  \dots \}$ with a fixed
number of particles. 

For arbitrary $f \in L^2 (\bR^3)$ we define the
creation operator $a^* (f)$ and the annihilation operator $a(f)$ on
$\cF$ by
\begin{equation}
\begin{split}
\left(a^* (f) \psi \right)^{(n)} (x_1 , \dots ,x_n) &=
\frac{1}{\sqrt n} \sum_{j=1}^n f(x_j) \psi^{(n-1)} ( x_1, \dots,
x_{j-1}, x_{j+1},
\dots , x_n) \\
\left(a (f) \psi \right)^{(n)} (x_1 , \dots ,x_n) &= \sqrt{n+1} \int
\rd x \; \overline{f (x)} \, \psi^{(n+1)} (x, x_1, \dots ,x_n) \, .
\end{split}
\end{equation}
The operators $a^* (f)$ and $a(f)$ are unbounded, densely defined,
closed operators. The creation operator $a^*(f)$ is the adjoint of
the annihilation operator $a(f)$ (note that by definition $a(f)$ is
anti-linear in $f$), and they satisfy the canonical commutation
relations \begin{equation}\label{eq:comm} [ a(f) , a^* (g) ] =
\langle f,g \rangle_{L^2 (\bR^3)}, \qquad [ a(f) , a(g)] = [ a^*
(f), a^* (g) ] = 0 \,. \end{equation} 


We will also make use of operator valued distributions $a^*_x$ and
$a_x$ ($x \in \bR^3$), defined so that \begin{equation}\begin{split}
a^* (f) &= \int \rd x \, f(x) \, a_x^* \\ a(f) & = \int \rd x \,
\overline{f (x)} \, a_x \end{split}
\end{equation}
for every $f \in L^2 (\bR^3)$. The canonical commutation relations
assume the form \[ [ a_x , a^*_y ] = \delta (x-y) \qquad [ a_x, a_y
] = [ a^*_x , a^*_y] = 0 \, .\]

The number of particle operator, expressed through the distributions
$a_x,a^*_x$, is given, formally, by
\[ \cN = \int \rd x \, a_x^* a_x \,. \]

The following standard lemma provides some useful bounds to control creation
and annihilation operators in terms of the number of particle
operator $\cN$. 
\begin{lemma}\label{lm:a-bd}
Let $f \in L^2 (\bR^3)$. Then we have 
\begin{equation}
\begin{split}
\| a(f) \psi \| &\leq \| f \| \, \| \cN^{1/2} \psi \| \qquad \text{and }  \qquad 
\| a^* (f) \psi \| \leq \| f \| \, \| \left( \cN + 1 \right)^{1/2} \, .
\psi \|
\end{split}
\end{equation}
\end{lemma}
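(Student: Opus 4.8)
The plan is to establish the annihilation-operator bound directly, working sector by sector with the Cauchy--Schwarz inequality, and then to deduce the creation-operator bound from it using the canonical commutation relations.

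First I would fix $\psi = \{\psi^{(n)}\}_{n\geq 0} \in \cF$, which we may assume to lie in the domain of $\cN^{1/2}$ (otherwise the right-hand sides are infinite and there is nothing to prove). Writing $\| a(f)\psi\|^2 = \sum_{n\geq 0} \| (a(f)\psi)^{(n)} \|^2_{L^2(\bR^{3n})}$ and using the defining formula $(a(f)\psi)^{(n)}(x_1,\dots,x_n) = \sqrt{n+1}\int \rd x\, \overline{f(x)}\,\psi^{(n+1)}(x,x_1,\dots,x_n)$, the Cauchy--Schwarz inequality in the integration variable $x$ gives the pointwise bound
\[ \big| (a(f)\psi)^{(n)}(x_1,\dots,x_n)\big|^2 \;\leq\; (n+1)\,\|f\|^2 \int \rd x\, \big|\psi^{(n+1)}(x,x_1,\dots,x_n)\big|^2 . \]
Integrating over $x_1,\dots,x_n$, summing over $n$, and reindexing $m = n+1$ then yields
\[ \|a(f)\psi\|^2 \;\leq\; \|f\|^2 \sum_{n\geq 0}(n+1)\,\|\psi^{(n+1)}\|^2 \;=\; \|f\|^2 \sum_{m\geq 1} m\,\|\psi^{(m)}\|^2 \;\leq\; \|f\|^2\, \|\cN^{1/2}\psi\|^2 , \]
which is the first inequality.

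For the creation operator I would avoid a direct computation (which would be slightly more involved because of the symmetrization in the definition of $a^*(f)$) and instead use that $a^*(f)$ is the adjoint of $a(f)$, together with the commutation relation $[a(f),a^*(f)] = \langle f,f\rangle = \|f\|^2$ from \eqref{eq:comm}. Working first on the dense subspace of finite-particle vectors and then passing to the closure, one has
\[ \|a^*(f)\psi\|^2 \;=\; \langle\psi,\, a(f)a^*(f)\psi\rangle \;=\; \langle\psi,\, a^*(f)a(f)\psi\rangle + \|f\|^2\,\|\psi\|^2 \;=\; \|a(f)\psi\|^2 + \|f\|^2\,\|\psi\|^2 . \]
Inserting the bound already proven for $\|a(f)\psi\|^2$ and using $\|\cN^{1/2}\psi\|^2 + \|\psi\|^2 = \langle\psi, (\cN+1)\psi\rangle = \|(\cN+1)^{1/2}\psi\|^2$ then gives $\|a^*(f)\psi\| \leq \|f\|\, \|(\cN+1)^{1/2}\psi\|$, completing the argument.

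I do not expect a genuine obstacle here: the only points that require a little care are keeping track of the combinatorial normalization $\sqrt{n+1}$ in the sector-by-sector estimate and justifying the Hilbert-space manipulations used in the creation-operator bound on a suitable dense core before invoking the standard density/closure argument. Both are entirely routine.
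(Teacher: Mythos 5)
Your proof is correct, and it is the standard argument for this lemma; the paper itself simply asserts the result as ``standard'' without giving a proof, so there is nothing in the paper to compare against. Both steps you give are clean: the sector-by-sector Cauchy--Schwarz estimate for $a(f)$ is exactly how one obtains $\|a(f)\psi\| \le \|f\|\,\|\cN^{1/2}\psi\|$, and deducing the $a^*(f)$ bound from the commutation relation $[a(f),a^*(f)] = \|f\|^2$ is the cleanest route, avoiding the symmetrization bookkeeping in the explicit formula for $(a^*(f)\psi)^{(n)}$. The only cosmetic remark is that one could equally integrate the pointwise bound you wrote directly rather than phrasing it as a reindexing, but that is a matter of taste, not substance.
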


For an arbitrary $\psi \in \cF$, we define the one-particle density
$\gamma^{(1)}_{\psi}$ associated with $\psi$ as the positive trace
class operator on $L^2 (\bR^3)$ with kernel given by
\begin{equation}\label{eq:margi} \gamma^{(1)}_{\psi} (x; y) = \frac{1}{\langle \psi,
\cN \psi \rangle} \, \langle \psi, a_y^* a_x \psi \rangle\, .
\end{equation} By definition, $\gamma_{\psi}^{(1)}$ is a positive trace
class operator on $L^2 (\bR^3)$ with $\tr \, \gamma_{\psi}^{(1)}
=1$. For every $N$-particle state with wave function $\psi_N \in
L^2_s (\bR^{3N})$ (described on $\cF$ by the sequence $\{ 0, 0,
\dots, \psi_N, 0,0, \dots \}$) it is simple to see that this
definition is equivalent to the standard definition.

\medskip

For any sequence $\alpha = (\alpha_N)$, with $\alpha_N \to 0$ as $N \to \infty$, we define the Hamiltonian $\cH^\alpha_N$ on $\cF$ by $ (\cH^\alpha_N \psi)^{(n)} =
(\cH_N^\alpha)^{(n)} \, \psi^{(n)}$, with
\[ (\cH^\alpha_N)^{(n)} = \sum_{j=1}^n (1-\Delta_{x_j})^{1/2} - \frac{\lambda}{N} \sum_{i<j}^n
\frac{1}{|x_i -x_j| + \alpha_N} \, . \] Using the distributions $a_x, a^*_x$, $\cH^\alpha_N$
can be rewritten, formally, as
\begin{equation}\label{eq:ham2} \cH^\alpha_N = \int \rd x  \, a^*_x \, (1-\Delta_x)^{1/2} \, a_x - \frac{\lambda}{2N} \int \rd x \rd y \, \frac{1}{|x-y|+\alpha_N} \,  a_x^* a_y^*
a_y a_x \, . \end{equation} By definition, the Hamiltonian $\cH^\alpha_N$
leaves sectors of $\cF$ with a fixed number of particles invariant.
Moreover, it is clear that on the $N$-particle sector, $\cH^\alpha_N$
agrees with the Hamiltonian $H^\alpha_N$. 
We will study the dynamics generated by the operator
$\cH^\alpha_N$. In particular we will consider the time evolution of
coherent states, which we introduce next.

\medskip

For $f \in L^2 (\bR^3)$, we define the Weyl-operator
\begin{equation}
W(f) = \exp \left( a^* (f) - a(f) \right) 
\end{equation}
and the coherent state $\psi (f) \in \cF$ with one-particle wave
function $f$ by $\psi (f) = W(f) \Omega$. Notice that \begin{equation}\label{eq:coh} \psi (f)= W(f) \Omega =
e^{-\| f\|^2 /2} \sum_{n \geq 0} \frac{ (a^* (f))^n}{n!} \Omega  =
e^{-\| f\|^2 /2} \sum_{n \geq 0} \frac{1}{\sqrt{n!}} \, f^{\otimes n} \,,
\end{equation}
where $f^{\otimes n}$ indicates the Fock-vector $\{ 0, \dots , 0 ,f^{\otimes n}, 0, \dots \}$. This follows from
\[ \exp (a^* (f) - a (f)) = e^{-\|f \|^2/2} \exp (a^* (f)) \exp
(-a(f)) \] which is a consequence of the fact that the commutator $[
a (f) , a^* (f)] = \| f \|^2$ commutes with $a(f)$ and $a^* (f)$.
{F}rom Eq. (\ref{eq:coh}) we see that coherent states are
superpositions of states with different number of particles (the
probability of having $n$ particles in $\psi (f)$ is given by
$e^{-\| f\|^2} \| f \|^{2n}/n!$).

\medskip

In the following standard lemma we collect some important and well known
properties of Weyl operators and coherent states.
\begin{lemma}\label{lm:coh}
Let $f,g \in L^2 (\bR^3)$.
\begin{itemize}
\item[i)] The Weyl operators satisfy the relations
\[ W(f) W(g) = W(g) W(f) e^{-2i \, \text{Im} \, \langle f,g \rangle} = W(f+g) e^{-i\, \text{Im} \, \langle f,g \rangle} \,. \]
\item[ii)] $W(f)$ is a unitary operator and
\[ W(f)^* = W(f)^{-1}  = W (-f). \]
\item[iii)] We have \[ W^* (f) a_x W(f) = a_x + f(x), \qquad \text{and} \quad W^* (f) a^*_x
W(f) = a^*_x + \overline{f} (x) \, .\]
\item[iv)] {F}rom iii) we see that coherent states are eigenvectors of annihilation operators
\[ a_x \psi (f) = f(x) \psi (f)  \qquad \Rightarrow \qquad a (g)
\psi (f) = \langle g, f \rangle_{L^2} \psi (f) \, .\]
\item[v)] The expectation of the number of particles in the coherent
state $\psi (f)$ is given by $\| f\|^2$, that is
\[ \langle \psi (f), \cN \psi (f) \rangle = \| f \|^2
\, . \] Also the variance of the number of particles in $\psi (f)$
is given by $\|f \|^2$ (the distribution of $\cN$ is Poisson), that
is
\[ \langle \psi (f), \cN^2 \psi (f) \rangle - \langle \psi (f) ,
\cN \psi (f) \rangle^2 = \| f \|^2 \, .\]
\item[vi)] Coherent states are normalized but not orthogonal to each
other. In fact
\[ \langle \psi (f) , \psi (g) \rangle = e^{-\frac{1}{2}\left( \| f
\|^2 + \| g \|^2 - 2 (f,g) \right)}  \quad \Rightarrow \quad
|\langle \psi (f) , \psi (g) \rangle| = e^{-\frac{1}{2} \| f- g
\|^2} \, .\]
\end{itemize}
\end{lemma}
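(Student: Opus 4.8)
The lemma is entirely structural: every statement follows from the single fact that the commutator $[a(f),a^*(g)]=\langle f,g\rangle$ is a multiple of the identity, so that the exponentials $W(f)=e^{X(f)}$ with $X(f):=a^{*}(f)-a(f)$ obey the Baker--Campbell--Hausdorff identity $e^{X}e^{Y}=e^{X+Y}e^{[X,Y]/2}=e^{Y}e^{X}e^{[X,Y]}$ whenever $X,Y$ are such combinations of creation and annihilation operators. The plan is to establish this once and then read off (i)--(vi) in order. All the manipulations below will first be carried out on the dense subspace of Fock vectors with only finitely many nonzero components --- an invariant core on which every step is a literal operator identity --- and then extended by continuity, using that $X(f)$ is essentially skew-adjoint, so that $W(f)$ is a genuine one-parameter unitary group.

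For (i) I would compute, from (\ref{eq:comm}), $[X(f),X(g)]=\langle g,f\rangle-\langle f,g\rangle=-2i\,\text{Im}\,\langle f,g\rangle$ (a scalar), note that $X(f)+X(g)=X(f+g)$, and feed this into BCH. Part (ii) is then immediate: $X(f)^{*}=-X(f)$ gives $W(f)^{*}=W(-f)$, and (i) together with $\text{Im}\,\langle f,-f\rangle=0$ gives $W(f)W(-f)=\mathrm{id}$, so $W(f)$ is unitary with $W(f)^{-1}=W(f)^{*}=W(-f)$. For (iii) I would differentiate $t\mapsto W(tf)^{*}a_xW(tf)$; the derivative equals the constant $[a_x,X(f)]=f(x)$, so integrating from $0$ to $1$ yields $W(f)^{*}a_xW(f)=a_x+f(x)$, and the relation for $a_x^{*}$ follows by taking adjoints. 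Part (iv) then drops out by conjugating $a_x$ through $W(f)$ using (iii) and using $a_x\Omega=0$, followed by integration against $\overline{g}$. For (v) I would write $\cN=\int a_x^{*}a_x\,\rd x$ and use (iv) to get $\langle\psi(f),\cN\psi(f)\rangle=\int|f(x)|^{2}\,\rd x=\|f\|^{2}$ (normalization of $\psi(f)$ itself being part of (ii)); normal-ordering $\cN^{2}=\int\!\!\int a_x^{*}a_y^{*}a_xa_y\,\rd x\,\rd y+\cN$ and applying (iv) twice gives $\langle\psi(f),\cN^{2}\psi(f)\rangle=\|f\|^{4}+\|f\|^{2}$, hence variance $\|f\|^{2}$. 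Finally for (vi) I would write $\langle\psi(f),\psi(g)\rangle=\langle\Omega,W(-f)W(g)\Omega\rangle$, use (i) to reduce it to $e^{i\,\text{Im}\,\langle f,g\rangle}\langle\Omega,W(g-f)\Omega\rangle$, read off $\langle\Omega,W(h)\Omega\rangle=\langle\Omega,\psi(h)\rangle=e^{-\|h\|^{2}/2}$ from the expansion (\ref{eq:coh}) (only the zero-particle term survives), and expand $\|g-f\|^{2}=\|f\|^{2}+\|g\|^{2}-2\,\text{Re}\,\langle f,g\rangle$; taking absolute values removes the phase and leaves $|\langle\psi(f),\psi(g)\rangle|=e^{-\|f-g\|^{2}/2}$.

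The calculations themselves are routine; the only point I would treat with care --- the \emph{main obstacle} if one insists on complete rigor --- is the operator-domain bookkeeping behind ``BCH closes'': one must check that $X(f)=a^{*}(f)-a(f)$ is essentially skew-adjoint (equivalently, that $W(f)$ is a bona fide unitary group by Stone's theorem), that the finite-particle vectors form a common invariant core for all operators appearing above, and that the bounded-operator identities proved there extend by density to all of $\cF$. Once this is in place, every assertion of the lemma is pure algebra. As this is a standard lemma, I would either cite a textbook treatment or confine the domain argument to a brief remark.
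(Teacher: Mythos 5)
The paper does not actually prove Lemma~\ref{lm:coh} --- it is introduced as ``the following standard lemma'' collecting well-known properties of Weyl operators and coherent states, and no proof is supplied. Your argument is the standard textbook derivation, and every step checks out: the commutator $[X(f),X(g)]=\langle g,f\rangle-\langle f,g\rangle=-2i\,\mathrm{Im}\langle f,g\rangle$ is indeed a scalar, so BCH applies and gives~(i); skew-adjointness of $X(f)$ gives~(ii); differentiating $t\mapsto W(tf)^*a_xW(tf)$ gives~(iii) since $[a_x,X(f)]=f(x)$ is constant; (iv) follows from~(iii) together with $a_x\Omega=0$; (v) follows from~(iv) and the normal-ordered identity $\mathcal{N}^2=\iint a_x^*a_y^*a_xa_y\,\mathrm{d}x\,\mathrm{d}y+\mathcal{N}$; and~(vi) follows from~(i), the vacuum expectation $\langle\Omega,W(h)\Omega\rangle=e^{-\|h\|^2/2}$ read off from~(\ref{eq:coh}), and the polarization identity $\|g-f\|^2=\|f\|^2+\|g\|^2-2\,\mathrm{Re}\langle f,g\rangle$. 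Your cautionary remark about domain bookkeeping is exactly the right point of care: one needs $X(f)$ essentially skew-adjoint on the finite-particle vectors (which is standard), so that $W(f)$ is genuinely unitary and the algebraic identities proved on that core extend by continuity. One small wording slip: it is $t\mapsto W(tf)$, not $W(f)$ itself, that is a one-parameter unitary group; but this does not affect the argument.
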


\section{Time evolution of coherent states and proof of Theorem \ref{Thm:no-blowup}}
\label{sec:coh-trace}

In this section we study the time evolution of an initial coherent state $\psi (\sqrt{N} \ph) = W(\sqrt{N} \ph) \Omega$, for $\ph \in H^2 (\bR^3)$ with $\| \ph \| =1$. The expected number of particles in the coherent state $\psi (\sqrt{N} \ph)$ is $N$. Therefore, we may expect the evolution generated by $\cH_N^\alpha$ on $\psi(\sqrt{N} \ph)$ to have a mean-field character. In particular we may expect that $e^{-it \cH_N^\alpha} \psi (\sqrt{N} \ph) \simeq \psi (\sqrt{N} \ph_t)$ where $\ph_t$ solves the nonlinear Hartree equation (\ref{eq:hartree0}). We will prove that this is indeed the case, under the assumption that $\ph_t$ remains bounded in $H^{1/2} (\bR^3)$ in the time interval $[-T,T]$. 

\begin{theorem}\label{thm:cohN}
Fix $\ph \in H^2 (\bR^3)$ with $\| \ph \| =1$ and an arbitrary sequence $\alpha_N >0$ such that $\alpha_N \to 0$ as $N\to \infty$. Let $\psi (N,t) = e^{-it \cH_N^\alpha} W(\sqrt{N} \ph) \Omega$ be the evolution of the initial coherent state $W(\sqrt{N} \ph)\Omega$ generated by the Hamiltonian (\ref{eq:ham2}). Denote by $\Gamma^{(1)}_{N,t}$ the one-particle reduced density associated with $\psi (N,t)$.

\medskip

Let $\ph_t$ be the solution of the nonlinear Hartree equation (\ref{eq:hartree0}), 
with initial data $\ph_{t=0} = \ph$. Fix $T >0$ so that 
\begin{equation}\label{eq:bdpht} \kappa := \sup_{|t| \leq T} \| \ph_t \|_{H^{1/2}} < \infty \,.\end{equation} 
Then there exists $C= C (T,\kappa, \| \ph \|_{H^2}) < \infty$ such that 
\[ \tr \, \left| \Gamma^{(1)}_{N,t} - |\ph_t \rangle \langle \ph_t| \right| \leq C \left( \frac{1}{N} + \alpha_N \right) \] for all $t \in \bR$ with $|t| \leq T$.
\end{theorem}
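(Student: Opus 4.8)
The plan is to run the coherent-states scheme of \cite{RS} in the present semirelativistic, $N$-dependent-cutoff setting, feeding in the bounds of Section \ref{sec:NLS} as the only new ingredient. Since the Hartree equation (\ref{eq:hartree0}) carries no cutoff while the many-body dynamics does, the first step is to compare with the solution $\ph_t^{(\alpha_N)}$ of the regularized equation (\ref{eq:hara1}) with cutoff $\alpha_N$: under the hypothesis (\ref{eq:bdpht}) and $\ph\in H^2(\bR^3)$, Proposition \ref{closeness_in_H1/2} and Corollary \ref{cor:reg} provide, for $N$ large, a constant $\nu=\nu(T,\ka,\|\ph\|_{H^2})$ with $\sup_{|t|\le T}\|\ph_t^{(\alpha_N)}\|_{H^2}\le\nu$ and $\sup_{|t|\le T}\|\ph_t-\ph_t^{(\alpha_N)}\|_2\le C\alpha_N$, so that $\tr\,\big||\ph_t\ra\la\ph_t|-|\ph_t^{(\alpha_N)}\ra\la\ph_t^{(\alpha_N)}|\big|\le 2C\alpha_N$. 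It is then enough to prove $\tr\,\big|\Gamma^{(1)}_{N,t}-|\ph_t^{(\alpha_N)}\ra\la\ph_t^{(\alpha_N)}|\big|\le C/N$, which accounts for the first summand of the claimed rate.

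Next I would pass to the fluctuation dynamics. Writing $e^{-it\cH_N^\alpha}W(\sqrt N\ph)\Omega=W(\sqrt N\,\ph_t^{(\alpha_N)})\,\cU_N(t;0)\Omega$ introduces the two-parameter unitary family $\cU_N(t;s)=W^*(\sqrt N\,\ph_t^{(\alpha_N)})\,e^{-i(t-s)\cH_N^\alpha}\,W(\sqrt N\,\ph_s^{(\alpha_N)})$, governed (as first observed in \cite{H} and later \cite{GV}) by $i\partial_t\cU_N(t;s)=\cL_N(t)\cU_N(t;s)$ with $\cL_N(t)=\big(i\partial_tW^*(\sqrt N\,\ph_t^{(\alpha_N)})\big)W(\sqrt N\,\ph_t^{(\alpha_N)})+W^*(\sqrt N\,\ph_t^{(\alpha_N)})\cH_N^\alpha W(\sqrt N\,\ph_t^{(\alpha_N)})$. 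Inserting $W^*(\sqrt N f)a_xW(\sqrt N f)=a_x+\sqrt N f(x)$ (Lemma \ref{lm:coh}) into (\ref{eq:ham2}) one sees that, modulo a scalar, $\cL_N(t)$ is the sum of: a part quadratic in $a,a^*$, namely $\int a_x^*(1-\Delta_x)^{1/2}a_x$ together with number-conserving and pair-creation/annihilation terms whose kernels are built from $\ph_t^{(\alpha_N)}$ and $(|x-y|+\alpha_N)^{-1}$; a parity-breaking cubic part of size $N^{-1/2}$; and a number-conserving quartic part of size $N^{-1}$. The terms linear in $a,a^*$ cancel exactly because $\ph_t^{(\alpha_N)}$ solves (\ref{eq:hara1}); the problem is thereby reduced to controlling how the vacuum spreads under $\cU_N(t;0)$.

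Set $\Omega_{N,t}:=\cU_N(t;0)\Omega$. The first a priori input is the number-of-fluctuations bound: differentiating $\la\Omega_{N,t},(\cN+1)^k\Omega_{N,t}\ra$, using that the number-conserving parts of $\cL_N(t)$ commute with $\cN$, and estimating the remaining commutators via Lemma \ref{lm:a-bd} and Kato's inequality (\ref{eq:kato}) in the form $\sup_x\int\frac{|\ph_t^{(\alpha_N)}(y)|^2}{|x-y|+\alpha_N}\,\rd y\lesssim\|\ph_t^{(\alpha_N)}\|_{H^{1/2}}^2\le\nu^2$, one obtains $\big|\frac{d}{dt}\la\Omega_{N,t},(\cN+1)^k\Omega_{N,t}\ra\big|\lesssim_k\la\Omega_{N,t},(\cN+1)^k\Omega_{N,t}\ra$, whence, since $\Omega_{N,0}=\Omega$, Gronwall gives $\sup_{|t|\le T}\la\Omega_{N,t},(\cN+1)^k\Omega_{N,t}\ra\le C_k$ uniformly in $N$, for every fixed $k$. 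The second input concerns $u_t(x):=\la\Omega_{N,t},a_x\Omega_{N,t}\ra$: decomposing $\Omega_{N,t}$ into its components of even and odd particle-number parity, the odd component is generated from the even vacuum only through the parity-breaking cubic part of $\cL_N(t)$, so — by Duhamel and the moment bounds just obtained — it has $\cN$-weighted norm of size $O(N^{-1/2})$; since $a_x$ reverses parity, this forces $\|u_t\|_2\le C_T N^{-1/2}$. Finally, $\cH_N^\alpha$ commutes with $\cN$, so $\la\psi(N,t),\cN\psi(N,t)\ra=N$, and (\ref{eq:margi}) together with the shift property of $W(\sqrt N\,\ph_t^{(\alpha_N)})$ yields
\[ \Gamma^{(1)}_{N,t}(x;y)=\ph_t^{(\alpha_N)}(x)\overline{\ph_t^{(\alpha_N)}(y)}+\frac{1}{\sqrt N}\Big(\ph_t^{(\alpha_N)}(x)\overline{u_t(y)}+\overline{\ph_t^{(\alpha_N)}(y)}\,u_t(x)\Big)+\frac{1}{N}\la\Omega_{N,t},a_y^*a_x\Omega_{N,t}\ra\,. \]
The rank-two middle term has trace norm $\le 2N^{-1/2}\|\ph_t^{(\alpha_N)}\|_2\|u_t\|_2\le C_T/N$, while the last term is a nonnegative operator of trace $N^{-1}\la\Omega_{N,t},\cN\Omega_{N,t}\ra\le C/N$; combining with the first step completes the proof.

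I expect the main obstacle to be the uniform-in-$N$ number-of-fluctuations bound: the singular Newtonian kernel has to be absorbed, which works precisely because Kato's inequality turns it into the $H^{1/2}$- (indeed $H^1$-) norm of $\ph_t^{(\alpha_N)}$, bounded uniformly in the cutoff by Corollary \ref{cor:reg}, so the relativistic commutator estimates follow those of \cite{RS} essentially verbatim. A secondary, slightly delicate point is that straightforward bookkeeping only delivers the rate $N^{-1/2}$; the improvement to $N^{-1}$ rests on the separate parity argument for $\|u_t\|_2$, which exploits that the vacuum has vanishing one-point function and that the parity-breaking piece of $\cL_N(t)$ carries an explicit factor $N^{-1/2}$.
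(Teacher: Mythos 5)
Your overall architecture matches the paper's: reduce to the regularized Hartree flow $\ph_t^{(\alpha_N)}$ via Proposition~\ref{closeness_in_H1/2}, conjugate by the time-dependent Weyl operator to pass to the fluctuation dynamics $\cU_N(t;s)$, control $\cN$-moments along $\cU_N$, and exploit parity of the vacuum to upgrade the naive rate $N^{-1/2}$ to $N^{-1}$. Your parity argument for $\|u_t\|_2$, phrased as a Duhamel bound on the odd-parity component of $\Omega_{N,t}$, is a repackaging of the paper's comparison $\cU_N$ versus $\wt{\cU}_N$ (where $\wt{\cU}_N$ drops the parity-breaking cubic term and hence annihilates the odd component of the evolved vacuum identically); both routes ultimately reduce to the same estimate $\| \cL_N^{\mathrm{cub}}(s)\psi\| \lesssim N^{-1/2}\|(\cN+1)^{3/2}\psi\|$. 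Your final step (rank-two cross-term plus nonnegative quadratic term, traced directly) is a valid alternative to the paper's route through the Hilbert--Schmidt norm and the rank-one trick; both give the same conclusion.

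There is, however, a genuine gap in the first a priori input. You claim that the commutator estimates deliver
\[
\Big|\tfrac{d}{dt}\la\Omega_{N,t},(\cN+1)^k\Omega_{N,t}\ra\Big|\lesssim_k\la\Omega_{N,t},(\cN+1)^k\Omega_{N,t}\ra,
\]
so that Gronwall closes. This differential inequality does not hold with the same power $k$ on both sides. The pair-creation/annihilation quadratic term is fine (one derivative loss, compensated by $k$-independent constants), but the cubic term $\cL_N^{\mathrm{cub}}$ shifts $\cN$ by one and, after the commutator with $(\cN+1)^k$, produces a contribution of size $N^{-1/2}\la (\cN+1)^{k+1/2}\ra$; the $N^{-1/2}$ prefactor is not enough to absorb the extra half power of $\cN$ unless one knows a priori that $\cN\lesssim N$ on the relevant state. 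Gronwall therefore does not close at the level of a single $k$: the inequality cascades to ever higher moments. This is exactly why the paper (following \cite{RS}) introduces the cutoffed dynamics $\cW_N(t;s)$ with generator $\cM_N(t)$ in (\ref{eq:cMN}), in which the cubic term carries ${\bf 1}_M(\cN)$ with $M=\const\cdot N$; on the range of the cutoff the cubic commutator is bounded by a $k$-independent multiple of $\la(\cN+1)^k\ra$, Gronwall closes (Lemma~\ref{lm:3.5}), and one then separately compares $\cU_N$ to $\cW_N$. The conclusion you want — $\sup_{|t|\le T}\la\Omega_{N,t},(\cN+1)^k\Omega_{N,t}\ra\le C_k$ — is precisely Proposition~\ref{prop:nbd} applied to $\psi=\Omega$, and it is true, but your stated justification of it by a naive Gronwall would not survive a careful execution. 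You need to incorporate the cutoff comparison (or an equivalent device) before the rest of your argument, which is otherwise sound, can stand.
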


\begin{proof} The proof of Theorem \ref{thm:cohN} is analogous to the proof of Theorem 3.1 in \cite{RS}. For completeness (and because some of these arguments will be used later on), we explain here the main steps. 

Since $|\ph_t\rangle \langle \ph_t|$ is a rank one projection, it is enough to show that
\[ \left\| \Gamma^{(1)}_{N,t} - |\ph_t \rangle \langle \ph_t| \right\|_{\text{HS}} \leq C \left(\frac{1}{N} + \alpha_N \right) \] 
where $\| A \|^2_{\text{HS}} = \tr \, A^* A$ is the Hilbert-Schmidt norm of $A$. This follows from the remark\footnote{We learned this argument from R. Seiringer} that the operator $\Gamma_{N,t}^{(1)} - |\ph_t \rangle \langle \ph_t|$ can only have one negative eigenvalue. Sine the trace vanishes, the absolute value of the negative eigenvalue must be the same as the sum of all positive eigenvalues. For this reason, the trace norm is twice the operator norm, which is of course bounded by the Hilbert-Schmid norm.

Suppose now that $\ph^{(\alpha_N)}_t$ denote the solution of the regularized Hartree equation 
\begin{equation}\label{eq:haraN} i\partial_t \ph_t^{(\alpha_N)} = \sqrt{1-\Delta} \, \ph_t^{(\alpha_N)} - \lambda \left( \frac{1}{|.|+\alpha_N} * |\ph^{(\alpha_N)}_t|^2 \right) \ph_t^{(\alpha_N)} \end{equation} with initial data $\ph_{t=0}^{(\alpha_N)} = \ph$. By Proposition \ref{closeness_in_H1/2} (see, in particular, (\ref{distance_in_L2})), and since \[ \left\|  |\ph^{(\alpha_N)}_t \rangle \langle \ph^{(\alpha_N)}_t| - |\ph_t \rangle \langle \ph_t| \right\|_{\text{HS}} \leq 2 \| \ph_t - \ph_t^{(\alpha_N)} \| \, , \] it is enough to prove that 
\begin{equation}\label{eq:HS} \| \Gamma^{(1)}_{N,t} - |\ph^{(\alpha_N)}_t \rangle \langle \ph^{(\alpha_N)}_t| \|_{\text{HS}} \leq \frac{C}{N} \end{equation} 
for a constant $C$ depending on $T,\kappa,\|\ph \|_{H^2}$.  In order to prove (\ref{eq:HS}) we write the difference of the kernels of $\Gamma^{(1)}_{N,t}$ and $|\ph^{(\alpha_N)}_t \rangle \langle \ph^{(\alpha_N)}_t|$ as (compare with (3.4) in \cite{RS})
\begin{equation}\label{eq:gamma}
\begin{split}
\Gamma^{(1)}_{N,t} (x;y) - &\ph^{(\alpha_N)}_t (x) \overline{\ph}^{(\alpha_N)}_t (y) \\ = \; &
\frac{\overline{\ph}^{(\alpha_N)}_t (y)}{\sqrt{N}} \left\langle \Omega, W^*
(\sqrt{N} \ph) e^{i\cH_N t} (a_x - \sqrt{N} \ph^{(\alpha_N)}_t(x)) e^{-i\cH_N t}
W(\sqrt{N} \ph) \Omega \right\rangle \\ &+ \frac{\ph^{(\alpha_N)}_t
(x)}{\sqrt{N}} \left\langle \Omega, W^* (\sqrt{N} \ph) e^{i\cH_N t}
(a_y^* - \sqrt{N} \, \overline{\ph}^{(\alpha_N)}_t (y)) e^{-i\cH_N t} W(\sqrt{N}
\ph) \Omega \right\rangle \\ &+ \frac{1}{N} \left\langle \Omega, W^*
(\sqrt{N} \ph) e^{i\cH_N t} ( a_y^* - \sqrt{N} \, \overline{\ph}^{(\alpha_N)}_t (y))
(a_x - \sqrt{N} \ph^{(\alpha_N)}_t (x)) e^{-i \cH_N t} W(\sqrt{N} \ph ) \Omega
\right\rangle \,.
\end{split}
\end{equation}
Then, following (3.5)-(3.8) in \cite{RS}, we can define the two-parameter group of unitary transformations $\cU_N (t;s)$ by the Schr\"odinger equation 
\begin{equation}\label{eq:cUN} i
\partial_t \cU_N (t;s) = \cL_N (t) \cU_N (t;s) \qquad \text{and}
\quad \cU_N (s;s)= 1 \end{equation} with the generator
\begin{equation}\label{eq:cLN}
\begin{split}
\cL_N (t) = & \int \rd x \, a^*_x \, (1-\Delta_x)^{1/2} a_x -\lambda \int \rd x
\, \left(\frac{1}{|.|+\alpha_N} *|\ph^{(\alpha_N)}_t|^2 \right) (x) \, a^*_x a_x \\ & -\lambda
\int \rd x \rd y \, \frac{1}{|x-y|+\alpha_N} \, \overline{\ph_t}^{(\alpha_N)} (x) \ph^{(\alpha_N)}_t (y) a^*_y a_x \\
&- \frac{\lambda}{2} \int \rd x \rd y \, \frac{1}{|x-y| +\alpha_N} \, \left( \ph^{(\alpha_N)}_t (x) \ph^{(\alpha_N)}_t (y)
a^*_x a^*_y +
\overline{\ph_t}^{(\alpha_N)} (x) \overline{\ph_t}^{(\alpha_N)} (y) a_x a_y \right) \\
&-\frac{\lambda}{\sqrt{N}} \int \rd x \rd y \, \frac{1}{|x-y| + \alpha_N} \, a_x^* \left(
\ph^{(\alpha_N)}_t (y)
a^*_y  + \overline{\ph_t}^{(\alpha_N)} (y) a_y \right) a_x \\
&-\frac{\lambda}{2N} \int \rd x \rd y \, \frac{1}{|x-y| +\alpha_N} \, a^*_x a^*_y a_y a_x \, .
\end{split}
\end{equation}
It was observed by Hepp in \cite{H} and then by Ginibre-Velo in \cite{GV} that 
\begin{equation}\label{eq:def-U} \cU^*_N (t;0) \, a_x \, \cU_N (t;0) = W^* (\sqrt{N} \ph) e^{i\cH_N^\alpha t} (a_x - \sqrt{N} \ph^{(\alpha_N)}_t (x) ) e^{-i\cH^\alpha_N t} W (\sqrt{N} \ph) \, . \end{equation}
Therefore it follows from (\ref{eq:gamma}) that 
\begin{equation}\label{eq:kerGa}
\begin{split}
\Gamma^{(1)}_{N,t} (x,y) - \ph^{(\alpha_N)}_t (x) \overline{\ph}^{(\alpha_N)}_t (y) = \;
&\frac{1}{N} \left\langle \Omega, \cU_N (t;0)^* a_y^* a_x \cU_N
(t;0) \Omega \right\rangle \\ &+ \frac{\ph^{(\alpha_N)}_t (x)}{\sqrt{N}}
\left\langle \Omega,\cU_N (t;0)^* a^*_y \cU_N (t;0) \Omega \right\rangle \\
&+ \frac{\overline{\ph}^{(\alpha_N)}_t (y)}{\sqrt{N}} \left\langle \Omega,\cU_N
(t;0)^* a_x \cU_N (t;0) \Omega \right\rangle\, .
\end{split}
\end{equation}

To get an optimal bound on the error, we also introduce, similarly to (3.9) and (3.10) in \cite{RS}, the modified evolution $\wt{\cU}_N (t;s)$ defined by the equation 
\begin{equation}\label{eq:wtcUN} i
\partial_t \wt \cU_N (t;s) = \wt\cL_N (t)\, \wt\cU_N (t;s) \qquad
\text{with} \quad \wt\cU_N (s;s) = 1 \end{equation} with the
time-dependent generator
\begin{equation}
\begin{split}
\wt \cL_N (t) = & \int \rd x \, a^*_x (1-\Delta_x)^{1/2} \, a_x -\lambda \int
\rd x \, \left( \frac{1}{|.|+\alpha_N}*|\ph^{(\alpha_N)}_t|^2 \right) (x) \, a^*_x a_x \\ &-\lambda \int \rd x
\rd y \, \frac{1}{|x-y|+\alpha_N} \, \overline{\ph}_t^{(\alpha_N)} (x) \ph^{(\alpha_N)}_t (y) a^*_y a_x \\
&- \frac{\lambda}{2} \int \rd x \rd y \, \frac{1}{|x-y|+\alpha_N} \, \left( \ph^{(\alpha_N)}_t (x) \ph^{(\alpha_N)}_t (y)
a^*_x a^*_y +
\overline{\ph}_t^{(\alpha_N)} (x) \overline{\ph}_t^{(\alpha_N)} (y) a_x a_y \right) \\
&-\frac{\lambda}{2N} \int \rd x \rd y \, \frac{1}{|x-y|+\alpha_N} \, a^*_x a^*_y a_y a_x \, .
\end{split}
\end{equation}
Since $\cU_N$ commutes with the parity operator $(-1)^{\cN}$, we have
\[ \left\langle \Omega, \wt \cU_N (t;0)^*  a_y \, \wt \cU_N (t;0) \Omega
\right\rangle = \left\langle \Omega,\wt \cU_N (t;0)^* a_x^*\,  \wt
\cU_N (t;0) \Omega \right\rangle = 0 \, .\]
Therefore, we can write 
\begin{equation*}
\begin{split}
&\Gamma^{(1)}_{N,t} (x;y) - \ph^{(\alpha_N)}_t (x) \overline{\ph}^{(\alpha_N)}_t (y) \\ & =
\frac{1}{N} \langle \Omega, \cU_N (t;0)^* a_y^* a_x \cU_N (t;0)
\Omega \rangle \\ &\;\; + \frac{\ph^{(\alpha_N)}_t (x)}{\sqrt{N}} \left(
\left\langle \Omega,\cU^*_N (t;0) a_y^* \left(\cU_N (t;0) - \wt
\cU_N (t;0)\right) \Omega \right\rangle + \left\langle \Omega,
\left(\cU^*_N (t;0) - \wt \cU^*_N (t;0) \right) a_y^* \wt \cU_N
(t;0) \Omega
\right\rangle \right)\\
&\;\; + \frac{\overline{\ph}^{(\alpha_N)}_t (y)}{\sqrt{N}} \left( \left\langle
\Omega,\cU^*_N (t;0) a_x \left( \cU_N (t;0) - \wt \cU_N (t;0)
\right) \Omega \right\rangle + \left\langle \Omega, \left(\cU^*_N
(t;0) -\wt \cU^*_N (t;0) \right) a_x \wt \cU_N (t;0) \Omega
\right\rangle \right)
\end{split}
\end{equation*}
which leads, after multiplying with a Hilbert-Schmidt observable $J$ and taking the trace, to the bound
\begin{equation}\label{eq:trJdiff}
\begin{split}
\Big| \tr \, J \left( \Gamma^{(1)}_{N,t} - |\ph^{(\alpha_N)}_t \rangle
\langle \ph^{(\alpha_N)}_t| \right) \Big|
\leq \; &\frac{\| J \|_{\text{HS}}}{N} \; \langle \cU_N (t;0) \Omega, \cN \cU_N (t;0) \Omega \rangle \\
&+ \frac{2 \| J \|_{\text{HS}}}{\sqrt{N}} \| (\cU_N (t;0) - \wt
\cU_N (t;0)) \Omega \|
\, \| (\cN+1)^{1/2} \cU_N (t;0) \Omega \| \\
&+ \frac{2 \| J \|_{\text{HS}}}{\sqrt{N}} \|(\cU_N (t;0) - \wt \cU_N
(t;0))\Omega\| \, \| (\cN+1)^{1/2} \wt \cU_N (t;0) \Omega \|\,.
\end{split}
\end{equation}
To conclude the proof of the theorem, we combine the last bound with Proposition  \ref{prop:nbd}, Proposition~\ref{prop:ntildebd}, and Proposition~\ref{prop:U-tdU} below.\end{proof}

The next proposition shows that expectations of powers of the number of particle operator, evolved with respect to the fluctuation dynamics $\cU_N$, stay bounded up to time $T$. Note that to prove Theorem \ref{thm:cohN}, it would be enough to have (\ref{eq:nbd}) for $k=1$ and for $\psi = \Omega$; for later use, however, it is useful to consider arbitrary $k \in \bN$ and $\psi \in \cF$). 

\begin{proposition}\label{prop:nbd}
Suppose that the assumptions of Theorem \ref{thm:cohN} are satisfied. Suppose moreover that the unitary evolution $\cU_N (t;s)$ is defined as in (\ref{eq:cUN}). Then, for every $k \in \bN$, there exists $C = C(k,T,\kappa,\|\ph \|_{H^2})$ such that
\begin{equation}\label{eq:nbd} \langle \cU_N (t;0) \psi , \cN^k \, \cU_N (t;0) \psi \rangle \leq C \langle \psi, (\cN + 1)^{2k+2} \psi \rangle
\end{equation} for every $\psi \in \cF$ and every $t \in \bR$ with $|t| \leq T$.
\end{proposition}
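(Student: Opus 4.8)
The plan is to control the moments of $\cN$ along the fluctuation dynamics by a Gronwall argument, in the spirit of the corresponding estimate in \cite{RS}, replacing the boundedness of the interaction (used there) by the $\alpha_N$-uniform regularity of $\ph_t^{(\alpha_N)}$ supplied by Corollary \ref{cor:reg}. Write $\psi_t:=\cU_N(t;0)\psi$ and, for $j\geq 0$, $g_j(t):=\langle\psi_t,(\cN+1)^j\psi_t\rangle$. For fixed $N$ and fixed $\alpha_N>0$ the generator $\cL_N(t)$ of (\ref{eq:cLN}) is bounded below on every $n$-particle sector, so $\cU_N(t;s)$ is a well-defined unitary group; moreover, a standard approximation argument (cutting the number of particles, cf. \cite{RS}, or using the identity $\cU_N(t;0)=W^*(\sqrt{N}\ph_t^{(\alpha_N)})\,e^{-i\cH_N^\alpha t}\,W(\sqrt{N}\ph)$ which follows from (\ref{eq:def-U}) together with the fact that $\cH_N^\alpha$ conserves particle number) shows that $g_j(t)$ is finite and differentiable on $[-T,T]$, which legitimises the computations below.

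Differentiating, $\tfrac{d}{dt}g_j(t)=\langle\psi_t,i[\cL_N(t),(\cN+1)^j]\psi_t\rangle$. Of the six terms of $\cL_N(t)$ in (\ref{eq:cLN}), the kinetic term, the two ``mean-field'' terms containing $(|\cdot|+\alpha_N)^{-1}*|\ph_t^{(\alpha_N)}|^2$ and $(|x-y|+\alpha_N)^{-1}\overline{\ph_t^{(\alpha_N)}}(x)\ph_t^{(\alpha_N)}(y)$, and the genuine two-body term $\tfrac{\lambda}{2N}\int(|x-y|+\alpha_N)^{-1}a_x^*a_y^*a_ya_x$ all commute with $\cN$, hence with $(\cN+1)^j$, and drop out. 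Only two terms survive in the commutator: the ``off-diagonal'' quadratic term $-\tfrac{\lambda}{2}\int(|x-y|+\alpha_N)^{-1}(\ph_t^{(\alpha_N)}(x)\ph_t^{(\alpha_N)}(y)a_x^*a_y^*+\mathrm{h.c.})$, which shifts the particle number by $\pm2$, and the cubic term $-\tfrac{\lambda}{\sqrt N}\int(|x-y|+\alpha_N)^{-1}a_x^*(\ph_t^{(\alpha_N)}(y)a_y^*+\overline{\ph_t^{(\alpha_N)}}(y)a_y)a_x$, which shifts it by $\pm1$.

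The core of the proof is to bound these two operators in terms of powers of $\cN$, \emph{uniformly in $N$ and $\alpha_N$}. For the cubic term the decisive point is that, with $h_x(y):=(|x-y|+\alpha_N)^{-1}\ph_t^{(\alpha_N)}(y)$, Hardy's inequality (\ref{eq:hardy}) gives $\sup_x\|h_x\|_{L^2}\lesssim\|\ph_t^{(\alpha_N)}\|_{H^1}$ uniformly in $\alpha_N$; writing the term as $\int\rd x\,a_x^*\,a^*(h_x)\,a_x$ (and the adjoint), pulling the operators $a_x$ out and invoking Lemma \ref{lm:a-bd} yields the form bound $\|(\text{cubic})\psi\|\lesssim N^{-1/2}\,\|\ph_t^{(\alpha_N)}\|_{H^1}\,\|(\cN+1)^{3/2}\psi\|$. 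Similarly, using $(|x-y|+\alpha_N)^{-1}\leq|x-y|^{-1}$, Kato's inequality (\ref{eq:kato}), Hardy's inequality, the Hardy--Littlewood--Sobolev inequality and the Sobolev embedding $\|\ph\|_{L^3}\lesssim\|\ph\|_{H^{1/2}}$, the quadratic term obeys $\|(\text{quad})\psi\|\lesssim\|\ph_t^{(\alpha_N)}\|_{H^1}^2\,\|(\cN+1)\psi\|$. All the norms $\|\ph_t^{(\alpha_N)}\|_{H^s}$ with $1/2\leq s\leq 2$ that appear are bounded, by Corollary \ref{cor:reg} (here the assumption $\ph\in H^2(\bR^3)$ enters), by one constant $\nu=\nu(T,\kappa,\|\ph\|_{H^2})$, uniformly in $|t|\leq T$ and in small $\alpha_N>0$.

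Finally I would close the estimate not for $g_j$ directly but for the generating function $\widetilde g_j(t):=\langle\psi_t,(\cN+1)^j\,\big(1+(\cN+1)/N\big)^{-1}\,\psi_t\rangle$. The weight $w_j(x)=(x+1)^j(1+(x+1)/N)^{-1}$ is tailored so that $(x+1)\,|w_j'(x)|/w_j(x)$ and $(x+1)^{1/2}N^{-1/2}\,|w_j'(x)|/w_j(x)$ are $O(j)$ uniformly in $N$ and in $x\geq 0$; hence, inserting the two operator bounds above into $\tfrac{d}{dt}\widetilde g_j(t)=\langle\psi_t,i[\cL_N(t),w_j(\cN)]\psi_t\rangle$ (the commutators produce $w_j(\cN\pm1)-w_j(\cN)$ and $w_j(\cN\pm2)-w_j(\cN)$, which are controlled by $w_j'$), one obtains $\tfrac{d}{dt}\widetilde g_j(t)\leq C_\nu\,j\,\widetilde g_j(t)$, so $\widetilde g_j(t)\leq e^{C_\nu jT}\langle\psi,(\cN+1)^j\psi\rangle$ for $|t|\leq T$. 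Since $g_k(t)=\widetilde g_k(t)+N^{-1}\widetilde g_{k+1}(t)$ and, by the interpolation $\langle\psi,(\cN+1)^{k+1}\psi\rangle\leq\langle\psi,(\cN+1)^{2k+2}\psi\rangle$, the bound (\ref{eq:nbd}) follows. I expect the main obstacle to be the uniform-in-$(N,\alpha_N)$ estimate of the cubic, $N^{-1/2}$ term against the singular cutoff kernel $(|x-y|+\alpha_N)^{-1}$ — obtaining a bound with the correct power $(\cN+1)^{3/2}$ together with the factor $N^{-1/2}$, which is exactly what makes the generating-function identity close; by comparison the quadratic term and the Gronwall bookkeeping are routine.
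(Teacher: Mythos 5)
Your structural observations are right: the kinetic term, the two mean-field quadratic terms, and the quartic term all commute with $\cN$ and drop out of the commutator; the $\alpha_N$-uniform bound $\sup_{|t|\leq T}\|\ph_t^{(\alpha_N)}\|_{H^1}\leq\nu$ from Corollary~\ref{cor:reg} is the only input needed from the nonlinear analysis; the Hardy/Kato estimates $\sup_x\|h_x\|_{L^2}\lesssim\|\ph_t^{(\alpha_N)}\|_{H^1}$ correctly give the $\alpha_N$-uniform operator bounds $\|(\text{cubic})\psi\|\lesssim N^{-1/2}\nu\|(\cN+1)^{3/2}\psi\|$ and $\|(\text{quad})\psi\|\lesssim\nu^2\|(\cN+1)\psi\|$. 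The paper's own proof, however, does not close a direct Gronwall estimate with a weight: it follows \cite{RS} and introduces the auxiliary dynamics $\cW_N$ of (\ref{eq:cWN}) with cutoff $\mathbf{1}_M(\cN)$, $M=\mathrm{const}\cdot N$, inserted into the cubic term, proves the moment bound for $\cW_N$ by Gronwall (Lemma~\ref{lm:3.5}), and then controls $\|(\cU_N-\cW_N)\Omega\|$. So you are taking a genuinely different route — but it has a gap.

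The gap is in the claim that the weight $w_j(x)=(x+1)^j\big(1+(x+1)/N\big)^{-1}$ closes the Gronwall. The cubic term raises $\cN$ by one and its matrix element from the $n$-sector to the $(n+1)$-sector is of size $\nu\,(n+1)^{3/2}/\sqrt{N}$ (this is exactly what your form bound $N^{-1/2}\|(\cN+1)^{3/2}\psi\|$ says, and it matches the paper's estimate (\ref{eq:K2t-6})). To absorb this in $\tfrac{d}{dt}\widetilde g_j$, after the symmetric distribution one needs
\[
\frac{(x+1)^{3/2}}{\sqrt N}\,\frac{|w_j'(x)|}{w_j(x)}\;\lesssim\;j\qquad\text{uniformly in $x\geq 0$,}
\]
but the property you actually verify for your weight — $(x+1)^{1/2}N^{-1/2}|w_j'|/w_j\lesssim j$ — is weaker by a factor $(x+1)$. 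For your $w_j$ one computes $\frac{(x+1)^{3/2}}{\sqrt N}\frac{|w_j'(x)|}{w_j(x)}\lesssim j\sqrt{(x+1)/N}$, which is unbounded for $x\gg N$; concretely, for a state $\psi$ concentrated on the sector $n\sim N^2$ the cubic commutator contribution is of order $j\sqrt{N}\,\widetilde g_j$, not $j\,\widetilde g_j$, and the differential inequality does not close. This failure is not a defect of your specific choice: if an increasing weight $w$ satisfies $\frac{(x+1)^{3/2}}{\sqrt N}\frac{w'(x)}{w(x)}\leq C$ for all $x\geq 0$, then integrating gives $\log w(x)-\log w(0)\leq C\sqrt N\int_0^\infty(s+1)^{-3/2}\rd s\lesssim\sqrt N$, i.e.\ $w$ is uniformly bounded (by $e^{C'\sqrt N}$) and cannot reproduce the polynomial moment $(\cN+1)^k$ on the large-$\cN$ sectors. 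This is precisely the obstruction that the truncated dynamics $\cW_N$ of (\ref{eq:cWN})--(\ref{eq:cMN}) is designed to evade: the cutoff $\mathbf{1}_{M}(\cN)$ with $M\sim N$ enforces $\sqrt{\cN/N}\lesssim 1$ in the cubic term, so that along $\cW_N$ the cubic term is $\lesssim(\cN+1)$ and the Gronwall for $\cN^k$ closes; the remaining work, comparing $\cU_N$ with $\cW_N$, exploits that the vacuum (or a state with $\langle(\cN+1)^{2k+2}\rangle$ controlled) populates the sectors $\cN>M$ only with tiny probability under $\cW_N$. To repair your argument you would need to either adopt the truncation, or at minimum show a separate a priori estimate bounding $\langle\psi_t,(\cN+1)w_j(\cN)\psi_t\rangle$ by $N\langle\psi_t,w_j(\cN)\psi_t\rangle$ along the flow, which is essentially the same control the truncation provides.

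Two smaller remarks: after the (hypothetical) Gronwall you would still need to pass from $\widetilde g_k(t)$ back to $\langle\psi_t,(\cN+1)^k\psi_t\rangle$; your identity $g_k=\widetilde g_k+N^{-1}\widetilde g_{k+1}$ is fine, but then you must bound $\widetilde g_{k+1}(0)$ by $\langle\psi,(\cN+1)^{k+1}\psi\rangle$, which is consistent with the stated $(\cN+1)^{2k+2}$ loss, so the bookkeeping at that stage is acceptable. And your well-definedness remark using $\cU_N(t;0)=W^*(\sqrt N\ph_t^{(\alpha_N)})e^{-i\cH_N^\alpha t}W(\sqrt N\ph)$ is the right way to justify the formal differentiation.
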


A similar estimate is also needed to control the growth of the expectation of the number of particle operator with respect to the modified dynamics $\wt{\cU}_N$ introduced in 
(\ref{eq:wtcUN}). 

\begin{proposition} \label{prop:ntildebd}
Suppose that the assumption of Theorem \ref{thm:cohN} are satisfied. Suppose moreover that the unitary evolution $\wt{\cU}_N (t;s)$ is defined as in (\ref{eq:wtcUN}). 
Then there exists $C = C(T,\kappa,\|\ph \|_{H^2})$ such that
\[ \langle \wt{\cU}_N (t;0) \Omega , \cN^3 \, \wt{\cU}_N (t;0) \Omega \rangle \leq C 
\] for every $\psi \in \cF$ and every $t \in \bR$ with $|t| \leq T$. 
\end{proposition}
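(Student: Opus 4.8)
Since $\langle\psi,\cN^3\psi\rangle\le\langle\psi,(\cN+1)^3\psi\rangle$ for every $\psi$, it suffices to bound $f(t):=\langle\wt\cU_N(t;0)\Omega,(\cN+1)^3\,\wt\cU_N(t;0)\Omega\rangle$, for which $f(0)=1$. Writing $\psi_t:=\wt\cU_N(t;0)\Omega$, the plan is a Gronwall estimate for
\begin{equation*}
\frac{\rd}{\rd t}f(t)\;=\;i\,\big\langle\psi_t,\big[\wt\cL_N(t),(\cN+1)^3\big]\psi_t\big\rangle\,,
\end{equation*}
where the formal differentiation is justified in the usual way (e.g.\ by replacing $(\cN+1)^3$ with the bounded operator $(\cN+1)^3(1+\delta\cN)^{-3}$, deriving the bound uniformly in $\delta>0$, and letting $\delta\to0$).

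First I would note that every term of $\wt\cL_N(t)$ other than the pair creation/annihilation term is number-conserving --- this includes the kinetic term, the two quadratic terms built from $\ph_t^{(\alpha_N)}$, and the quartic term $-\tfrac{\lambda}{2N}\int\rd x\,\rd y\,(|x-y|+\alpha_N)^{-1}a_x^*a_y^*a_ya_x$ --- and hence commutes with $(\cN+1)^3$. With $B^*:=-\tfrac{\lambda}{2}\int\rd x\,\rd y\,\tfrac{\ph_t^{(\alpha_N)}(x)\ph_t^{(\alpha_N)}(y)}{|x-y|+\alpha_N}\,a_x^*a_y^*$ and $B:=(B^*)^*$ one therefore has $[\wt\cL_N(t),(\cN+1)^3]=[B^*+B,(\cN+1)^3]$. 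Since $B^*$ raises and $B$ lowers the particle number by $2$, the identities $g(\cN)B^*=B^*g(\cN+2)$ and $g(\cN)B=Bg(\cN-2)$ give
\begin{equation*}
\big[\wt\cL_N(t),(\cN+1)^3\big]\;=\;B^*p(\cN)+B\,q(\cN)\,,
\end{equation*}
with $p(\cN)=(\cN+1)^3-(\cN+3)^3$ and $q(\cN)=(\cN+1)^3-(\cN-1)^3$, which are polynomials of degree $2$ in $\cN$; in particular $|p(\cN)|,|q(\cN)|\le C(\cN+1)^2$.

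It then remains to estimate $\langle\psi_t,B^*p(\cN)\psi_t\rangle$ and $\langle\psi_t,Bq(\cN)\psi_t\rangle$. Here I would use, first, the input from Section \ref{sec:NLS}: the kernel $K(x,y)=\ph_t^{(\alpha_N)}(x)\ph_t^{(\alpha_N)}(y)(|x-y|+\alpha_N)^{-1}$ satisfies, by Hardy's inequality (\ref{eq:hardy}), mass conservation $\|\ph_t^{(\alpha_N)}\|_2=\|\ph\|_2=1$, and Corollary \ref{cor:reg} with $s=2$ (valid for $N$ large),
\begin{equation*}
\|K\|_{L^2(\bR^6)}^2\;\le\;\int\rd x\,\rd y\,\frac{|\ph_t^{(\alpha_N)}(x)|^2\,|\ph_t^{(\alpha_N)}(y)|^2}{|x-y|^2}\;\le\;4\,\|\ph_t^{(\alpha_N)}\|_{H^1}^2\;\le\;C\nu^2\,,
\end{equation*}
with $\nu=\nu(T,\kappa,\|\ph\|_{H^2})$; and, second, the standard bounds $\|B\phi\|\le C\|K\|_{L^2(\bR^6)}\|\cN\phi\|$ and $\|B^*\phi\|\le C\|K\|_{L^2(\bR^6)}\|(\cN+1)\phi\|$ (in the spirit of Lemma \ref{lm:a-bd}). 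Distributing the powers of $(\cN+1)$ asymmetrically and using $(\cN+1)^{1/2}B=B(\cN-1)^{1/2}$ together with $|p(\cN)|\le C(\cN+1)^2$,
\begin{equation*}
\begin{split}
\big|\langle\psi_t,B^*p(\cN)\psi_t\rangle\big|\;&=\;\big|\langle(\cN+1)^{1/2}B\psi_t,\,(\cN+1)^{-1/2}p(\cN)\psi_t\rangle\big|\;\le\;\big\|B(\cN-1)^{1/2}\psi_t\big\|\,\big\|(\cN+1)^{3/2}\psi_t\big\| \\
&\le\;C\nu\,\big\|(\cN+1)^{3/2}\psi_t\big\|^2\;=\;C\nu\,f(t)\,,
\end{split}
\end{equation*}
where the last line uses $\|B(\cN-1)^{1/2}\psi_t\|\le C\nu\|\cN(\cN-1)^{1/2}\psi_t\|\le C\nu\|(\cN+1)^{3/2}\psi_t\|$ (since $n(n-1)^{1/2}\le(n+1)^{3/2}$). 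The term $\langle\psi_t,Bq(\cN)\psi_t\rangle=\langle B^*\psi_t,q(\cN)\psi_t\rangle$ is bounded symmetrically, now moving a factor $(\cN+1)^{1/2}$ onto $B^*$ (via $(\cN+1)^{1/2}B^*=B^*(\cN+3)^{1/2}$) and using $\|B^*\phi\|\le C\nu\|(\cN+1)\phi\|$. Altogether $f'(t)\le C\nu\,f(t)$, so Gronwall's lemma yields $f(t)\le e^{C\nu|t|}\le C(T,\kappa,\|\ph\|_{H^2})$ for $|t|\le T$.

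The only real obstacle is the asymmetric splitting of the weights $(\cN+1)^{\pm1/2}$ in the second-to-last display: a naive Cauchy--Schwarz --- with no splitting, or a symmetric $(\cN+1)^{1/2}\cdot(\cN+1)^{1/2}$ one --- would leave an uncontrolled $\langle\psi_t,(\cN+1)^4\psi_t\rangle$ on the right-hand side, and the Gronwall inequality would not close. The argument works precisely because $B^\#$ changes the particle number by exactly $2$, which allows one to redistribute the powers of $(\cN+1)$ so that exactly $3/2$ sits on each side of the inner product.
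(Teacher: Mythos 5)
Your proof is correct and follows essentially the same route the paper takes: the paper simply cites the proof of Lemma 3.8 in \cite{RS}, which is precisely the Gronwall argument you spell out — differentiate $\langle\psi_t,(\cN+1)^3\psi_t\rangle$, observe that only the pair-creation/annihilation term fails to commute with $(\cN+1)^3$, bound the kernel in $L^2(\bR^6)$ via Hardy's inequality together with the uniform $H^1$ bound on $\ph_t^{(\alpha_N)}$ from Corollary~\ref{cor:reg}, and redistribute powers of $(\cN+1)$ across the Cauchy--Schwarz inequality using the $\pm2$ shift. The only point to phrase with slightly more care is the identity $(\cN+1)^{1/2}B=B(\cN-1)^{1/2}$, which should be understood on the range of ${\bf 1}(\cN\geq 2)$ (where $B$ is supported anyway), but this is cosmetic and does not affect the estimate.
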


Finally, we need to show that, in the second and in the third term on the r.h.s. of (\ref{eq:trJdiff}), it is possible to extract one more factor $N^{-1/2}$ from the difference between the two evolutions.

\begin{proposition} \label{prop:U-tdU}
Suppose that the assumption of Theorem \ref{thm:cohN} are satisfied. Suppose moreover that the unitary evolutions $\cU_N (t;s)$ and $\wt{\cU_N} (t;s)$ are defined as in (\ref{eq:cUN}) and in (\ref{eq:wtcUN}). Then there exists $C = C(T,\kappa,\|\ph \|_{H^2})$ such that
\[ \left\| \left( \cU_N (t;0) - \wt{\cU}_N (t;0) \right) \Omega \right\| \leq \frac{C}{\sqrt{N}}\,.\] 
\end{proposition}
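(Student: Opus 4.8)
The plan is to compare the two evolutions by Duhamel's formula, exactly as in \cite{RS}. The generators $\cL_N(t)$ and $\wt\cL_N(t)$ differ only by the cubic term
\[
\cL_N^{(3)}(t)\;:=\;-\frac{\lambda}{\sqrt N}\int \rd x\,\rd y\;\frac{1}{|x-y|+\alpha_N}\,a_x^*\Big(\ph_t^{(\alpha_N)}(y)\,a_y^*+\overline{\ph^{(\alpha_N)}_t}(y)\,a_y\Big)a_x\,,
\]
which is present in $\cL_N$ but absent in $\wt\cL_N$. Differentiating $s\mapsto \cU_N(t;s)\,\wt\cU_N(s;0)$, using $i\partial_s\cU_N(t;s)=-\cU_N(t;s)\cL_N(s)$ and $i\partial_s\wt\cU_N(s;0)=\wt\cL_N(s)\wt\cU_N(s;0)$, and integrating from $0$ to $t$, one obtains
\[
\big(\wt\cU_N(t;0)-\cU_N(t;0)\big)\Omega\;=\;i\int_0^t\rd s\;\cU_N(t;s)\,\cL_N^{(3)}(s)\,\wt\cU_N(s;0)\,\Omega\,,
\]
so that, by unitarity of $\cU_N(t;s)$, $\big\|\big(\cU_N(t;0)-\wt\cU_N(t;0)\big)\Omega\big\|\leq\int_0^{|t|}\rd s\,\big\|\cL_N^{(3)}(s)\,\wt\cU_N(s;0)\,\Omega\big\|$.

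The core estimate is then
\[
\big\|\cL_N^{(3)}(s)\,\psi\big\|\;\lesssim\;\frac{1}{\sqrt N}\,\big\|\ph_s^{(\alpha_N)}\big\|_{H^1}\,\big\|(\cN+1)^{3/2}\psi\big\|\qquad\qquad(\star)
\]
for all $\psi\in\cF$. To prove $(\star)$ I would treat the two summands of $\cL_N^{(3)}$ separately; for instance, writing $\xi:=\int\rd x\,\rd y\;k_s(x,y)\,a_x^*a_y^*a_x\,\psi$ with $k_s(x,y):=\ph_s^{(\alpha_N)}(y)/(|x-y|+\alpha_N)$, one expands $\|\xi\|^2=\int\overline{k_s(x,y)}\,k_s(x',y')\,\langle\psi,a_x^*a_ya_x\,a_{x'}^*a_{y'}^*a_{x'}\psi\rangle$ and normal-orders the mixed product $a_ya_xa_{x'}^*a_{y'}^*$. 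The fully contracted (leading) contribution is bounded, after a Cauchy--Schwarz in the variables $x,x'$, by $\big(\sup_x\|k_s(x,\cdot)\|_2\big)^2\,\langle\psi,\cN(\cN-1)(\cN-2)\psi\rangle$, while the terms produced by the $\delta$-contractions carry strictly fewer powers of $\cN$ and coefficients of the same type. Since
\[
\sup_x\|k_s(x,\cdot)\|_2^2\;=\;\sup_x\int\rd y\;\frac{|\ph_s^{(\alpha_N)}(y)|^2}{(|x-y|+\alpha_N)^2}\;\leq\;\sup_x\int\rd y\;\frac{|\ph_s^{(\alpha_N)}(y)|^2}{|x-y|^2}\;\leq\;4\,\big\|\ph_s^{(\alpha_N)}\big\|_{H^1}^2
\]
by Hardy's inequality \eqref{eq:hardy} (and all coefficients appearing in the lower-order terms reduce to integrals of the form $\int|\ph_s^{(\alpha_N)}(y)|^2(|x-y|+\alpha_N)^{-1}(|x'-y|+\alpha_N)^{-1}\rd y$, again bounded by $\|\ph_s^{(\alpha_N)}\|_{H^1}^2$ uniformly in $\alpha_N,x,x'$ via the same inequality), one gets $\|\xi\|\lesssim\|\ph_s^{(\alpha_N)}\|_{H^1}\,\|(\cN+1)^{3/2}\psi\|$; the summand with $a_x^*a_ya_x$ is estimated in the same way, which yields $(\star)$.

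Finally I would assemble the pieces. Corollary \ref{cor:reg}, applied with $s=2$ (legitimate since $\ph\in H^2$), gives $\sup_{|s|\leq T}\|\ph_s^{(\alpha_N)}\|_{H^1}\leq\nu$ uniformly in $N$ for $\alpha_N$ small; Proposition \ref{prop:ntildebd}, together with $\|\wt\cU_N(s;0)\Omega\|=1$ and $(\cN+1)^3\lesssim\cN^3+1$, gives $\|(\cN+1)^{3/2}\wt\cU_N(s;0)\Omega\|\leq C$ uniformly in $|s|\leq T$ and $N$. Inserting $\psi=\wt\cU_N(s;0)\Omega$ into $(\star)$ gives $\|\cL_N^{(3)}(s)\wt\cU_N(s;0)\Omega\|\leq C\nu N^{-1/2}$, and integrating over $s\in[0,|t|]\subset[-T,T]$ produces $\|(\cU_N(t;0)-\wt\cU_N(t;0))\Omega\|\leq C\,T\,\nu\,N^{-1/2}$, which is the claim. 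I expect the normal-ordering bookkeeping behind $(\star)$ --- in particular the verification that no contraction term generates a stray negative power of $\alpha_N$, which is precisely what forces the use of Hardy's inequality rather than the trivial bound $(|x-y|+\alpha_N)^{-1}\leq\alpha_N^{-1}$ --- to be the only real obstacle; once $(\star)$, the uniform $H^1$-bound on $\ph_s^{(\alpha_N)}$, and the $\cN^3$-bound on $\wt\cU_N(s;0)\Omega$ are in hand, the conclusion follows immediately from the Duhamel identity above.
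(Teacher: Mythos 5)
Your proposal is correct and follows essentially the same route as the paper, which (citing Lemma 3.9 of \cite{RS}) compares the two evolutions via Duhamel's formula, bounds the cubic term $\cL_N(s)-\wt\cL_N(s)$ in terms of $N^{-1/2}\,\|\ph_s^{(\alpha_N)}\|_{H^1}\,\|(\cN+1)^{3/2}\psi\|$ using Hardy's inequality (rather than the crude $\alpha_N^{-1}$ bound), and then invokes Proposition~\ref{prop:ntildebd} together with the uniform $H^1$-bound on $\ph_s^{(\alpha_N)}$ from Corollary~\ref{cor:reg}. Your remark that the whole argument hinges on avoiding a stray $\alpha_N^{-1}$ in the cubic-term estimate correctly identifies the one non-trivial point of the adaptation from \cite{RS} to the regularized interaction.
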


The proof of these three propositions can be obtained in the exact same way as the proof of Proposition 3.3, Lemma 3.8 and Lemma 3.9 in \cite{RS}. This follows by the observation that, on the one hand, the kinetic energy (given by the second quantization of the dispersion $(1-\Delta)^{1/2}$), which is the only term in the generators $\cL_N (t)$ and $\wt{\cL}_N (t)$ which differs from the generators in \cite{RS}, commutes with the number of particle operator (and with all its powers). The other important remark is that by the assumptions in Theorem \ref{thm:cohN} (in particular, by (\ref{eq:bdpht})), and by Corollary \ref{cor:reg}, there exists $\nu= \nu (\kappa,T,\| \ph \|_{H^2}) < \infty$ independent of $\alpha_N$ such that
\[ \sup_{|t| \leq T} \| \ph_t^{(\alpha_N)} \|_{H^1} \leq \nu\,. \]
The uniform bound on the $H^1$-norm of $\ph_t$ is the only property of $\ph_t$ that is used in the proof of Proposition 3.3, Lemma 3.8 and Lemma 3.9 of \cite{RS}.

\medskip

Note that the main idea in the proof of Proposition \ref{prop:nbd} is the introduction of yet another modified dynamics $\cW_N (t;s)$ defined by 
\begin{equation}\label{eq:cWN}
i\partial_t \cW_N (t;s) = \cM_N (t) \cW_N (t;s) \qquad \text{ with } \cW_N (s;s) = 1 \qquad \text{for all $s \in \bR$,} \end{equation}
with the time-dependent generator
\begin{equation}\label{eq:cMN}
\begin{split}
\cM_N (t) = & \int \rd x \, a^*_x \, (1-\Delta_x)^{1/2} a_x -\lambda \int \rd x
\, \left(\frac{1}{|.|+\alpha_N} *|\ph^{(\alpha_N)}_t|^2 \right) (x) \, a^*_x a_x \\ & -\lambda
\int \rd x \rd y \, \frac{1}{|x-y|+\alpha_N} \, \overline{\ph_t}^{(\alpha_N)} (x) \ph^{(\alpha_N)}_t (y) a^*_y a_x \\
&- \frac{\lambda}{2} \int \rd x \rd y \, \frac{1}{|x-y| +\alpha_N} \, \left( \ph^{(\alpha_N)}_t (x) \ph^{(\alpha_N)}_t (y)
a^*_x a^*_y +
\overline{\ph_t}^{(\alpha_N)} (x) \overline{\ph_t}^{(\alpha_N)} (y) a_x a_y \right) \\
&-\frac{\lambda}{\sqrt{N}} \int \rd x \rd y \, \frac{1}{|x-y| + \alpha_N} \, a_x^* \left(
\ph^{(\alpha_N)}_t (y)
\,  {\bf 1}_{M} (\cN) \, a^*_y  + \overline{\ph_t}^{(\alpha_N)} (y) \, a_y \,  {\bf 1}_{M} (\cN)\right) a_x \\
&-\frac{\lambda}{2N} \int \rd x \rd y \, \frac{1}{|x-y| +\alpha_N} \, a^*_x a^*_y a_y a_x \, .
\end{split}
\end{equation}
where, for every $M >0$, ${\bf 1}_M (s) = 1$ for $s\leq M$, and ${\bf 1}_M (s) = 0$ if $s >M$ (${\bf 1}_M$ is the characteristic function of $(-\infty,M]$). At the end $M$ is chosen as $M=\const \cdot N$. One of the main steps in the proof of Proposition \ref{prop:nbd} is a bound for the growth of the expectation of the number of particles w.r.t. the cutoffed dynamics $\cW_N (t;s)$. We state this result explicitly, because similar ideas are used also in the next section for the proof of Theorem \ref{thm:coh-en}. The proof of the next lemma is analogous to the proof of Lemma 3.5 in \cite{RS}.

\begin{lemma}\label{lm:3.5}
Suppose that the assumptions of Proposition \ref{prop:nbd} are satisfied. Let $\cW_N$ be defined as the unitary evolution (\ref{eq:cWN}) with generator (\ref{eq:cMN}) and with $M \leq \const \cdot N$. Then, for every $k \in \bN$ there exists $C=C(\const, k,T,\kappa, \| \ph \|_{H^2})$ such that
\[ \langle \cW_N (t;0) \Omega, \cN^k \, \cW_N (t;0) \Omega \rangle \leq C \] for all $t \in \bR$ with $|t| \leq T$. 
\end{lemma}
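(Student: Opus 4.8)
The plan is to adapt the proof of Lemma 3.5 in \cite{RS}; the only structural difference here is the relativistic kinetic term, and the first thing I would check is that it causes no extra trouble. Writing $f_k(t):=\langle \cW_N(t;0)\Omega,\cN^k\cW_N(t;0)\Omega\rangle$, differentiating, and using $i\partial_t\cW_N(t;0)=\cM_N(t)\cW_N(t;0)$, I obtain $\frac{d}{dt}f_k(t)=i\langle\cW_N(t;0)\Omega,[\cN^k,\cM_N(t)]\cW_N(t;0)\Omega\rangle$. The kinetic term $\int\rd x\,a^*_x(1-\Delta_x)^{1/2}a_x$, the two particle-number conserving quadratic terms (the one proportional to $a^*_xa_x$ and the one proportional to $a^*_ya_x$), and the quartic term $a^*_xa^*_ya_ya_x$ all commute with $\cN$, hence with $\cN^k$; in particular the relativistic dispersion drops out of the commutator exactly as $-\Delta$ does in \cite{RS}. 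Thus only two pieces of $\cM_N(t)$ survive in $[\cN^k,\cM_N(t)]$: the pair term $Q_N(t)=-\tfrac{\lambda}{2}\int\rd x\rd y\,\frac{1}{|x-y|+\alpha_N}\big(\ph^{(\alpha_N)}_t(x)\ph^{(\alpha_N)}_t(y)a^*_xa^*_y+\mathrm{h.c.}\big)$ and the cutoffed cubic term $C_N(t)=-\tfrac{\lambda}{\sqrt N}\int\rd x\rd y\,\frac{1}{|x-y|+\alpha_N}a^*_x\big(\ph^{(\alpha_N)}_t(y){\bf 1}_M(\cN)a^*_y+\overline{\ph^{(\alpha_N)}_t}(y)a_y{\bf 1}_M(\cN)\big)a_x$.

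For the pair term, $[\cN^k,Q_N(t)]$ equals $Q_N(t)$ times a polynomial of degree $k-1$ in $\cN$, since $Q_N(t)$ shifts $\cN$ by $\pm2$. I would estimate $\langle\psi,Q_N(t)\cN^{k-1}\psi\rangle$ by splitting the $\cN$-powers symmetrically around $Q_N(t)$ (using $B\cN^m=(\cN-2)^mB$ for the pair-creation part $B$), and then bounding $B$ as a quadratic form: integrating one variable into $\ph^{(\alpha_N)}_t$ and applying Hardy's inequality \eqref{eq:hardy} gives $\big\|\frac{\ph^{(\alpha_N)}_t}{|x-\cdot|+\alpha_N}\big\|_{L^2}\lesssim\|\ph^{(\alpha_N)}_t\|_{H^1}$ uniformly in $x$, so by Lemma \ref{lm:a-bd} one gets $|\langle\xi_1,B\xi_2\rangle|\lesssim\|\ph^{(\alpha_N)}_t\|_{H^1}\|\cN^{1/2}\xi_1\|\,\|(\cN+1)^{1/2}\xi_2\|$. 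Taking $\xi_1,\xi_2$ to be the two halves of $\cN^{k-1}\psi$ yields $|\langle\psi,[\cN^k,Q_N(t)]\psi\rangle|\lesssim\|\ph^{(\alpha_N)}_t\|_{H^1}\langle\psi,(\cN+1)^k\psi\rangle$. By Corollary \ref{cor:reg} (applicable since $\ph\in H^2$) there is $\nu=\nu(\kappa,T,\|\ph\|_{H^2})$, independent of $\alpha_N$, with $\sup_{|t|\le T}\|\ph^{(\alpha_N)}_t\|_{H^1}\le\nu$, so this contribution to $\frac{d}{dt}f_k(t)$ is at most $C\langle\cW_N(t;0)\Omega,(\cN+1)^k\cW_N(t;0)\Omega\rangle$.

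The cubic term is where the cutoff $M\le\const\cdot N$ is essential, and this is the step I expect to be the main obstacle. Here $[\cN^k,C_N(t)]$ equals $C_N(t)$ times a polynomial of degree $k-1$ in $\cN$ (now $C_N(t)$ shifts $\cN$ by $\pm1$), and estimating $\langle\psi,C_N(t)\cN^{k-1}\psi\rangle$ by the same scheme produces three creation/annihilation operators and therefore, a priori, one power of $(\cN+1)^{1/2}$ too many relative to the $N^{-1/2}$ in front. The role of ${\bf 1}_M(\cN)$, placed exactly as in \eqref{eq:cMN}, is that the intermediate state on which the outermost creation/annihilation operator acts lies in the sectors with at most $M$ particles, so that extra $(\cN+1)^{1/2}$ may be replaced by $M^{1/2}\le(\const N)^{1/2}$; this cancels the $N^{-1/2}$ and, using again $\sup_{|t|\le T}\|\ph^{(\alpha_N)}_t\|_{H^1}\le\nu$ together with H\"older and Hardy on the potential, leaves a contribution bounded by $C\langle\cW_N(t;0)\Omega,(\cN+1)^k\cW_N(t;0)\Omega\rangle$ with $C=C(\const,k,T,\kappa,\|\ph\|_{H^2})$. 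Carefully tracking this power counting and checking that all constants are uniform in $\alpha_N$ is the bulk of the work; conceptually it is identical to \cite{RS}.

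Finally, combining the two estimates gives $\frac{d}{dt}f_k(t)\le C\langle\cW_N(t;0)\Omega,(\cN+1)^k\cW_N(t;0)\Omega\rangle\le C'\big(f_k(t)+f_{k-1}(t)+\cdots+f_0(t)\big)$. Since $\cW_N(t;0)$ is unitary, $f_0(t)=1$, and $f_j(0)=0$ for $j\ge1$; I would therefore argue by induction on $k$: assuming $f_0,\dots,f_{k-1}$ are already bounded on $[-T,T]$ by constants of the stated form, Gronwall's inequality applied to $f_k$ gives $\sup_{|t|\le T}f_k(t)\le C(\const,k,T,\kappa,\|\ph\|_{H^2})$, which is the claim.
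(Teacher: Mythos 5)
Your proposal is correct and follows essentially the same route the paper takes: the paper simply refers the reader to the proof of Lemma 3.5 in \cite{RS}, pointing out exactly the two observations you make explicit, namely that the relativistic dispersion commutes with $\cN$ (so it drops out of $[\cN^k,\cM_N(t)]$ just as $-\Delta$ does) and that Corollary \ref{cor:reg} supplies the uniform-in-$\alpha_N$ bound $\sup_{|t|\le T}\|\ph_t^{(\alpha_N)}\|_{H^1}\le\nu$, which via Hardy's inequality is the only input needed in the pair and cutoffed cubic estimates. Your accounting of the cutoff (replacing the stray $(\cN+1)^{1/2}$ by $M^{1/2}\lesssim N^{1/2}$ to cancel the $N^{-1/2}$) and the closing Gronwall-plus-induction step are exactly the [RS] argument.
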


\bigskip

Proposition \ref{prop:nbd} also allows us to conclude the proof of Theorem \ref{Thm:no-blowup}, by writing the factorized initial data as linear combinations of coherent states. We follow here the proof of Theorem 1.1 in \cite{RS} very closely; for this reason, we only discuss the main ideas.

\begin{proof}[Proof of Theorem \ref{Thm:no-blowup}]
For $\psi_N = \ph^{\otimes N} \in L^2_s (\bR^{3N})$, we write (see Lemma 4.1 in \cite{RS}) 
\begin{equation} \label{eq:repr-coh1}
\{ 0, 0, \dots, 0, \psi_N, 0, 0, \dots \} = \frac{(a^* (\ph))^N}{\sqrt{N!}} \Omega = 
d_N \int_0^{2\pi} \frac{\rd
\theta}{2\pi} \; e^{i\theta N} W ( e^{-i\theta} \sqrt{N} \ph) \Omega
\end{equation} with the constant
\begin{equation}
d_N = \frac{\sqrt{N!}}{N^{N/2} e^{-N/2}} \simeq N^{1/4}\,.
\end{equation}
The kernel of the one-particle reduced density $\gamma^{(1)}_{N,t}$ associated with $\psi_{N,t} = e^{-iH_N^\alpha t} \psi_N$ is therefore given by
\begin{equation}
\gamma^{(1)}_{N,t} (x;y) =  \frac{d^2_N}{N} \int_0^{2\pi} \frac{\rd \theta_1}{2\pi} \int_0^{2\pi}
\frac{\rd \theta_2}{2\pi} \, e^{-i\theta_1 N} e^{i\theta_2 N}
\langle W(e^{-i\theta_1} \sqrt{N} \ph) \Omega, a_y^* (t) a_x (t)
W(e^{-i\theta_2} \sqrt{N} \ph) \Omega \rangle
\end{equation}
where we introduced the notation $a_x (t) = e^{i\cH_N t} a_x e^{-i\cH_N t}$. As in (4.5)-(4.7) of \cite{RS}, we find 
\begin{equation}\label{eq:repr-coh2}
\begin{split}
\gamma^{(1)}_{N,t} (x;y) - \overline{\ph}^{(\alpha_N)}_t (y) \ph^{(\alpha_N)}_t (x) = \; & \frac{d^2_N}{N} \int_0^{2\pi}
\frac{\rd \theta_1}{2\pi} \int_0^{2\pi} \frac{\rd \theta_2}{2\pi} \,
e^{-i\theta_1 N} e^{i\theta_2 N}  \left\langle \cU_N^{\theta_1} (t;0) \Omega, \, a_y^* a_x \, \cU_N^{\theta_2} (t;0) \Omega \right\rangle \\ &
+ \frac{d_N \ph^{(\alpha_N)}_t (x)}{\sqrt{N}} \int_0^{2\pi} \frac{\rd
\theta}{2\pi} \left \langle\cU_N^{\theta} (t;0) \Omega, a_y^* \, \ph^{\otimes (N-1)} \right\rangle \\
&+\frac{d_N \overline{\ph}^{(\alpha_N)}_t (y)}{\sqrt{N}} \int_0^{2\pi} \frac{\rd
\theta}{2\pi} \, \left\langle \cU_N^{\theta} (t;0) \Omega ,  a_x \, \ph^{\otimes (N-1)} \right\rangle
\end{split}
\end{equation}
where $\ph^{\otimes (N-1)}$ actually denotes the vector $\{ 0, \dots , 0,  \ph^{\otimes (N-1)}, 0 ,\dots \} \in \cF$ and where $\cU_N^{\theta} (t;0)$ is defined as the unitary evolution in (\ref{eq:cUN}), with $\ph^{(\alpha_N)}_t$ replaced by $e^{i\theta} \ph^{(\alpha_N)}_t$ (it is important to observe that if $\ph^{(\alpha_N)}_t$ solves the nonlinear Hartree equation, also $e^{i\theta} \ph^{(\alpha_N)}_t$ is a solution). Therefore, we conclude that 
\begin{equation}\label{eq:thmconc}
\begin{split}
\int \rd x \rd y \;& |\gamma^{(1)}_{N,t} (x;y) - \ph^{(\alpha_N)}_t (x)
\overline{\ph}^{(\alpha_N)}_t (y) \Big|^2  \\ \leq \; & 2\frac{d^4_N}{N^2} \,
\int_0^{2\pi} \frac{\rd \theta_1}{2\pi} \int_0^{2\pi} \frac{\rd
\theta_2}{2\pi} \; \| \cN^{1/2} \cU^{\theta_1}_{N} (t;0) \Omega \|^2
\, \| \cN^{1/2} \cU^{\theta_2}_{N} (t;0) \Omega \|^2 + \frac{4}{N} \int \rd x |f_N (x)|^2
\end{split}
\end{equation}
with 
\begin{equation} f_N (x) = d^2_N \int_0^{2\pi} \frac{\rd
\theta}{2\pi} \, \left\langle \cU_N^{\theta} (t;0) \Omega ,  a_x \, \ph^{\otimes (N-1)} \right\rangle \, . 
\end{equation}
Proceeding exactly as in Lemma 4.2 of \cite{RS}, we find a constant $C=C(T,\kappa,\|\ph \|_{H^2})$ such that
\[ \int dx \, |f_N (x)|^2  \leq C \] 
uniformly in $N$. Proposition \ref{prop:nbd} implies therefore that there exists $C=C(T,\kappa,\|\ph \|_{H^2})$ with 
\[ \left\| \gamma^{(1)}_{N,t} - |\ph_t^{(\alpha)} \rangle \langle \ph_t^{(\alpha)}| \right\|_{\text{HS}} \leq \frac{C}{\sqrt{N}} \] 
Therefore, it follows from Proposition \ref{closeness_in_H1/2} (in particular, (\ref{distance_in_L2})) that 
\[ \left\| \gamma^{(1)}_{N,t} - |\ph_t \rangle \langle \ph_t| \right\|_{\text{HS}} \leq C \left( \frac{1}{\sqrt{N}} +\alpha_N \right) \, . \] 
Since $|\ph_t \rangle \langle \ph_t|$ is a rank one projection, and since $\tr \, \gamma^{(1)}_{N,t} = \tr \, |\ph_t \rangle \langle \ph_t| = 1$, the trace norm of the difference $\gamma^{(1)}_{N,t} - |\ph_t \rangle \langle \ph_t|$ is at most two times its Hilbert-Schmidt norm. This completes the proof of the theorem.
\end{proof}

\section{Convergence in energy}
\label{sec:conv-en}

We study again the evolution of initial coherent states in the Fock space. This time, we establish the convergence of the one-particle reduced density towards the solution of the Hartree equation in the energy norm. As a consequence, we obtain a proof of Theorem \ref{thm:conv-en}. 

\medskip

\begin{theorem}\label{thm:coh-en}
Fix $\ph \in H^2 (\bR^3)$ with $\| \ph \| =1$ and a sequence $\alpha_N >0$ such that $\alpha_N \to 0$ and $N^\beta \alpha_N \to \infty$ as $N \to \infty$, for an appropriate $\beta >0$. Let $\psi (N,t) = e^{-it \cH_N^\alpha} W(\sqrt{N} \ph) \Omega$ be the evolution of the initial coherent state $W(\sqrt{N} \ph)\Omega$ generated by the Hamiltonian (\ref{eq:ham2}). Denote by $\Gamma^{(1)}_{N,t}$ the one-particle reduced density associated with $\psi (N,t)$.

\medskip

Let $\ph_t$ be the solution of the nonlinear Hartree equation (\ref{eq:hartree0}) 
with initial data $\ph_{t=0} = \ph$. Fix $T >0$ so that 
\begin{equation}\label{eq:bdpht-T} \kappa := \sup_{|t| \leq T} \| \ph_t \|_{H^{1/2}} < \infty \,.\end{equation} 
Then there exists $C= C (\beta, T,\kappa, \| \ph \|_{H^2}) < \infty$ such that 
\begin{equation}\label{eq:coh-en}  \left\| (1-\Delta)^{1/4} \left(\Gamma^{(1)}_{N,t} - |\ph_t \rangle \langle \ph_t| \right) (1-\Delta)^{1/4} \right\|_{\text{HS}}   \leq C \left( \frac{1}{\sqrt{N}} + \alpha^{1/2}_N \right) \end{equation} for all $t \in \bR$ with $|t| \leq T$.
\end{theorem}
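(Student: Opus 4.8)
The plan is to run, for the energy-weighted one-particle density, the same coherent-states argument that yields Theorem~\ref{thm:cohN}, but now tracking the \emph{kinetic energy} of the fluctuations rather than merely their number. Write $S:=(1-\Delta)^{1/4}$. First I would pass from $\ph_t$ to the solution $\ph^{(\alpha_N)}_t$ of the regularized Hartree equation~\eqref{eq:haraN}: since $S\,|\ph_t\rangle\langle\ph_t|\,S=|S\ph_t\rangle\langle S\ph_t|$, Proposition~\ref{closeness_in_H1/2} (estimate~\eqref{H1/2closeness}) and Corollary~\ref{cor:reg} give
\[
\Big\| S\big(|\ph_t\rangle\langle\ph_t|-|\ph^{(\alpha_N)}_t\rangle\langle\ph^{(\alpha_N)}_t|\big)S\Big\|_{\mathrm{HS}}\;\leq\;\|\ph_t-\ph^{(\alpha_N)}_t\|_{H^{1/2}}\big(\|\ph_t\|_{H^{1/2}}+\|\ph^{(\alpha_N)}_t\|_{H^{1/2}}\big)\;\lesssim\;\alpha_N^{1/2}\,,
\]
so it is enough to bound $\big\|S\big(\Gamma^{(1)}_{N,t}-|\ph^{(\alpha_N)}_t\rangle\langle\ph^{(\alpha_N)}_t|\big)S\big\|_{\mathrm{HS}}$ by $C\,(N^{-1/2}+\alpha_N^{1/2})$.

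For the latter I would start from the representation~\eqref{eq:kerGa} of the kernel of $\Gamma^{(1)}_{N,t}-|\ph^{(\alpha_N)}_t\rangle\langle\ph^{(\alpha_N)}_t|$ through the fluctuation dynamics $\cU_N(t;0)$, together with the modified dynamics $\wt\cU_N(t;0)$ and the parity cancellation already used in the proof of Theorem~\ref{thm:cohN}. Conjugating by $S$ is equivalent to replacing the operator-valued distributions $a^\sharp_x$ by the kinetic-energy-dressed distributions $b_x:=\int\rd z\,(1-\Delta)^{1/4}(x;z)\,a_z$, $b^*_x:=\int\rd z\,(1-\Delta)^{1/4}(x;z)\,a^*_z$, and $\ph^{(\alpha_N)}_t$ by $S\ph^{(\alpha_N)}_t$. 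The elementary identity that makes the scheme close is
\[
\int\rd x\;b^*_x b_x\;=\;\int\rd x\;a^*_x\,(1-\Delta_x)^{1/2}\,a_x\;=:\;\cK\,,
\]
so that $\int\rd x\,\|b_x\psi\|^2=\langle\psi,\cK\psi\rangle$ for every $\psi\in\cF$. Testing the $S$-conjugated kernel difference against a Hilbert--Schmidt observable $J$ and applying Cauchy--Schwarz in the kernel of $J$, the quadratic term is bounded by $\frac{\|J\|_{\mathrm{HS}}}{N}\langle\cU_N(t;0)\Omega,\cK\,\cU_N(t;0)\Omega\rangle$, while the two linear terms, after the parity cancellation against $\wt\cU_N$, each produce a contribution of size $\tfrac{1}{\sqrt N}\|J\|_{\mathrm{HS}}\,\|\ph^{(\alpha_N)}_t\|_{H^{1/2}}\,\|(\cU_N(t;0)-\wt\cU_N(t;0))\Omega\|\,\langle\cU_N(t;0)\Omega,\cK\,\cU_N(t;0)\Omega\rangle^{1/2}$ and one of size $\tfrac{1}{\sqrt N}\|J\|_{\mathrm{HS}}\,\|\ph^{(\alpha_N)}_t\|_{H^{1/2}}\,\big\|\cK^{1/2}(\cU_N(t;0)-\wt\cU_N(t;0))\Omega\big\|$; here one always moves the dressed creation operator onto the vector that keeps the divergent constant $\int\rd x\,[b_x,b^*_x]=\mathrm{Tr}\,(1-\Delta)^{1/2}$ out of the estimate, and one uses $\|\ph^{(\alpha_N)}_t\|_{H^{1/2}}\leq 2\kappa$ from Proposition~\ref{closeness_in_H1/2}.

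It then remains to supply, uniformly in $|t|\leq T$, three inputs: (i) the growth bound for the kinetic energy of the fluctuations furnished by Proposition~\ref{prop:Kbd} — calibrated so that, divided by $N$, it contributes exactly the rate $N^{-1/2}+\alpha_N^{1/2}$ — together with the corresponding bound for $\wt\cU_N(t;0)$; (ii) $\|(\cU_N(t;0)-\wt\cU_N(t;0))\Omega\|\leq C\,N^{-1/2}$, which is Proposition~\ref{prop:U-tdU}; and (iii) a kinetic-energy-weighted refinement showing that $\big\|\cK^{1/2}(\cU_N(t;0)-\wt\cU_N(t;0))\Omega\big\|$ is small enough that, divided by $\sqrt N$, it again respects the claimed rate — this I would obtain by expanding $\cU_N-\wt\cU_N$ through Duhamel's formula (the difference $\cL_N(s)-\wt\cL_N(s)$ of the generators being precisely the cubic term carrying the $N^{-1/2}$ prefactor), then applying (i) to $\cU_N(t;s)$ and the number-operator bound of Proposition~\ref{prop:ntildebd} to $\wt\cU_N(s;0)\Omega$. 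Feeding (i)--(iii) into the estimates above and optimizing over $J$ gives $\big\|S\big(\Gamma^{(1)}_{N,t}-|\ph^{(\alpha_N)}_t\rangle\langle\ph^{(\alpha_N)}_t|\big)S\big\|_{\mathrm{HS}}\leq C\,(N^{-1/2}+\alpha_N^{1/2})$, which together with the first reduction establishes~\eqref{eq:coh-en}. As in Lemma~\ref{lm:3.5}, the a~priori finiteness of $\langle\cU_N(t;0)\Omega,\cK\,\cU_N(t;0)\Omega\rangle$ that is needed before running the Gronwall argument in~(i) is secured by first cutting off the cubic part of the generator with $\mathbf 1_M(\cN)$, $M=\mathrm{const}\cdot N$, and removing the cutoff at the end.

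The hard part will be (i), i.e.\ Proposition~\ref{prop:Kbd}. To estimate $\frac{\rd}{\rd t}\langle\cU_N(t;0)\Omega,\cK\,\cU_N(t;0)\Omega\rangle$ one must commute $\cK$ through the time-dependent generator $\cL_N(t)$; the only genuinely problematic terms are the cubic and quartic ones carrying the regularized Newtonian kernel $(|x-y|+\alpha_N)^{-1}$, since there the commutator with $(1-\Delta)^{1/2}$ morally differentiates this kernel and produces contributions too singular to be reabsorbed by $\cK$ alone — they cost inverse powers of the cutoff $\alpha_N$. Balancing these $\alpha_N^{-1}$-type losses against the $N^{-1}$ and $N^{-1/2}$ prefactors attached to those terms is exactly where the hypothesis $N^\beta\alpha_N\to\infty$ is used. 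Heuristically, once the supercritical Hartree evolution has been subtracted, the effective Hamiltonian governing the fluctuations is subcritical, and it is this that makes a closed a~priori bound on $\cK$ possible.
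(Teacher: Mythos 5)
Your core reduction and your identification of the key new input are exactly right: pass to $\ph^{(\alpha_N)}_t$ via Proposition~\ref{closeness_in_H1/2} (this yields the $\alpha_N^{1/2}$ term), start from the kernel representation~\eqref{eq:kerGa} conjugated by $S=(1-\Delta)^{1/4}$, and turn the quadratic term into $\tfrac{1}{N}\langle\cU_N(t;0)\Omega,\cK\,\cU_N(t;0)\Omega\rangle$ via $\int\rd x\,\|(1-\Delta)^{1/4}a_x\psi\|^2=\langle\psi,\cK\psi\rangle$. That is indeed how the paper proceeds, and Proposition~\ref{prop:Kbd} is the engine.

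Where you depart from the paper — and where a real gap opens — is in the treatment of the two \emph{linear} terms. You propose to import the parity cancellation against the modified dynamics $\wt\cU_N(t;0)$ from the proof of Theorem~\ref{thm:cohN}. That machinery was introduced there only to upgrade a trivial $O(N^{-1/2})$ Hilbert--Schmidt estimate to the sharp rate $O(N^{-1})$. Here the claimed rate is only $N^{-1/2}$, so the parity trick buys nothing; worse, it forces you to control either $\big\|\cK^{1/2}(\cU_N(t;0)-\wt\cU_N(t;0))\Omega\big\|$ or $\langle\wt\cU_N(t;0)\Omega,\cK\,\wt\cU_N(t;0)\Omega\rangle$, and neither estimate appears in the paper. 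Your suggested derivation of the former via Duhamel would need a kinetic-energy growth bound of the shape $\|\cK^{1/2}\cU_N(t;s)\psi\|\lesssim\|\cK^{1/2}\psi\|+\dots$ for \emph{general} $\psi$ and $s$, but Proposition~\ref{prop:Kbd} is stated only for the vacuum at $s=0$; the proposal does not close here. The paper avoids all of this: the linear terms in~\eqref{eq:kerGa}, conjugated by $S$, are estimated directly by Cauchy--Schwarz as
\[
\frac{1}{\sqrt N}\,\|\ph^{(\alpha_N)}_t\|_{H^{1/2}}\,
\Big(\int\rd y\;\big\|(1-\Delta_y)^{1/4}a_y\,\cU_N(t;0)\Omega\big\|^2\Big)^{1/2}
=\frac{\|\ph^{(\alpha_N)}_t\|_{H^{1/2}}}{\sqrt N}\,\big\langle\cU_N(t;0)\Omega,\cK\,\cU_N(t;0)\Omega\big\rangle^{1/2}\,,
\]
which together with Proposition~\ref{prop:Kbd} already gives the claimed $O(N^{-1/2})$ --- no $\wt\cU_N$, no extra propositions. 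So: drop the parity decomposition for the linear terms, estimate them in the same direct way you estimated the quadratic one, and the argument closes cleanly with exactly the inputs the paper provides.

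Two smaller remarks. Your observation that one should avoid the divergent normal-ordering constant $\int\rd x\,[b_x,b_x^*]$ is correct and is handled in the paper implicitly by always keeping $b_x^*$ to the left of $b_x$ in the kernel expressions. And your description of the \emph{proof} of Proposition~\ref{prop:Kbd} (commuting $\cK$ through the generator, absorbing the singular kernel against $\cK$ after a further cutoff, and exchanging $\alpha_N^{-1}$ losses against $N^{-1}$, $N^{-1/2}$ prefactors under $N^\beta\alpha_N\to\infty$) matches the strategy of Section~\ref{sec:Kbd} at the level of ideas, though the paper's actual mechanism is a comparison between $\cU_N$ and a \emph{doubly} cut-off dynamics $\wt\cW_N$ (cutoffs on both cubic and quartic terms) for which a clean Gronwall bound on $\cK^2$ is available, plus a rough $\cK^2$ bound along $\cU_N$ via energy conservation.
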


{\it Remark.} {F}rom (\ref{eq:coh-en}) we can conclude, using arguments similar to the ones used below in the proof of Theorem \ref{thm:conv-en} (starting from Eq. (\ref{eq:HS-coh-en})), that 
\[ \tr \, \left| (1-\Delta)^{1/4} \left(\Gamma^{(1)}_{N,t} - |\ph_t \rangle \langle \ph_t| \right) (1-\Delta)^{1/4} \right| \leq C \left( \frac{1}{\sqrt{N}} + \alpha_N^{1/2} \right) \, . \]

\begin{proof}
Denote now by $\ph^{(\alpha_N)}_t$ the solution of the regularized Hartree equation
(\ref{eq:haraN}) with initial data $\ph^{(\alpha_N)}_{t=0} = \ph$. Since 
\[ \left\| (1-\Delta)^{1/4} \left( |\ph_t \rangle \langle \ph_t| - |\ph^{(\alpha_N)}_t \rangle \langle \ph^{(\alpha_N)}_t| \right) (1-\Delta)^{1/4} \right\|_{\text{HS}} \lesssim \| \ph_t - \ph_t^{(\alpha_N)} \|_{H^{1/2}} \] and using Proposition \ref{closeness_in_H1/2} (see, in particular, (\ref{H1/2closeness})), it is enough to prove that there exists a constant $C=C(\kappa,T,\| \ph\|_{H^2})$ such that
\begin{equation}\label{eq:DGaN} \left\| (1-\Delta)^{1/4} \left(\Gamma^{(1)}_{N,t} - |\ph^{(\alpha_N)}_t \rangle \langle \ph^{(\alpha_N)}_t| \right) (1-\Delta)^{1/4} \right\|_{\text{HS}} \leq \frac{C}{\sqrt{N}} . \end{equation} 
To show (\ref{eq:DGaN}), we use again the representation (\ref{eq:kerGa}) for the kernel of $\Gamma^{(1)}_{N,t}$, which implies that
\begin{equation}\label{eq:kerDGa}
\begin{split}
\Big( (1-\Delta)^{1/4} &\left(\Gamma^{(1)}_{N,t} - |\ph^{(\alpha_N)}_t\rangle \langle \ph^{(\alpha_N)}_t| \right) (1-\Delta)^{1/4} \Big) (x,y) \\ = \;
&\frac{1}{N} \left\langle \Omega, \cU_N (t;0)^* (1-\Delta_y)^{1/4} a_y^*  \, (1-\Delta_x)^{1/4} \, a_x \, \cU_N
(t;0) \Omega \right\rangle \\ &+ \frac{(1-\Delta)^{1/4}\,  \ph^{(\alpha_N)}_t (x)}{\sqrt{N}}
\left\langle \Omega,\cU_N (t;0)^* (1-\Delta_y)^{1/4} a^*_y \, \cU_N (t;0) \Omega \right\rangle \\
&+ \frac{(1-\Delta)^{1/4} \,  \overline{\ph}^{(\alpha_N)}_t (y)}{\sqrt{N}} \left\langle \Omega,\cU_N
(t;0)^* (1-\Delta_x)^{1/4} a_x \, \cU_N (t;0) \Omega \right\rangle\,
\end{split}
\end{equation}
With a Schwarz inequality, we find 
\begin{equation}
\begin{split}
\big\| (1-\Delta)^{1/4}&(\Gamma_{N,t}^{(1)}-|\varphi_t\ra\la\varphi_t| )(1-\Delta)^{1/4}\big\|_{\mathrm{HS}}^2 \\
& \lesssim \;\frac{1}{\;N^2}\,\big\la\UU\Omega,\cK\,\UU\Omega\big\ra^2+\frac{\|\varphi^{(\alpha_N)}_t\|_{H^{1/2}}^2}{N}\,\big\la\UU\Omega,\cK\,\UU\Omega\big\ra\, ,
\end{split}
\end{equation} 
where we defined 
\begin{equation}
\cK\;=\;\int\rd x \,(1-\Delta_x)^{1/4}a_x^*(1-\Delta_x)^{1/4} a_x
\end{equation} 
to be the kinetic energy operator. The theorem follows now from Proposition \ref{prop:Kbd} below.
\end{proof}

The key point, in the proof of Theorem \ref{thm:coh-en}, and also in the proof of Theorem \ref{thm:conv-en}, is the following proposition, which controls the growth of the expectation of the kinetic energy with respect to the fluctuation dynamics $\cU_N$.

\begin{proposition}\label{prop:Kbd}
Suppose that the assumptions of Theorem \ref{thm:coh-en} are satisfied. Suppose moreover that the unitary evolution $\cU_N (t;s)$ is defined as in (\ref{eq:cUN}). Then there exists $C = C(T,\kappa,\|\ph \|_{H^2})$ such that
\begin{equation}\label{eq:Kbd} \langle \cU_N (t;0) \Omega , \cK \, \cU_N (t;0) \Omega \rangle \leq C 
\end{equation} for $t \in \bR$ with $|t| \leq T$. 
\end{proposition}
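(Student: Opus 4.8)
The plan is to run a Gronwall estimate on $g(t):=\langle\UU\Omega,\cK\,\UU\Omega\rangle$, in direct analogy with the control of $\langle\UU\Omega,\cN^k\,\UU\Omega\rangle$ in Proposition \ref{prop:nbd} (itself modelled on the proof of Proposition 3.3 in \cite{RS}); the genuinely new features here are the relativistic dispersion and the singular Newtonian interaction. Differentiating in $t$ and using the Schr\"odinger equation (\ref{eq:cUN}),
\[ \frac{\rd}{\rd t}\, g(t) \;=\; \big\langle\UU\Omega,\, i\,[\LL,\cK]\,\UU\Omega\big\rangle \,. \]
Since the kinetic term $\int\rd x\, a_x^*(1-\Delta_x)^{1/2}a_x$ appearing in $\LL$ coincides with $\cK=\rd\Gamma((1-\Delta)^{1/2})$, it commutes with $\cK$ and drops out, leaving the commutators of $\cK$ with the five interaction terms of (\ref{eq:cLN}). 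The aim is to show $\tfrac{\rd}{\rd t}g(t)\leq C\,(1+g(t))$, uniformly in $|t|\leq T$ and in $N$ large; together with $g(0)=\langle\Omega,\cK\Omega\rangle=0$ this gives (\ref{eq:Kbd}) by Gronwall's lemma. (As in Proposition \ref{prop:nbd}, closing the inequality may in fact require carrying along a few powers of $\cN$ next to $\cK$; all the $\cN$-moments of $\UU\Omega$ are bounded, uniformly in $N$, by Proposition \ref{prop:nbd}, since $\Omega$ has no particles.) Throughout, the Sobolev norms $\|\ph_t^{(\alpha_N)}\|_{H^s}$, $s\leq 2$, are bounded by a constant $\nu=\nu(\kappa,T,\|\ph\|_{H^2})$ independent of $\alpha_N$ and of $|t|\leq T$, by Corollary \ref{cor:reg}.

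Each commutator is evaluated with the same package of tools. Commuting $\cK=\rd\Gamma((1-\Delta)^{1/2})$ through a monomial of creation and annihilation operators distributes the weights $(1-\Delta)^{1/2}$ over the particles touched by the monomial; for the ``hopping''-type terms — the mean-field and exchange quadratic terms and the $N^{-1/2}$ cubic term — only the \emph{difference} of two such weights acts on the particle carried across the interaction, and one exploits the subadditivity of the relativistic dispersion, $\big|\langle p\rangle-\langle q\rangle\big|\leq|p-q|$, which trades one power of the exchanged momentum against the singularity of the Newtonian kernel and leaves an effective interaction no worse than $|x-y|^{-1}$, bounded by Kato's inequality (\ref{eq:kato}) and Hardy's inequality (\ref{eq:hardy}). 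When a fractional weight lands on a product of the regularized kernel with the orbitals $\ph_t^{(\alpha_N)}$ it is distributed by the generalized Leibniz rule (Lemma \ref{lm:leib}); when it lands on the kernel $(|x-y|+\alpha_N)^{-1}$ itself it is estimated through the pointwise bound (\ref{eq:bd-ptw}) or, more efficiently, through the $L^{2+\eps}$ bound (\ref{bound<}). Every resulting operator monomial is then controlled by Lemma \ref{lm:a-bd} and a Schwarz inequality in terms of $\cK^{1/2}$ and powers of $(\cN+1)^{1/2}$, producing contributions bounded by $\eps\,g(t)+C_\eps\langle\UU\Omega,(\cN+1)^p\,\UU\Omega\rangle$ for a suitable finite $p$, the last factor being absorbed by Proposition \ref{prop:nbd}. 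For the quartic term $-\tfrac{\lambda}{2N}\int\tfrac{1}{|x-y|+\alpha_N}a_x^*a_y^*a_ya_x$, Kato's inequality $|x-y|^{-1}\lesssim(1-\Delta_x)^{1/2}$ bounds its commutator with $\cK$ by $\tfrac{C}{N}(\cK+1)(\cN+1)$; to keep the factor $\cN/N$ harmless one replaces, exactly as in \cite{RS} and in the proof of Lemma \ref{lm:3.5}, the dynamics $\UU$ by a truncated one carrying the cutoff ${\bf 1}_M(\cN)$ with $M=\const\cdot N$, again using the uniform $\cN$-moment bounds.

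The crux — and the reason this proposition is the technical heart of the paper — is to perform all of this \emph{uniformly in the cutoff} $\alpha_N$ as well as in $N$: the awkward contributions are precisely those where a relativistic weight $(1-\Delta)^{1/4}$ must sit on the regularized kernel with no subadditive cancellation available (most notably the pair term, where both weights act on created particles) and those coming from the cubic vertex. There one pays, via (\ref{bound<}), a negative power $\alpha_N^{-\eps}$ of the cutoff — accompanied, for the cubic term, by $N^{-1/2}$ — and this is where the hypothesis $N^\beta\alpha_N\to\infty$ enters: given $\beta$ one fixes $\eps$ small enough (depending on $\beta$) so that $N^{-1/2}\alpha_N^{-\eps}\to0$ and, together with the truncation argument of the previous paragraph, the $\alpha_N^{-\eps}$ losses are beaten for $N$ large. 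I expect the main difficulty to be exactly this bookkeeping: organising the dozen or so commutator terms so that every worst-case singularity is matched either by a subadditive gain from the relativistic kinetic energy, or by a power of $N^{-1/2}$ from the cubic structure, or by a power of $N^{-1}$ from the quartic structure — leaving no uncompensated negative power of $\alpha_N$ and no term of order $g(t)$ with a coefficient too large to absorb — after which $\tfrac{\rd}{\rd t}g(t)\leq C(1+g(t))$ and Gronwall's lemma yield (\ref{eq:Kbd}).
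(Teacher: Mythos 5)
Your proposal takes a genuinely different route from the paper, and I think it runs into a gap that the paper's own strategy is specifically designed to avoid.

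The paper never computes $[\cL_N(t),\cK]$. Instead, it introduces the twice-truncated dynamics $\wt{\cW}_N(t;s)$ with generator $\wt{\cM}_N(t)$ carrying the cutoff $\mathbf{1}_{\vartheta N}(\cN)$ on \emph{both} the cubic and the quartic terms, and then establishes the purely algebraic operator inequality $\cK^2\lesssim\wt{\cM}_N^2(t)+(\cN+1)^3$ by writing $\cK=\wt{\cM}_N(t)-(\text{interaction terms})$, squaring, and estimating each interaction term squared by Hardy's or Kato's inequality — uniformly in $\alpha_N$, because no derivative is ever placed on the kernel $(|x-y|+\alpha_N)^{-1}$. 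The growth of $\langle\wt{\cM}_N^2(t)\rangle$ along $\wt{\cW}_N$ is then controlled via $\dot{\wt{\cM}}_N(t)$, which contains no kinetic term and is again bounded by moments of $\cN$. Finally, a crude a priori bound $\langle\cU_N\Omega,\cK^2\cU_N\Omega\rangle\lesssim N^2(1+\alpha_N^{-2})$ and the estimate $\|(\cU_N-\wt{\cW}_N)\Omega\|\lesssim N^{-k}(1+\alpha_N^{-1})$ for arbitrary $k$ (arbitrary because the difference of the generators carries $1-\mathbf{1}_{\vartheta N}(\cN)$, which kills the vacuum to any polynomial order) transfer the bound to $\cU_N$. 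The hypothesis $N^\beta\alpha_N\to\infty$ enters \emph{only} at this last transfer step, to make the cross terms vanish; the core estimate on $\wt{\cW}_N$ is $\alpha_N$-uniform.

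Your Gronwall approach for $g(t)=\langle\cU_N\Omega,\cK\,\cU_N\Omega\rangle$, by contrast, necessarily places derivatives on the singular kernel through the commutators $[\cL_N(t),\cK]$, and the key difficulty you correctly locate — the pair term $\frac{\lambda}{2}\int(|x-y|+\alpha_N)^{-1}\phi_t(x)\phi_t(y)a_x^*a_y^*+\text{h.c.}$, where the commutator with $\cK=\rd\Gamma(\omega)$ produces the \emph{sum} $\omega(p)+\omega(q)$ with no subadditive cancellation — is exactly where your proposed fix does not close. You say the resulting fractional weight on the kernel costs $\alpha_N^{-\eps}$ via (\ref{bound<}) and that this is beaten by $N^\beta\alpha_N\to\infty$ ``together with the truncation argument''; but the pair term has an $O(1)$ coefficient, with no $N^{-1/2}$ (cubic) or $N^{-1}$ (quartic) prefactor, and the truncation argument only supplies $\cN/N\lesssim\vartheta$, not any negative power of $N$. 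If the pair-term commutator really does incur a loss $\alpha_N^{-\eps}$, it enters the Gronwall differential inequality as $\tfrac{\rd}{\rd t}g\leq C\alpha_N^{-\eps}(1+g)$ or $\leq C\alpha_N^{-\eps}\sqrt{1+g}+\dots$, and either way $g(t)$ blows up as $N\to\infty$ since $\alpha_N\to 0$. (It may be that with a careful choice of exponents in the generalized Leibniz rule the pair-term commutator is in fact $\alpha_N$-uniform — e.g.\ one can try to push the full weight onto $\phi_t\in H^2$ or onto $K$ where $(1-\Delta)^{1/2}K\lesssim|x|^{-2}+|x|^{-1}$ might hold — but your writeup asserts the $\alpha_N^{-\eps}$ loss, and then claims a compensation mechanism that is simply not available for this term.)

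In short: your strategy misidentifies the role of the hypothesis $N^\beta\alpha_N\to\infty$ (in the paper it is only used to kill the cross terms via arbitrary polynomial decay, not to beat kernel singularities in a commutator), and as written has an unresolved divergence in the pair-term contribution to the Gronwall estimate. The paper's reorganization — avoiding $[\cL_N,\cK]$ altogether by comparing $\cK^2$ to the square of the truncated generator, and pushing all $\alpha_N$-dependence into a crude a priori bound that is then overwhelmed by an $N^{-k}$ — is precisely what sidesteps this issue.
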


The proof of Proposition \ref{prop:Kbd} is given in Section \ref{sec:Kbd}.
The bound on the growth of the expectation of $\cK$ can also be used to conclude the proof of Theorem \ref{thm:conv-en}.

\begin{proof}[Proof of Theorem \ref{thm:conv-en}] 
Using the representation (\ref{eq:repr-coh2}), we obtain 
\begin{equation}\label{intabs}
\begin{split}
\Big|\Big( (1-\Delta)^{1/4} \big( &\gamma^{(1)}_{N,t} - |\ph^{(\alpha_N)}_t \rangle \langle \ph^{(\alpha_N)}_t| \big) (1-\Delta)^{1/4} \Big) (x,y) \Big| \\
&\lesssim \frac{\;d^2_N}{\,N} \int_0^{2\pi} \!\!\int_0^{2\pi} \!\rd \theta_1\rd \theta_2 \,\big\|(1-\Delta_x)^{1/4} a_x \, \cU_N^{\theta_1} (t;0) \Omega\big\| \,\big\|(1-\Delta_y)^{1/4} a_y \,\cU^{\theta_2}_N (t;0) \Omega \big\| \\
& \quad + \frac{\:d_N}{\sqrt{N}}\,\big|(1-\Delta_x)^{1/4}\varphi^{(\alpha_N)}_t(x)\big| \int_0^{2\pi}\!\rd\theta_1\,\big\|(1-\Delta_y)^{1/4} a_y \, \cU_N^{\theta_1} (t;0) \Omega\big\| 
\end{split}
\end{equation}
where $\cU_N^{\theta} (t;0)$ is defined as the unitary evolution in (\ref{eq:cUN}), with $\ph^{(\alpha_N)}_t$ replaced by $e^{i\theta} \ph^{(\alpha_N)}_t$. 
Taking the square and integrating over $x,y$, we find
\begin{equation}\label{intsquared}
\begin{split}
\int\rd x\,\rd y &\,\Big|\Big( (1-\Delta)^{1/4} \big( \gamma^{(1)}_{N,t} - |\ph^{(\alpha_N)}_t \rangle \langle \ph^{(\alpha_N)}_t| \big) (1-\Delta)^{1/4} \Big) (x,y) \Big|^2 \\
&\lesssim\;\frac{1}{N} \bigg(\int_{0}^{2\pi}\!\!\rd\theta\,\big\la \cU^{\theta}_N (t;0) \Omega,\cK\,\cU^{\theta}_N (t;0) \Omega\big\ra\bigg)^{\!2} + \frac{1}{\sqrt{N}} \int_{0}^{2\pi}\!\!\rd\theta\,\big\la \cU^{\theta}_N (t;0) \Omega,\cK\, \cU^{\theta}_N (t;0) \Omega\big\ra \,.
\end{split}
\end{equation} 
Proposition \ref{prop:Kbd} implies that there exists $C=C(T,\kappa, \| \ph \|_{H^2})$ such that
\begin{equation}\label{eq:HS-coh-en}
\Big\| (1-\Delta)^{1/4} \big( \gamma^{(1)}_{N,t} - |\ph^{(\alpha_N)}_t \rangle \langle \ph^{(\alpha_N)}_t| \big) (1-\Delta)^{1/4} \Big\|_{\text{HS}} \leq \frac{C}{N^{1/4}} \, . 
\end{equation}
By Proposition \ref{closeness_in_H1/2} we obtain therefore
\begin{equation}\label{eq:con-fact}
\Big\| (1-\Delta)^{1/4} \big( \gamma^{(1)}_{N,t} - |\ph_t \rangle \langle \ph_t| \big) (1-\Delta)^{1/4} \Big\|_{\text{HS}} \leq C \left(\frac{1}{N^{1/4}} + \alpha_N^{1/2} \right) \, . 
\end{equation}

Note also that (\ref{intabs}) implies that 
\begin{equation}
\begin{split}
\int \rd x \, \Big|\Big( (1-\Delta)^{1/4} \big( &\gamma^{(1)}_{N,t} - |\ph^{(\alpha_N)}_t \rangle \langle \ph^{(\alpha_N)}_t| \big) (1-\Delta)^{1/4} \Big) (x,x) \Big| \\
&\lesssim \frac{1}{N^{1/4}} \bigg(\| \ph^{(\alpha_N)}_t \|_{H^{1/2}}  + \int_{0}^{2\pi}\!\!\rd\theta\,\big\la \cU^{\theta}_N (t;0) \Omega,\cK\,\cU^{\theta}_N (t;0) \Omega\big\ra\bigg)
\end{split}
\end{equation}
and therefore, by Proposition \ref{prop:Kbd},
\begin{equation}
 \left| \tr \, (1-\Delta)^{1/4} \gamma^{(1)}_{N,t} (1-\Delta)^{1/4} - \|(1-\Delta)^{1/4} \ph^{(\alpha_N)}_t \|^2 \right| \lesssim \frac{C}{N^{1/4}} \,. \end{equation}
Again, Proposition \ref{closeness_in_H1/2} implies that 
\begin{equation}\label{eq:con-tr} \left| \tr \, (1-\Delta)^{1/4} \gamma^{(1)}_{N,t} (1-\Delta)^{1/4} - \|(1-\Delta)^{1/4} \ph_t \|^2 \right| \lesssim C \left( \frac{1}{N^{1/4}} + \alpha_N^{1/2} \right) \,. \end{equation}
Last equation, together with (\ref{eq:con-fact}), implies that
\begin{equation}\label{convergenceHSnormalised}
\begin{split}
& \Bigg\|\frac{(1-\Delta)^{1/4}\gamma^{(1)}_{N,t}(1-\Delta)^{1/4}}{\;\tr[(1-\Delta)^{1/4}\gamma^{(1)}_{N,t}(1-\Delta)^{1/4}]\;}-\frac{\;|(1-\Delta)^{1/4}\varphi_t\ra\la(1-\Delta)^{1/4}\varphi_t|\;}{\|(1-\Delta)^{1/4}\varphi_t\|_2^2}\Bigg\|_{\mathrm{HS}} \\
& \qquad \leq\;\;\frac{\;\big\| (1-\Delta)^{1/4}\big(\gamma^{(1)}_{N,t} - |\ph_t \rangle \langle \ph_t|\big)(1-\Delta)^{1/4}\big\|_{\mathrm{HS}}}{\tr[(1-\Delta)^{1/4}\gamma^{(1)}_{N,t}(1-\Delta)^{1/4}]}\;+\\
&\qquad \qquad +\;\frac{\;\big|\tr[(1-\Delta)^{1/4}\gamma^{(1)}_{N,t}(1-\Delta)^{1/4}]-\|(1-\Delta)^{1/4}\varphi_t\|_2^2 \big|\;}{\;\tr[(1-\Delta)^{1/4}\gamma^{(1)}_{N,t}(1-\Delta)^{1/4}]} \\
& \qquad \leq \; C \left( \frac{1}{N^{1/4}} + \alpha_N^{1/2} \right).
\end{split}
\end{equation} 
On the l.h.s.~of \eqref{convergenceHSnormalised} we are now comparing a density matrix with a rank-one projection. The trace norm of their difference is at most twice the corresponding Hilbert-Schmidt norm and thus we find
\begin{equation}
\tr \; \left| \frac{(1-\Delta)^{1/4} \gamma^{(1)}_{N,t}(1-\Delta)^{1/4}}{\;\tr[(1-\Delta)^{1/4}\gamma^{(1)}_{N,t}(1-\Delta)^{1/4}]\;}-\frac{\;(1-\Delta)^{1/4} |\varphi_t\ra\la \varphi_t|(1-\Delta)^{1/4}\;}{\|(1-\Delta)^{1/4}\varphi_t\|_2^2} \right| \leq
C \left( \frac{1}{N^{1/4}} + \alpha_N^{1/2} \right)\,.
\end{equation}
Combining last equation with (\ref{eq:con-tr}), we finally obtain
\begin{equation*}
\begin{split}
\tr \; \Big| (1-\Delta)^{1/4} &\big(\gamma^{(1)}_{N,t} - |\ph_t \rangle \langle \ph_t|\big)(1-\Delta)^{1/4}\Big|
\; \\
\lesssim \; &\| (1-\Delta)^{1/4} \ph_t \|^2 \, \tr \, \left| \frac{(1-\Delta)^{1/4} \gamma^{(1)}_{N,t}(1-\Delta)^{1/4}}{ \| (1-\Delta)^{1/4} \ph_t \|^2}  - \frac{(1-\Delta)^{1/4} |\ph_t \rangle \langle \ph_t| (1-\Delta)^{1/4}}{\| (1-\Delta)^{1/4} \ph_t \|^2} \right| 
\\ 
\lesssim \; &\| (1-\Delta)^{1/4} \ph_t \|^2 \, \tr \, \left| \frac{(1-\Delta)^{1/4} \gamma^{(1)}_{N,t}(1-\Delta)^{1/4}}{\tr \, (1-\Delta)^{1/4} \gamma^{(1)}_{N,t} (1-\Delta)^{1/4}}  - \frac{(1-\Delta)^{1/4} |\ph_t \rangle \langle \ph_t| (1-\Delta)^{1/4}}{\| (1-\Delta)^{1/4} \ph_t \|^2} \right| \\ &+ \left| \tr \, (1-\Delta)^{1/4} \gamma^{(1)}_{N,t} (1-\Delta)^{1/4} - \|(1-\Delta)^{1/4} \ph_t \|^2 \right| \\ \leq \; & C \left( \frac{1}{N^{1/4}} + \alpha_N^{1/2} \right)
\end{split}\end{equation*}
which concludes the proof of the theorem. 
\end{proof}

\section{Control of the growth of the kinetic energy}
\label{sec:Kbd}

The goal of this section is to prove Proposition \ref{prop:Kbd}, which gives control on the growth of the expectation of the kinetic energy operator $\cK$ with respect to the fluctuation dynamics $\cU_N (t;0)$. 

\bi

\begin{proof}[Proof of Proposition \ref{prop:Kbd}]
Recall the definition (\ref{eq:cUN}) of the unitary maps $\cU_N (t;s)$ describing the evolution of the fluctuations; note that the generator $\cL_N (t)$ of $\cU_N (t;s)$ is defined in terms of the solution $\ph_t^{(\alpha_N)}$ of the regularized Hartree equation \begin{equation}\label{eq:reg-aN} i\partial_t \ph_t^{(\alpha_N)} = \sqrt{1-\Delta} \, \ph_t^{(\alpha_N)} - \lambda \left( \frac{1}{|.|+\alpha_N} * |\ph^{(\alpha_N)}_t|^2 \right) \ph_t^{(\alpha_N)} \, .\end{equation}
In the rest of this section, we will use the shorthand notation $\phi_t \equiv \ph_t^{(\alpha_N)}$. By (\ref{eq:bdpht-T}), by the assumption $\ph \in H^2 (\bR^3)$ on the initial data, and by Corollary \ref{cor:reg}, there exists $\nu = \nu (T,\kappa,\| \ph \|_{H^2})$ such that 
\begin{equation}\label{eq:bdpha} \sup_{|t| \leq T} \, \| \phi_t \|_{H^2} \leq \nu\, \end{equation}
uniformly in $N$ ($\phi_t = \ph_t^{(\alpha_N)}$ depends on $N$ through the cutoff $\alpha_N$). 

\medskip

%

We compare the growth of $\cK$ along the fluctuation dynamics $\mathcal{U}_N$ and along a new dynamics $\widetilde{\mathcal{W}}_N$ defined through the equation 
\begin{equation}\label{eq:cUNtilde} 
i\partial_t\,\wt{\cW}_N (t;s) \;=\;\tLL\,\wt{\cW}_N (t;s) \qquad\mathrm{with}\quad\widetilde{\mathcal{W}}_N(s;s)= 1 \qquad \text{for all $s \in \bR$,}
\end{equation} 
with the time-dependent generator
\begin{equation}\label{def:Ltilde}
\begin{split}
\tLL\;:=&\;\int \rd x \, a_x^* (1-\Delta_x)^{1/2} a_x \;-\; \lambda \int\rd x\,\big(\frac{1}{|.|+\alpha_N} * |\phi_t|^2\big) \,a_x^* a_x \\
&-\lambda \int\rd x \,\rd y\,\frac{1}{|x-y|+\alpha_N}\,\overline{\phi}_t (x)\,\phi_t(y)\,a_y^* a_x \\
&-\frac{\lambda}{2}\int\rd x \,\rd y\,\frac{1}{|x-y|+\alpha_N} \, \big\{\phi_t(x)\phi_t(y)\,a_x^* a_y^*+\overline{\phi}_t(x)\,\overline{\phi}_t(y) \,a_x a_y\big\} \\
&-\frac{\lambda}{\,\sqrt{N}\,}\int\rd x \,\rd y\,\frac{1}{|x-y| + \alpha_N} \, \big\{\phi_t(y) \,a_x^* a_y^* \,{\bf 1}_{\vartheta N}(\cN)\,a_x+\overline{\phi}_t(y) \,a_x^* \, {\bf 1}_{\vartheta N} (\cN)\,a_x a_y\big\} 
\\ &-\frac{\lambda}{2N}\int\rd x \,\rd y\,\frac{1}{|x-y|+\alpha_N} \,a_x^* a_y^* \, {\bf 1}_{\vartheta N} (\cN) \,a_y a_x
\end{split}
\end{equation} 
where ${\bf 1}_{\vartheta N} (s)$ is the characteristic function of the interval $(-\infty,\vartheta N]$, that is ${\bf 1}_{\vartheta N} (s) = 1$ if $s \leq \vartheta N$ and ${\bf 1}_{\vartheta N} (s) = 0$ otherwise. Here $\vartheta \leq 1$ will be fixed later to be sufficiently small. 

\bi

We split
\begin{equation}\label{UKUsplit}
\begin{split}
\big\la\UU\Omega,\cK\,\UU\Omega\big\ra  = \, &\big\la\tUU\Omega,\cK\,\tUU\Omega\big\ra  + \big\la(\UU-\tUU)\Omega,\cK\,\tUU\Omega\big\ra \\
&+ \big\la\UU\Omega,\cK (\UU-\tUU)\Omega\big\ra  \\
\leq \, &  \big\la\tUU\Omega,\cK\,\tUU\Omega\big\ra + \big\|\cK\,\tUU\Omega\big\|\,\big\|(\UU-\tUU)\Omega\big\|\, \\
& + \big\|\cK\,\UU\Omega\big\|\,\big\|(\UU-\tUU)\Omega\big\|  \, . 
\end{split}
\end{equation} 
Proposition \ref{prop:Kbd} now follows from Proposition \ref{prop:K2bdt}, Proposition \ref{prop:K2bd}, and Proposition \ref{prop:U-Utilde} and from the assumption that $N^\beta \alpha_N \to \infty$ for some $\beta >0$.
\end{proof}

The first ingredient in the proof of Proposition \ref{prop:Kbd} is a bound for the growth of the kinetic energy $\cK$ and of its square w.r.t. the cutoffed evolution $\wt{\cW}_N (t;s)$. This is the content of the next Proposition, which will be proven in Section~\ref{sec:K2bdt}.

\begin{proposition}\label{prop:K2bdt}
Suppose that the assumptions of Proposition \ref{prop:Kbd} are satisfied (but here the assumption $N^\beta \alpha_N \to \infty$ for some $\beta >0$ will not be used). Let the evolution $\wt{\cW}_N (t;s)$ be defined according to (\ref{eq:cUNtilde}), with generator (\ref{def:Ltilde}), and suppose that $\vartheta>0$ is small enough. 
Then there exists $C = C(\vartheta, T,\kappa,\|\ph \|_{H^2})$ such that
\begin{equation}\label{eq:Kbd-tilde} 
\langle \wt{\cW}_N (t;0) \Omega , \cK^2 \, \wt{\cW}_N (t;0) \Omega \rangle \leq C
\end{equation} for all $t \in \bR$ with $|t| \leq T$. 
\end{proposition}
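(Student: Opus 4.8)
The plan is to differentiate $t\mapsto\langle\wt{\cW}_N(t;0)\Omega,\cK^2\,\wt{\cW}_N(t;0)\Omega\rangle$ and to close a Gronwall inequality. Since $\cK\Omega=0$ this quantity vanishes at $t=0$, so it is enough to bound its $t$-derivative by $C\big(1+\langle\wt{\cW}_N(t;0)\Omega,\cK^2\,\wt{\cW}_N(t;0)\Omega\rangle\big)$ uniformly in $N$, for $|t|\le T$. Using $i\partial_t\wt{\cW}_N(t;s)=\tLL\,\wt{\cW}_N(t;s)$ this derivative equals $\langle\wt{\cW}_N(t;0)\Omega,\,i[\tLL,\cK^2]\,\wt{\cW}_N(t;0)\Omega\rangle$. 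The first term of $\tLL$ in \eqref{def:Ltilde} --- the second quantization of $(1-\Delta)^{1/2}$ --- commutes with $\cK$, hence with $\cK^2$, and therefore drops out, leaving only the potential terms. As a preliminary step I would establish, exactly as in Lemma \ref{lm:3.5} (and more easily, since the $N^{-1/2}$ and $N^{-1}$ terms of $\tLL$ both carry the cutoff ${\bf 1}_{\vartheta N}(\cN)$), the a priori bounds $\langle\wt{\cW}_N(t;0)\Omega,(\cN+1)^k\,\wt{\cW}_N(t;0)\Omega\rangle\le C_k$ for every fixed $k$ and $|t|\le T$; throughout, the key input on the effective equation is $\sup_{|t|\le T}\|\phi_t\|_{H^2}\le\nu$ from Corollary \ref{cor:reg}.

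Next I would estimate the contribution of each potential term in \eqref{def:Ltilde} to $i[\tLL,\cK^2]$. For the terms quadratic in the creation/annihilation operators (lines two to four of \eqref{def:Ltilde}, those without a negative power of $N$), one writes $\cK^2=\cK\cdot\cK$, distributes the two factors of $\cK$ across the commutator by Cauchy--Schwarz, and estimates the resulting monomials by Lemma \ref{lm:a-bd} with an additional kinetic weight, using the $L^\infty$-bound $\|(|\cdot|+\alpha_N)^{-1}*|\phi_t|^2\|_\infty\lesssim\|\phi_t\|_{H^{1/2}}^2\le\nu^2$ (Kato's inequality \eqref{eq:kato}) and bounds of the type $\|(1-\Delta)^{1/2}\phi_t\|_{L^p}\lesssim\|\phi_t\|_{H^2}\le\nu$, which are needed because $\cK$ forces one factor of $(1-\Delta)^{1/2}$ onto the functions $\phi_t,\overline\phi_t$ that appear as kernels; the generalized Leibniz rule (Lemma \ref{lm:leib}) distributes $(1-\Delta)^{1/2}$ over products such as $\overline\phi_t\phi_t$. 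These terms should be bounded by $C\big(\langle\cK^2\rangle+\langle(\cN+1)^2\rangle\big)$ with $C=C(T,\kappa,\|\ph\|_{H^2})$, which is harmless for Gronwall once the $\cN$-moment bounds are inserted.

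The hard part is the cubic term
\[ -\frac{\lambda}{\sqrt N}\int\rd x\,\rd y\,\frac{1}{|x-y|+\alpha_N}\,\big\{\phi_t(y)\,a_x^*a_y^*\,{\bf 1}_{\vartheta N}(\cN)\,a_x+\mathrm{h.c.}\big\}\, : \]
a priori $\cK^2$ is of order $(\vartheta N)^2$ on the range of ${\bf 1}_{\vartheta N}(\cN)$, so the naive estimate of its commutator with $\cK^2$ would carry an uncompensated power of $N$. The cutoff ${\bf 1}_{\vartheta N}(\cN)$ is precisely what saves the argument: on its range $\cN\le\vartheta N$, so each annihilation/creation factor costs at most $(\vartheta N)^{1/2}$, which combines with the explicit $N^{-1/2}$ into a factor $\vartheta^{1/2}$ instead of a growing power of $N$. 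Carrying out the commutator algebra carefully --- writing $i[A,\cK^2]=i[A,\cK]\cK+i\cK[A,\cK]$, moving one factor of $\cK$ to each side, and using the kinetic-weighted $a$-bounds together with $\|(1-\Delta)^{1/2}\phi_t\|_{L^p}\le\nu$ --- should give a bound of the form $C\,\vartheta^{1/2}\langle\cK^2\rangle+C_\vartheta\big(1+\langle(\cN+1)^k\rangle\big)$ for the cubic contribution; the quartic term $-\frac{\lambda}{2N}\int\frac{1}{|x-y|+\alpha_N}a_x^*a_y^*{\bf 1}_{\vartheta N}(\cN)a_ya_x$ is handled the same way with $N^{-1}\cdot\vartheta N=\vartheta$, contributing $C\,\vartheta\langle\cK^2\rangle+C_\vartheta\big(1+\langle(\cN+1)^k\rangle\big)$. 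I expect this bookkeeping of $[\tLL,\cK^2]$, and the verification that the coefficient of $\langle\cK^2\rangle$ generated by the cubic and quartic terms is genuinely proportional to a positive power of $\vartheta$ (resp.\ of $N^{-1}$), to be the main obstacle --- it is here that choosing $\vartheta$ small enough lets the self-consistent differential inequality close.

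Finally, collecting all contributions and inserting the a priori $\cN$-moment bounds, I would obtain $\frac{d}{dt}\langle\wt{\cW}_N(t;0)\Omega,\cK^2\,\wt{\cW}_N(t;0)\Omega\rangle\le C\big(1+\langle\wt{\cW}_N(t;0)\Omega,\cK^2\,\wt{\cW}_N(t;0)\Omega\rangle\big)$ for $|t|\le T$, with $C=C(\vartheta,T,\kappa,\|\ph\|_{H^2})$; since the left-hand side vanishes at $t=0$, Gronwall's lemma yields \eqref{eq:Kbd-tilde}. All the function-space estimates used along the way reduce, as in \cite{RS}, to Kato's inequality \eqref{eq:kato}, Hardy's inequality \eqref{eq:hardy}, the generalized Leibniz rule (Lemma \ref{lm:leib}), and the uniform-in-$N$ bound $\sup_{|t|\le T}\|\phi_t\|_{H^2}\le\nu$ on the effective evolution.
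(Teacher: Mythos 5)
Your plan is a genuine alternative to the paper's argument, and it is the \emph{wrong} alternative: the paper deliberately avoids the single object that your proof has to control, namely the commutator $[\widetilde{\mathcal{M}}_N(t),\cK^2]$.

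The paper does not differentiate $\langle\wt{\cW}_N(t;0)\Omega,\cK^2\,\wt{\cW}_N(t;0)\Omega\rangle$. It first establishes the \emph{operator inequality} $\cK^2\lesssim\widetilde{\mathcal{M}}_N^2(t)+(\cN+1)^3$, by expressing $\cK$ as $\widetilde{\mathcal{M}}_N(t)$ minus the potential pieces of \eqref{def:Ltilde}, squaring, and bounding each resulting non-kinetic term. Here $\vartheta$ enters only through Lemma \ref{Lemma:VtK}, which gives $\wt\cV^2\lesssim\vartheta^2\cK^2$ for the cutoffed quartic term; choosing $\vartheta$ small lets one absorb $\wt\cV^2$ into the left-hand side. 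Then the quantity $\langle\wt{\cW}_N(t;0)\Omega,\widetilde{\mathcal{M}}_N^2(t)\wt{\cW}_N(t;0)\Omega\rangle$ \emph{does} have a nice time derivative, precisely because the evolution generator is $\widetilde{\mathcal{M}}_N(t)$ itself: the commutator $[\widetilde{\mathcal{M}}_N(t),\widetilde{\mathcal{M}}_N^2(t)]$ vanishes, and only the explicit time dependence $\dot{\widetilde{\mathcal{M}}}_N(t)$ survives, leading to \eqref{eq:dotL}. The operator $\dot{\widetilde{\mathcal{M}}}_N(t)$ contains only $\dot{\phi}_t$, controlled via $\|\dot{\phi}_t\|_{H^1}\leq C$ from the regularized Hartree equation, and \emph{no commutators of $(1-\Delta)^{1/2}$ with $(|x-y|+\alpha_N)^{-1}$ appear anywhere}. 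Coupled with the known $(\cN+1)^k$-moment bounds along $\wt{\cW}_N$ (analogous to Lemma \ref{lm:3.5}), this closes the argument.

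Your proposal, by contrast, must bound $\langle\wt{\cW}_N(t;0)\Omega,\,i[\widetilde{\mathcal{M}}_N(t),\cK^2]\,\wt{\cW}_N(t;0)\Omega\rangle$ by $C(1+\langle\cK^2\rangle)$ uniformly in $N$ and $\alpha_N$. After dropping the kinetic term, each potential piece produces terms where $(1-\Delta_x)^{1/2}$ (coming from $\cK=\mathrm{d}\Gamma((1-\Delta)^{1/2})$) lands on the kernel $(|x-y|+\alpha_N)^{-1}$, i.e.\ Calderon-type commutators $[(1-\Delta)^{1/2},(|x-y|+\alpha_N)^{-1}]$, which on the singular set behave like $(|x-y|+\alpha_N)^{-2}$. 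Your sketch accounts for the factor of $(1-\Delta)^{1/2}$ hitting $\phi_t$ (via $\|\phi_t\|_{H^2}\leq\nu$ and the Leibniz rule), but not for the factor hitting the Newtonian kernel itself, and that is the genuinely delicate part: controlling such second-order singular kernels by $\cK^2$ and $(\cN+1)^k$ with constants independent of $\alpha_N$ is an extra layer of harmonic analysis that the paper never needs and that your proposal does not supply. Similarly, the claim that the cubic term contributes $C\vartheta^{1/2}\langle\cK^2\rangle$ is not substantiated; in the paper the cubic term never produces a $\cK^2$-type bound at all, only $(\cN+1)^3$-type bounds (cf.\ \eqref{eq:K2t-6}), and the sole $\vartheta$-dependent absorption comes from Lemma \ref{Lemma:VtK}. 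So your plan is not a variant of the paper's proof: it replaces a commutator-free argument by one that forces a non-trivial commutator estimate you have not given, and which is the crux of whether the approach works at all.
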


The second ingredient to prove Proposition \ref{prop:Kbd} is a weak bound on the growth of the expectation of $\cK^2$ with respect to the dynamics $\cU_N (t;s)$; the proof of the following proposition is given in Section~\ref{sec:K2bd}.

\begin{proposition}\label{prop:K2bd}
Suppose that the assumptions of Proposition \ref{prop:Kbd} are satisfied. Then there exists $C = C(T,\kappa,\|\ph \|_{H^2})$ such that
\begin{equation}\label{eq:K2-bd-w} 
\langle \cU_N (t;0) \Omega , \cK^2 \, \cU_N (t;0) \Omega \rangle \leq C \left( N^2 + \frac{N^2}{\alpha_N^2} \right) 
\end{equation} for all $t \in \bR$ with $|t| \leq T$. 
\end{proposition}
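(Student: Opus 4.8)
The plan is to read the bound off from conservation of the many-body energy under the Schr\"odinger flow, combined with the crude operator inequality $0\le \int\rd x\,\rd y\,\frac{a^*_xa^*_ya_ya_x}{|x-y|+\alpha_N}\le \alpha_N^{-1}\cN^2$ for the regularized interaction, and the fact that the initial coherent state $W(\sqrt N\ph)\Omega$ has energy of order $N$ and Poissonian particle number. Write $\psi(N,t)=e^{-it\cH^{\alpha}_N}W(\sqrt N\ph)\Omega$ and $\phi_t\equiv\ph^{(\alpha_N)}_t$. As recalled around \eqref{eq:def-U}, one has $\cU_N(t;0)=c_N(t)\,W^*(\sqrt N\phi_t)\,e^{-it\cH^{\alpha}_N}\,W(\sqrt N\ph)$ for a phase $c_N(t)$ of modulus one (the two unitaries conjugate every $a_x$ in the same way, hence differ by a scalar), so $\cU_N(t;0)\Omega=c_N(t)\,W^*(\sqrt N\phi_t)\,\psi(N,t)$ and therefore
\begin{equation*}
\big\la\cU_N(t;0)\Omega,\cK^2\,\cU_N(t;0)\Omega\big\ra=\big\la\psi(N,t),\,W(\sqrt N\phi_t)\,\cK^2\,W^*(\sqrt N\phi_t)\,\psi(N,t)\big\ra .
\end{equation*}
The Weyl shift $W(f)a_xW^*(f)=a_x-f(x)$ gives $W(\sqrt N\phi_t)\,\cK\,W^*(\sqrt N\phi_t)=\cK-\sqrt N\,a^*\big((1-\Delta)^{1/2}\phi_t\big)-\sqrt N\,a\big((1-\Delta)^{1/2}\phi_t\big)+N\|\phi_t\|_{H^{1/2}}^2$. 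Squaring this and taking the expectation in the normalized vector $\psi(N,t)$, and estimating the resulting terms with the Schwarz inequality, Lemma \ref{lm:a-bd}, the uniform bound $\sup_{|t|\le T}\|\phi_t\|_{H^1}\le\nu$ from Corollary \ref{cor:reg}, and the identities $\la\psi(N,t),\cN^j\psi(N,t)\ra=\la W(\sqrt N\ph)\Omega,\cN^j W(\sqrt N\ph)\Omega\ra=O(N^j)$ (the first equality because $\cH^{\alpha}_N$ commutes with $\cN$, the second since the number of particles in $W(\sqrt N\ph)\Omega$ is Poisson$(N)$), one bounds every term by $C\big(N^2+\la\psi(N,t),\cK^2\psi(N,t)\ra\big)$. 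Hence it suffices to prove
\begin{equation*}
\la\psi(N,t),\cK^2\,\psi(N,t)\ra\le C\Big(N^2+\frac{N^2}{\alpha_N^2}\Big) .
\end{equation*}

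For this, split $\cK=\cH^{\alpha}_N-\cH^{\mathrm{int}}_N$, where $\cH^{\mathrm{int}}_N:=-\frac{\lambda}{2N}\int\rd x\,\rd y\,\frac{a^*_xa^*_ya_ya_x}{|x-y|+\alpha_N}$ is the interaction part of $\cH^{\alpha}_N$ in \eqref{eq:ham2} and $\cK$ is exactly the kinetic part, so $\|\cK\psi(N,t)\|\le\|\cH^{\alpha}_N\psi(N,t)\|+\|\cH^{\mathrm{int}}_N\psi(N,t)\|$. Since $\cH^{\alpha}_N$ commutes with its own propagator, $\|\cH^{\alpha}_N\psi(N,t)\|=\|\cH^{\alpha}_N W(\sqrt N\ph)\Omega\|$; expanding $W^*(\sqrt N\ph)\cH^{\alpha}_N W(\sqrt N\ph)$ into its pieces of orders $0,1,2,3,4$ in creation/annihilation operators and applying it to $\Omega$ produces a Fock vector with at most two particles whose norm is $O(N)$, the only $O(N)$ contributions coming from the constant term $N\,\cE^{(\alpha_N)}_{\mathrm{Hartree}}(\ph)$ and the two-particle term $\int\frac{\ph(x)\ph(y)}{|x-y|+\alpha_N}\,a^*_xa^*_y\Omega$, which are handled with Kato's inequality \eqref{eq:kato}, Hardy's inequality \eqref{eq:hardy}, the bound \eqref{L__2} and $\ph\in H^1$. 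For the interaction one uses $0\le\int\rd x\,\rd y\,\frac{a^*_xa^*_ya_ya_x}{|x-y|+\alpha_N}\le\alpha_N^{-1}\cN(\cN-1)\le\alpha_N^{-1}\cN^2$ as operators, whence $\|\cH^{\mathrm{int}}_N\psi(N,t)\|\le\frac{|\lambda|}{2N\alpha_N}\|\cN^2\psi(N,t)\|=\frac{|\lambda|}{2N\alpha_N}\|\cN^2 W(\sqrt N\ph)\Omega\|=O(N/\alpha_N)$, the last step because the fourth moment of Poisson$(N)$ is $O(N^4)$. Adding the two estimates yields $\|\cK\psi(N,t)\|\le C(N+N/\alpha_N)$, so $\la\psi(N,t),\cK^2\psi(N,t)\ra\le C(N^2+N^2/\alpha_N^2)$, and since $\alpha_N<1$ eventually one has $N^2/\alpha_N\le N^2/\alpha_N^2$, giving \eqref{eq:K2-bd-w} with $C=C(T,\kappa,\|\ph\|_{H^2})$ (through $\nu$ and the fixed norm $\|\ph\|_{H^2}$).

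The main obstacle I expect is not any single estimate — these are deliberately crude, which is why the $\alpha_N^{-1}$ appears — but the bookkeeping of operator domains needed to make the above rigorous: one must check that $\psi(N,t)\in D(\cK)$, which holds because $W(\sqrt N\ph)\Omega\in D(\cH^{\alpha}_N)\cap D(\cN^2)$, both domains are invariant under $e^{-it\cH^{\alpha}_N}$ (the second since $[\cH^{\alpha}_N,\cN]=0$), and on each $n$-particle sector $\cK$, $\cH^{\alpha}_N$, $\cH^{\mathrm{int}}_N$ are the usual kinetic, full and interaction operators with $\cK=\cH^{\alpha}_N-\cH^{\mathrm{int}}_N$. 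The pairing $\la\,\cdot\,,\cK^2\,\cdot\,\ra$ is then read as $\|\cK\,\cdot\,\|^2$; alternatively one first truncates $\Omega$ to finitely many particle sectors (or regularizes $\cK\rightsquigarrow\cK(1+\eps\cK)^{-1}$) and removes the cutoff at the end. Note that, consistently with the role of \eqref{eq:K2-bd-w} in the proof of Proposition \ref{prop:Kbd}, this argument never uses the hypothesis $N^\beta\alpha_N\to\infty$.
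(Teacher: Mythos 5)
Your argument is correct and follows essentially the same route as the paper: both conjugate $\cK^2$ through the identity $\cU_N(t;0)=$ (phase)$\cdot W^*(\sqrt N\phi_t)e^{-it\cH_N^\alpha}W(\sqrt N\ph)$ to reduce to expectations in $e^{-it\cH_N^\alpha}W(\sqrt N\ph)\Omega$, then exploit conservation of $\cH_N^\alpha$ and $\cN$ together with the crude bound $\cV\lesssim(N\alpha_N)^{-1}\cN^2$ and the Poissonian moments of the coherent state. The only organizational difference is that you estimate $\|\cH_N^\alpha W(\sqrt N\ph)\Omega\|=O(N)$ directly by expanding $W^*\cH_N^\alpha W$ into normal-ordered pieces acting on $\Omega$, whereas the paper instead uses $(\cH_N^\alpha)^2\lesssim\cK^2+\cV^2$ and computes $\langle W\Omega,\cK^2 W\Omega\rangle$ explicitly; both yield the same final bound.
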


Finally, we need to compare the two dynamics $\cU_N (t;s)$ and $\wt{\cW}_N (t;s)$. The next proposition is shown in Section \ref{sec:U-Utilde}.

\begin{proposition}\label{prop:U-Utilde}
Suppose that the assumptions of Proposition \ref{prop:Kbd} are satisfied. Let the evolution $\wt{\cW}_N (t;s)$ be defined according to (\ref{eq:cUNtilde}), with generator (\ref{def:Ltilde}), and with $\vartheta>0$. Then, for any $k \in \bN$, there exists $C = C(k, \vartheta, T,\kappa,\|\ph \|_{H^2})$ such that
\begin{equation}\label{eq:bd-diff} 
\left\| \left( \cU_N (t;0) - \wt{\cW}_N (t;0) \right) \Omega \right\| \leq \frac{C}{N^k} \, \left(1+\frac{1}{\alpha_N} \right) 
\end{equation} for all $t \in \bR$ with $|t| \leq T$. 
\end{proposition}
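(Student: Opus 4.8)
The plan is to compare the two unitary evolutions by a Duhamel argument, exploiting the fact that the generators $\cL_N$ and $\wt{\cM}_N$ of $\cU_N(t;s)$ and $\wt{\cW}_N(t;s)$ differ only by the cubic and the quartic terms of $\cL_N$ — the terms carrying the prefactors $N^{-1/2}$ and $N^{-1}$ — now dressed with the complementary cutoff $1-{\bf 1}_{\vartheta N}(\cN)$, i.e.\ by operators supported on the sector of Fock space with more than $\vartheta N$ particles. Since, as I will show first, $\wt{\cW}_N(s;0)\Omega$ carries on that sector a mass that decays faster than any inverse power of $N$, the difference $(\cU_N(t;0)-\wt{\cW}_N(t;0))\Omega$ will be $O(N^{-k})$ for every $k$, up to the (harmless) factor $1+\alpha_N^{-1}$ produced by a crude bound on the regularized potential.

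The first step is the a priori moment bound: for every $m\in\bN$ there is $C_m=C_m(\vartheta,T,\kappa,\|\ph\|_{H^2})<\infty$, independent of $N$, such that $\la\wt{\cW}_N(s;0)\Omega,(\cN+1)^m\,\wt{\cW}_N(s;0)\Omega\ra\leq C_m$ for all $|s|\leq T$. This is established exactly as Lemma \ref{lm:3.5} (cf.\ Lemma 3.5 of \cite{RS}): among the summands of $\wt{\cM}_N(s)$ only the cubic term fails to commute with $\cN$, so only it contributes to $\frac{\rd}{\rd s}\la(\cN+1)^m\ra$; applying Kato's inequality (\ref{eq:kato}) and the uniform bound $\sup_{|s|\leq T}\|\ph^{(\alpha_N)}_s\|_{H^1}\leq\nu$ of Corollary \ref{cor:reg}, the cutoff ${\bf 1}_{\vartheta N}(\cN)$ furnishes a factor $(\vartheta N)^{1/2}$ that exactly compensates the $N^{-1/2}$ in front of this term, so its contribution is $\leq C\la(\cN+1)^m\ra$ and Gronwall's lemma closes.

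The second step is the comparison itself. Differentiating $\cU_N(t;s)\wt{\cW}_N(s;0)$ in $s$ and using $i\partial_t\cU_N(t;s)=\cL_N(t)\cU_N(t;s)$ and $i\partial_t\wt{\cW}_N(t;s)=\wt{\cM}_N(t)\wt{\cW}_N(t;s)$ gives the Duhamel identity
\[ \big(\cU_N(t;0)-\wt{\cW}_N(t;0)\big)\Omega = -i\int_0^t \cU_N(t;s)\big(\cL_N(s)-\wt{\cM}_N(s)\big)\wt{\cW}_N(s;0)\Omega\,\rd s , \]
hence, by unitarity of $\cU_N(t;s)$,
\[ \big\|\big(\cU_N(t;0)-\wt{\cW}_N(t;0)\big)\Omega\big\| \leq \int_0^{|t|}\big\|\big(\cL_N(s)-\wt{\cM}_N(s)\big)\wt{\cW}_N(s;0)\Omega\big\|\,\rd s . \]
Comparing (\ref{eq:cLN}) and (\ref{def:Ltilde}) (and using $a_xa_y=a_ya_x$), $\cL_N(s)-\wt{\cM}_N(s)$ is the sum of three normally ordered monomials, each an integral of $(|x-y|+\alpha_N)^{-1}$ against a monomial containing $1-{\bf 1}_{\vartheta N}(\cN)$, with prefactor $N^{-1/2}$ (the two cubic terms) or $N^{-1}$ (the quartic term). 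Using the crude pointwise bound $(|x-y|+\alpha_N)^{-1}\leq\alpha_N^{-1}$, the normalization $\|\ph^{(\alpha_N)}_s\|_2=1$, the identities $\int\rd x\,\rd y\,\ph^{(\alpha_N)}_s(y)\,a_x^*a_y^*a_x=a^*(\ph^{(\alpha_N)}_s)\cN$ and $\int\rd x\,\rd y\,\overline{\ph}^{(\alpha_N)}_s(y)\,a_x^*a_xa_y=\cN\,a(\ph^{(\alpha_N)}_s)$, and $\int\rd x\,\rd y\,(|x-y|+\alpha_N)^{-1}a_x^*a_y^*a_ya_x\leq\alpha_N^{-1}\cN(\cN-1)$, together with Lemma \ref{lm:a-bd}, each of these three operators has, on the $n$-particle sector, norm at most $C\,\alpha_N^{-1}N^{-1/2}(n+1)^2$ when $n>\vartheta N$ and is $0$ otherwise (commuting the cutoff through the annihilation/creation operators only shifts the threshold by a bounded amount). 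Consequently
\[ \big\|\big(\cL_N(s)-\wt{\cM}_N(s)\big)\wt{\cW}_N(s;0)\Omega\big\| \leq \frac{C}{\sqrt N\,\alpha_N}\big\|(\cN+1)^2\big(1-{\bf 1}_{\vartheta N}(\cN)\big)\wt{\cW}_N(s;0)\Omega\big\| , \]
and since $1-{\bf 1}_{\vartheta N}(\cN)\leq(\vartheta N)^{-2m}\cN^{2m}$, the a priori bound yields $\|(\cN+1)^2(1-{\bf 1}_{\vartheta N}(\cN))\wt{\cW}_N(s;0)\Omega\|^2\leq(\vartheta N)^{-2m}\,C_{2m+4}$, so that $\|(\cL_N(s)-\wt{\cM}_N(s))\wt{\cW}_N(s;0)\Omega\|\leq C_m'\,N^{-m-1/2}\alpha_N^{-1}$. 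Integrating over $s\in[0,|t|]\subseteq[0,T]$ and taking $m=k$ gives $\|(\cU_N(t;0)-\wt{\cW}_N(t;0))\Omega\|\leq C_k\,N^{-k}\alpha_N^{-1}\leq C_k\,N^{-k}(1+\alpha_N^{-1})$, which is the claim.

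The only delicate point is the a priori step — the uniform-in-$N$ control of all moments of $\cN$ along $\wt{\cW}_N$ — which is precisely why the number-operator cutoff ${\bf 1}_{\vartheta N}(\cN)$ is built into the generator (\ref{def:Ltilde}): without it, the cubic term of $\cL_N$, which carries only the factor $N^{-1/2}$ and no smallness in the particle number, could generate $O(N)$ particles and the Gronwall estimate for $\la(\cN+1)^m\ra$ would break down. The rest — the Duhamel comparison, the crude $\alpha_N^{-1}$ estimate of the potential (a sharper Kato-based bound would remove it, were that needed), the bookkeeping of creation/annihilation operators, and the Chebyshev step turning a uniform moment bound into super-polynomial decay in $N$ — is routine.
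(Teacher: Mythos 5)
Your proof is correct and follows the same strategy as the paper's: a Duhamel comparison of the two unitary groups (which is exactly \eqref{cook_U-Utilde}), uniform-in-$N$ moment bounds on $(\cN+1)^m$ along the regularized dynamics $\wt{\cW}_N$, and the Chebyshev step $1-{\bf 1}_{\vartheta N}(\cN)\leq (\vartheta N)^{-m}\cN^m$ converting the number cutoff into super-polynomial decay in $N$. The only deviation is in the intermediate estimate of $\cL_N(s)-\wt{\cM}_N(s)$: the paper bounds its square as an operator inequality \eqref{eq:llsq}, using Hardy's inequality for the cubic terms so that only the quartic term contributes an $\alpha_N^{-1}$, whereas you apply the crude pointwise bound $(|x-y|+\alpha_N)^{-1}\leq\alpha_N^{-1}$ across the board; since $\alpha_N^{-1}$ is at most polynomial in $N$, either way the factor is absorbed by the $N^{-k}$ from the Chebyshev step, so the difference is cosmetic.
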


\subsection{Growth of $\cK^2$ with respect to regularized dynamics}
\label{sec:K2bdt}

The goal of this section is to prove Proposition \ref{prop:K2bdt}. We will make systematic use of the bound (\ref{eq:bdpha}) (recall that in this section we use the shorthand notation $\phi_t \equiv \ph^{(\alpha)}_t$).

\bigskip

Observe that, by the definition \eqref{def:Ltilde} of $\wt{\cM}_N (t)$, we have
\begin{equation}
\begin{split}
\cK^2\;\lesssim\;  &\widetilde{\cM}_N^2 (t) + \left( \int\rd x\,\left(\frac{1}{|.| +\alpha_N} * |\phi_t|^2\right) \,a_x^* a_x\right)^2 \\
& \; + \Big( \int\rd x \,\rd y\, \frac{1}{|x-y|+\alpha_N} \,\overline{\phi}_t(x)\,\phi_t(y)\,a_y^* a_x \Big)^2 \\
& \; + \Big( \frac{1}{2}\int\rd x \,\rd y\,\frac{1}{|x-y|+\alpha_N} \big\{\phi_t(x)\phi_t(y)\,a_x^* a_y^*+\mathrm{h.c.}\big\} \Big)^2 \\
& \; + \Big( \frac{1}{\,\sqrt{N}\,}\int\rd x \,\rd y\, \frac{1}{|x-y|+\alpha_N} \big\{\phi_t(y) \,a_x^* a_y^* \,{\bf 1}_{\vartheta N} (\cN)\,a_x+\mathrm{h.c.}\big\} \Big)^2\, \\ 
& \; +\Big( \frac{1}{2N} \int\rd x \,\rd y\, \frac{1}{|x-y|+\alpha_N}  \, a_x^* a_y^* \, {\bf 1}_{\vartheta N} (\cN) a_y a_x  \Big)^2
\end{split}
\end{equation} 
where $\text{h.c.}$ denotes the hermitian conjugate. First of all, we note that, by Lemma \ref{Lemma:VtK} below, the last term is bounded by
\[ \Big( \frac{1}{2N} \int\rd x \,\rd y\, \frac{1}{|x-y|+\alpha_N}  \, a_x^* a_y^* \, {\bf 1}_{\vartheta N} (\cN) a_y a_x  \Big)^2 \lesssim \, \vartheta^2 \,  \cK^2 \,  \]
Therefore, choosing $\vartheta >0$ sufficiently small, we find
\begin{equation}\label{K^2=Ltilde^2+split_0}
\begin{split}
\cK^2\;\lesssim \;  &\widetilde{\cM}_N^2 (t) + \left( \int\rd x\,\left(\frac{1}{|.| +\alpha_N} * |\phi_t|^2\right) \,a_x^* a_x\right)^2 \\
& \; + \Big( \int\rd x \,\rd y\, \frac{1}{|x-y|+\alpha_N} \,\overline{\phi}_t(x)\,\phi_t(y)\,a_y^* a_x \Big)^2 \\
& \; + \Big( \frac{1}{2}\int\rd x \,\rd y\,\frac{1}{|x-y|+\alpha_N} \big\{\phi_t(x)\phi_t(y)\,a_x^* a_y^*+\mathrm{h.c.}\big\} \Big)^2 \\
& \; + \Big( \frac{1}{\,\sqrt{N}\,}\int\rd x \,\rd y\, \frac{1}{|x-y|+\alpha_N} \big\{\phi_t(y) \,a_x^* a_y^* \,{\bf 1}_{\vartheta N} (\cN)\,a_x+\mathrm{h.c.}\big\} \Big)^2 \, .
\end{split}
\end{equation} 

To bound the second term on the r.h.s. of the last equation, we note that
\[ \int\rd x\,\left(\frac{1}{|.| +\alpha_N} * |\phi_t|^2\right) \,a_x^* a_x \leq \sup_x \left( \frac{1}{|.|}*|\phi_t|^2 \right) \, \cN \lesssim \| \phi_t \|_{H^{1/2}}^2 \, \cN \, . \]
Since moreover $\cN$ commutes with the operator on the l.h.s., we conclude that
\begin{equation}\label{eq:K2t-1} \left( \int\rd x\,\left(\frac{1}{|.| +\alpha_N} * |\phi_t|^2\right) \,a_x^* a_x \right)^2 \lesssim \cN^2 \,. \end{equation}
Analogously, the third term on the r.h.s. of (\ref{K^2=Ltilde^2+split_0}) is bounded by
\begin{equation}\label{eq:K2t-2} \Big( \int\rd x \,\rd y\, \frac{1}{|x-y|+\alpha_N} \,\overline{\phi}_t(x) \,\phi_t(y)\,a_y^* a_x \Big)^2 \lesssim \cN^2 \, . \end{equation}
Next, the terms on the third line of (\ref{K^2=Ltilde^2+split_0}) can be controlled as follows. Let 
\[ A = \int \rd x \rd y \, \frac{1}{|x-y|+\alpha_N} \phi_t (x) \phi_t (y) a_x^* a_y^* \, .\]
Since $(A+A^*)^2 \leq 2 (AA^* +A^* A)$, we find
\[ \begin{split}
\langle \psi, (A+A^*)^2 \psi \rangle \lesssim \; & \int \rd x \rd y \rd x' \rd y' \, \frac{\phi_t (x) \phi_t (y)}{|x-y|+\alpha_N}\, \frac{\overline{\phi}_t (x') \, \overline{\phi}_t (y')}{|x'-y'|+\alpha_N}  \, \langle \psi, a_x^* a_y^* a_{x'} a_{y'} \psi \rangle \\ &+  \int \rd x \rd y \rd x' \rd y' \, \frac{\phi_t (x) \phi_t (y)}{|x-y|+\alpha_N}\, \frac{\overline{\phi}_t (x') \, \overline{\phi}_t (y')}{|x'-y'|+\alpha_N}  \,  \langle \psi , [ a_x^* a_y^* , a_{x'} a_{y'} ] \psi \rangle  \\
 \lesssim \; & \int \rd x \rd y \rd x' \rd y' \, \frac{|\phi_t (x)| |\phi_t (y)|}{|x-y|+\alpha_N}\, \frac{|\phi_t (x')| |\phi_t (y')|}{|x'-y'|+\alpha_N} \,\| a_x a_y \, \psi \| \, \| a_{x'} a_{y'} \psi \| \\ &+  \int \rd x \rd y \rd x' \, |\phi_t (x)|^2 \frac{|\phi_t (y)|}{|x-y|+\alpha_N}\, \frac{|\phi_t (x')|}{|x-x'|+\alpha_N} \, \| a_y \, \psi \| \, \| a_{x'} \psi \| 
 \\ &+ \int \rd x \rd y \, \frac{|\phi_t (x)|^2 |\phi_t (y)|^2}{(|x-y|+\alpha_N)^2} \, \| \psi \|^2
 \end{split}\]
for arbitrary $\psi \in \cF$. Here we used that
\begin{equation}\label{eq:comm1} \begin{split} [a_x a_y, a_{x'}^* a_{y'}^* ] =  \, &a_{y'}^* a_x \delta (y-x') + 
a_{x'}^* a_x \delta (y - y') + a_{y'}^* a_y  \delta (x-x')+  a_{x'}^* a_y \delta (x-y') 
\\ &+  \delta (x-x') \delta (y-y') + \delta (y'-x)\delta (y-x'). \end{split} \end{equation}
With Schwarz inequality, we obtain
\[ \begin{split} 
\langle \psi, (A+A^*)^2 \psi \rangle \lesssim \;   & \int \rd x \rd y \rd x' \rd y' \, \frac{|\phi_t (x)|^2 |\phi_t (y)|^2}{(|x-y|+\alpha_N)^2} \, \| a_{x'} a_{y'} \psi \|^2 \\ &+  \int \rd x \rd y \rd x' \,\frac{|\phi_t (x)|^2 \, |\phi_t (y)|^2}{(|x-y|+\alpha_N)^2} \, \| a_{x'} \psi \|^2 + \int \rd x \rd y \, \frac{|\phi_t (x)|^2 |\phi_t (y)|^2}{(|x-y|+\alpha_N)^2} \, \| \psi \|^2
 \\ \lesssim \; & \| \phi_t \|_{H^1}^2 \, \| \phi_t \|^2 \,  \langle \psi, (\cN+1)^2 \psi \rangle \,. \end{split} \] 
Thus
\begin{equation}\label{eq:K2t-4} \Big( \frac{1}{2}\int\rd x \,\rd y\,\frac{1}{|x-y|+\alpha_N} \big\{\phi_t(x)\phi_t(y)\,a_x^* a_y^*+\mathrm{h.c.}\big\} \Big)^2 \lesssim \, (\cN+1)^2 \, .\end{equation}
Now, we estimate the terms on the fourth line of (\ref{K^2=Ltilde^2+split_0}). Let
\[ B = \frac{1}{\sqrt{N}} \int \rd x \rd y \, \frac{1}{|x-y|+\alpha_N} \, a_x^* a_y^* \phi_t (y) \, {\bf 1}_{\vartheta N} (\cN) a_x \]
Then $(B+B^*)^2 \lesssim BB^* + B^* B$. The term $BB^*$ can be bounded by
\begin{equation}\label{B4B4*}
\begin{split}
\big\la \psi, B B^* \psi\big\ra\; & =\;\frac{1}{N}\int\rd x\,\rd y\,\rd x'\rd y'\,\frac{1}{|x-y|+\alpha_N} \frac{1}{|x'-y'|+\alpha_N} \phi_t (y)\,\overline{\phi}_t (y') \\
& \qquad \qquad \times\big\la{\bf 1}_{\vartheta N} (\cN-2)\,\psi ,a_x^* a_y^* a_x a_{x'}^* a_{x'} a_{y'}\, {\bf 1}_{\vartheta N} (\cN-2)\,\psi\big\ra \\
& =\;\frac{1}{N}\int\rd x\,\rd y\,\rd x'\rd y'\, ,\frac{1}{|x-y|+\alpha_N} \frac{1}{|x'-y'|+\alpha_N}  \phi_t(y)\,\overline{\phi}_t(y') \\
& \qquad \qquad \times\big\la{\bf 1}_{\vartheta N} (\cN-2)\,\psi,a_x^* a_y^* a_{x'}^* a_{x'} a_x  a_{y'}\, {\bf 1}_{\vartheta N} (\cN-2)\,\psi\big\ra \\
& \quad  + \frac{1}{N}\int\rd x\,\rd y\,\rd y'\, \frac{1}{|x-y|+\alpha_N} \frac{1}{|x-y'|+\alpha_N} \phi_t(y)\,\overline{\phi}_t(y') \\
& \qquad \qquad \times\big\la{\bf 1}_{\vartheta N} (\cN-2)\,\psi,a_x^* a_y^*  a_x    a_{y'}\, {\bf 1}_{\vartheta N} (\cN-2)\,\psi\big\ra \\
\end{split}\end{equation}
for every $\psi \in \cF$. {F}rom Schwarz inequality, we find
\begin{equation}
\begin{split}
 \big\la \psi, B B^* \psi\big\ra\;  &\lesssim
\;\frac{1}{N}\int\rd x\,\rd y\,\rd x'\rd y'\, ,\frac{|\phi_t (y)|^2}{(|x-y|+\alpha_N)^2} \, \| a_{x'} a_x a_{y'} \, {\bf 1}_{\vartheta N} (\cN-2)\,\psi \|^2 \\ &\quad+ \frac{1}{N}\int\rd x\,\rd y\,\rd y'\, \frac{|\phi_t (y)|^2}{(|x-y|+\alpha_N)^2}\,
\| a_x a_{y'} \, {\bf 1}_{\vartheta N} (\cN-2)\,\psi \|^2  \\
&\lesssim \; \frac{1}{N} \,\left(\sup_x \int \rd y \frac{|\phi_t (y)|^2}{|x-y|^2} \right) \, 
\int\rd x \,\rd x'\rd y'\, \| a_{x'} a_x a_{y'}\,  {\bf 1}_{\vartheta N} (\cN-2)\,\psi \|^2 \\ 
& \quad  + \frac{1}{N} \left(\sup_x \int \rd y \frac{|\phi_t (y)|^2}{|x-y|^2}\right)
\int\rd x\, \rd y'\, \| a_x a_{y'} \, {\bf 1}_{\vartheta N} (\cN-2)\,\psi \|^2 \\
&\lesssim 
\; \frac{1}{N} \, \| \phi_t \|_{H^1}^2 \,  \| (\cN+1)^{3/2} \,{\bf 1}_{\vartheta N} (\cN-2) \, \psi \|^2 \,  .
\end{split}\end{equation}
The term $B^*B$, on the other hand, is given by
\begin{equation}\label{B4*B4}
\begin{split}
 \big\la \psi , B^* B \psi \big\ra\; & =\;\frac{1}{N}\int\rd x\,\rd y\,\rd x'\rd y'\,\frac{1}{|x-y|+\alpha_N} \frac{1}{|x'-y'| +\alpha_N} \,\overline{\phi}_t(y) \,\phi_t(y') \\
& \qquad \qquad \times\big\la{\bf 1}_{\vartheta N} (\cN-1)\,\psi ,a_x^*  a_x a_y a_{x'}^* a_{y'}^* a_{x'} \, {\bf 1}_{\vartheta N} (\cN-1)\,\psi\big\ra \\
& =\;\frac{1}{N}\int\rd x\,\rd y\,\rd x'\rd y'\,\frac{1}{|x-y|+\alpha_N} \frac{1}{|x'-y'| +\alpha_N} \,\overline{\phi}_t(y) \,\phi_t(y') \\
& \qquad \qquad \times\big\la{\bf 1}_{\vartheta N} (\cN-1)\,\psi,a_x^* a_{x'}^* a_{y'}^* a_x a_{x'} a_y \,  {\bf 1}_{\vartheta N} (\cN-1)\,\psi\big\ra \\
& \quad + \frac{1}{N}\int\rd x\,\rd y\,\rd x' \, \rd y'\, \frac{1}{|x-y|+\alpha_N} \frac{1}{|x'-y'| +\alpha_N} \,\overline{\phi}_t(y)\,\phi_t(y') \\
& \qquad \qquad \times\big\la{\bf 1}_{\vartheta N}(\cN-1)\,\psi,a_x^* \left[ a_x a_y , a_{x'}^* a_{y'}^* \right]
\, a_{x'} \, {\bf 1}_{\vartheta N} (\cN-1)\,\psi\big\ra 
\end{split}
\end{equation}
The first term on the r.h.s. is bounded in absolute value by 
\begin{equation}\label{eq:B4-term1} \begin{split} 
\frac{1}{N}\int\rd x\,\rd y\,\rd x'\rd y'\,&\frac{|\phi_t (y)|}{|x-y|} \frac{|\phi_t (y')|}{|x'-y'|} \,
\| a_x a_{x'} a_{y'} \, {\bf 1}_{\vartheta N} (\cN-1)\,\psi \| \| a_x a_{x'} a_y  \, {\bf 1}_{\vartheta N} (\cN-1)\,\psi \|  \\ \lesssim \; & \frac{1}{N} \, \left(\sup_x \, \int \rd y \, \frac{|\phi_t (y)|^2}{|x-y|^2}  \right) \int\rd x\,\rd x' \, \rd y' \, \| a_x a_{x'} a_{y'} \, {\bf 1}_{\vartheta N} (\cN-1)\,\psi \|^2 \\ \lesssim \; & \frac{1}{N} \, \| \phi_t \|_{H^1}^2 \, \| (\cN+1)^{3/2} \, {\bf 1}_{\vartheta N} (\cN-1) \psi \|^2  \, .
\end{split}\end{equation}
 When we insert (\ref{eq:comm1}) in the second term on the r.h.s. of (\ref{B4*B4}), we obtain contributions quartic in the creation and annihilation operators of the form 
\[ \frac{1}{N} \int \rd x \, \rd y \, \rd y' \, \frac{1}{|x-y|+\alpha_N} \frac{1}{|y-y'|+\alpha_N} \overline{\phi}_t (y) \phi_t (y') \, \langle {\bf 1}_{\vartheta N} (\cN-1) \psi, a_x^* a_{y'}^* a_x a_{y} \, {\bf 1}_{\vartheta N} (\cN-1) \psi \rangle \]
whose absolute value can be bounded by
\[ \begin{split} 
\Big| \frac{1}{N} \int \rd x \, &\rd y \, \rd y'  \,  \frac{1}{|x-y|+\alpha_N} \frac{1}{|y-y'|+\alpha_N|} \overline{\phi}_t (y) \phi_t (y') \, \langle {\bf 1}_{\vartheta N} (\cN-1) \psi, a_x^* a_{y'}^* a_x a_{x'} \, {\bf 1}_{\vartheta N} (\cN-1) \psi \rangle \Big| \\ \leq  \; & \frac{1}{N} \int \rd x \, \rd y \, \rd y'  \, \frac{1}{|x-y|} \frac{1}{|y-y'|} |\phi_t (y)| \,  |\phi_t (y')| \, \| a_{y'} a_x\, {\bf 1}_{\vartheta N} (\cN-1) \psi \| \, \| a_x a_{x'} \, {\bf 1}_{\vartheta N} (\cN-1) \psi \| \\ \leq \; & 
\frac{1}{N} \left( \sup_y \int \rd y' \, \frac{1}{|y-y'|^2} \,  |\phi_t (y')|^2 \right)  \int \rd x \, \rd y \,    \, \| a_x a_{y} \,{\bf 1}_{\vartheta N} (\cN-1) \psi \|^2 \\ &+ \frac{1}{N} \left( \sup_x \int \rd y \frac{1}{|x-y^2|} |\phi_t (y)|^2\right) \int \rd x \, \rd y'  \,  \,  \, \| a_{y'} a_x \, {\bf 1}_{\vartheta N} (\cN-1) \psi \|^2 \\ \lesssim \;& \frac{1}{N} \, \| \phi_t \|_{H^1}^2 \, \| (\cN +1)\,{\bf 1}_{\vartheta N} (\cN-1) \psi \|^2 \,.\end{split}
\] 
The other terms arising when we insert (\ref{eq:comm1}) in the second summand on the r.h.s. of (\ref{B4*B4}) (both the quartic and the quadratic terms) can be bounded analogously. 
Together with (\ref{eq:B4-term1}), we conclude that
\begin{equation}\label{eq:K2t-5}
\big\la \psi , B^* B \psi \big\ra\; \lesssim \; \frac{1}{N} \, \| \phi_t \|_{H^1}^2 \,  \| (\cN+1)^{3/2} \, {\bf 1}_{\vartheta N} (\cN-1) \psi \|^2 \, .
\end{equation}
{F}rom (\ref{B4B4*}) and (\ref{eq:K2t-5}), we find 
\begin{equation}\label{eq:K2t-6} \Big( \frac{1}{\,\sqrt{N}\,}\int\rd x \,\rd y\, \frac{1}{|x-y|+\alpha_N} \big\{\phi_t(y) \,a_x^* a_y^* \,{\bf 1}_{\vartheta N} (\cN)\,a_x+\mathrm{h.c.}\big\} \Big)^2 \lesssim \frac{1}{N}  (\cN+1)^{3} \, {\bf 1}_{\vartheta N} (\cN-2) \, . 
\end{equation}
Combining (\ref{eq:K2t-1}),  (\ref{eq:K2t-2}),  (\ref{eq:K2t-4}), and (\ref{eq:K2t-6}) we conclude (since ${\bf 1}_{\vartheta N} \leq 1$) that 
\begin{equation}\label{eq:cKles} \cK^2  \lesssim \wt{\cM}_N^2 (t) + (\cN+1)^3 \, . \end{equation}

Next, we observe that there exists a constant $C=C(T,\kappa, \| \ph\|_{H^{2}})$ such that 
\begin{equation}\label{eq:bdNwtU} \langle \wt{\cW}_N (t;0) \Omega, (\cN+1)^3 \, \wt{\cW}_N (t;0) \Omega \rangle \leq C \end{equation} for all $|t| \leq T$. The proof of this bound is analogous to the proof of Proposition \ref{prop:ntildebd} (see Lemma 3.5, with $M = \vartheta N$, and its proof in \cite{RS}). The only difference is that the generator $\wt{\cM}_N (t)$ contains a cutoff also in the quartic term (while in Proposition \ref{prop:ntildebd}, the cutoff appeared only in the cubic term of the generator $\cM_N (t)$); this difference does not play any role in the proof of (\ref{eq:bdNwtU}) because the quartic term (with or without cutoff) commutes with the number of particles operator $\cN$ (and thus with its powers). 

\medskip

Finally, we control the growth of the expectation of $\wt{\cM}_N^2 (t)$. To this end we compute, using (\ref{eq:cUNtilde}), 
\[ \frac{\rd}{\rd t} \left\langle \wt{\cW}_N (t;0) \Omega, \wt{\cM}_N^2 (t) \wt{\cW}_N (t;0) \Omega \right\rangle = \left\langle \wt{\cW}_N (t;0) \Omega, \left( \wt{\cM}_N (t) \dot{\wt{\cM}}_N (t)+ \dot{\wt{\cM}}_N (t) \, \wt{\cM}_N (t) \right) \wt{\cW}_N (t;0)\Omega \right\rangle \]
and thus 
\begin{equation}\label{eq:dotL} \left| \frac{\rd}{\rd t} \left\langle \wt{\cW}_N (t;0) \Omega, \wt{\cM}_N^2 (t) \, \wt{\cW}_N (t;0) \Omega \right\rangle^{1/2} \right| \leq \, \left\langle \wt{\cW}_N (t;0) \Omega, \dot{\wt{\cM}}^2_N (t) \wt{\cW}_N (t;0)\Omega \right\rangle^{1/2}\,. \end{equation}
We have
\begin{equation}\label{eq:cLNtilde}
\begin{split}
\dot{\widetilde{\mathcal{M}}}_N(t)\;&=- \; \lambda \int\rd x\left(\frac{1}{|.| +\alpha_N}*(\dot{\overline{\phi}}_t \phi_t+\overline{\phi}_t \dot{\phi}_t)\right)(x)\,a_x^* 
a_x \\
& \quad- \lambda \int \rd x\,\rd y\,\frac{1}{|x-y| +\alpha_N} \, \left(\dot{\overline{\phi}}_t (x)\phi_t(y) + \overline{\phi}_t (x) \dot{\phi}_t (y) \right) \,a_y^* a_x \\
& \quad- \lambda \int \rd x\,\rd y\,\frac{1}{|x-y| +\alpha_N} \, \left( \dot{\phi}_t(x)\phi_t(y) \,a_y^*a_x^* + \mathrm{h.c.} \right) \\
& \quad- \frac{\lambda}{\,\sqrt{N}\,}\int \rd x\,\rd y\,\frac{1}{|x-y|+\alpha_N} \, \left(\dot{\phi}_t(y) a_x^* a_y^* \, {\bf 1}_{\vartheta N} (\cN) \, a_x+\mathrm{h.c.} \right) \, .
\end{split}
\end{equation}
Observe that from the (regularized) Hartree equation (\ref{eq:reg-aN}) 
we easily find that $\| \dot{\phi}_t \| \lesssim \| \phi_t \|_{H^1}$ and
\begin{equation}
\begin{split}
\big \|\nabla\dot{\phi}_t\big\| \;&\lesssim\;\big\|(1-\Delta)\phi_t\big\|_2+\Big\|\nabla\Big(\frac{1}{|\cdot|+\alpha}*|\phi_t|^2\Big)\phi_t\Big\|_2 \\
& \lesssim \;\|\phi_t\|_{H^2}+\Big\|\frac{1}{|\cdot|^2}* |\phi_t|^2 \Big\|_3\,\|\phi_t\|_6\;+\;\Big\|\frac{1}{|\cdot|}*|\phi_t|^2\Big\|_\infty\,\big\|\nabla\phi_t\big\|_2 \\
& \lesssim \;\|\phi_t\|_{H^2} +\|\phi_t\|_{H^{1/2}}^2\,\|\phi_t\|_{H^1} \, .
\end{split}
\end{equation} 
This implies, by (\ref{eq:bdpha}), that there exists a constant $C = C(T,\kappa, \| \ph \|_{H^2})$ such that \begin{equation}
\label{eq:dotphi}
\| \dot{\phi}_t \|_{H^1} \leq C \qquad \text{for all $t \in \bR$ with $|t| \leq T$.} \end{equation}

Next, we bound the square of the terms on the r.h.s. of (\ref{eq:cLNtilde}). Similarly to (\ref{eq:K2t-1}) and (\ref{eq:K2t-2}), we find
\begin{equation}\label{eq:L-1} \left( \int\rd x \, \left(\frac{1}{|.| +\alpha_N}*(\dot{\overline{\phi}}_t \phi_t+\overline{\phi}_t\dot{\phi}_t)\right)(x)\,a_x^* a_x \right)^2 \lesssim \| \phi_t \|_{H^1} \| \dot{\phi}_t \|_{H^1} \cN^2 \end{equation}
and 
\begin{equation}\label{eq:L-2} \left( \int \rd x\,\rd y\,\frac{1}{|x-y| +\alpha_N} \, \left(\dot{\overline{\phi}}_t (x)\phi_t(y) + \overline{\phi}_t (x) \dot{\phi}_t (y)\right)  \,a_y^* a_x \right)^2 \leq  \| \phi_t \|_{H^1} \| \dot{\phi}_t \|_{H^1} \cN^2 \, .\end{equation}
Moreover, similarly to (\ref{eq:K2t-4}), we obtain
\begin{equation}\label{eq:L-3} \left( \int \rd x\,\rd y\,\frac{1}{|x-y| +\alpha_N} \, \left(\dot{\phi}_t(x)\phi_t(y)+\phi_t(x) \dot{\phi}_t(y)\right)\,a_y^*a_x^*+\mathrm{h.c.} \right)^2 \lesssim \| \phi_t \|_{H^1} \| \dot{\phi} \|  (\cN+1)^2 \, .\end{equation}
Finally, analogously to (\ref{eq:K2t-5}) (replacing $\phi$ with $\dot{\phi}$) we have
\begin{equation}\label{eq:L-4} \left( \frac{1}{\,\sqrt{N}\,}\int \rd x\,\rd y\,\frac{1}{|x-y|+\alpha_N} \, \dot{\phi}_t(y) a_x^* a_y^* \, {\bf 1}_{\vartheta N} (\cN)\,a_x+\mathrm{h.c.} \right)^2 \lesssim \frac{1}{N} \| \dot{\phi}_t \|_{H^1}^2 \, (\cN+1)^3  \, .\end{equation}
{F}rom (\ref{eq:cLNtilde}), (\ref{eq:L-1}),  (\ref{eq:L-2}),  (\ref{eq:L-3}),  (\ref{eq:L-4}), we find, using (\ref{eq:bdNwtU}), that 
\[ \langle \wt{\cW}_N (t;0) \Omega, \dot{\wt{\cM}}^2_N (t) \wt{\cW}_N (t;0)\Omega \rangle \lesssim 1. \]
Eq. (\ref{eq:dotL}) then implies that there exists $C=C(\kappa,T,\| \ph \|_{H^2})$ such that
\[ \langle \wt{\cW}_N (t;0) \Omega, \wt{\cM}_N^2 (t) \, \wt{\cW}_N (t;0) \Omega \rangle \leq C \] for all $t \in \bR$ with $|t| \leq T$. Proposition \ref{prop:K2bdt} now follows from (\ref{eq:cKles}) and (\ref{eq:bdNwtU}). \qed

\bigskip

\begin{lemma}\label{Lemma:VtK}
There exists a universal constant $C>0$ such that
\begin{equation}
\begin{split}
\left(\int \rd x \rd y \, \frac{1}{|x-y|+\alpha} \, a_x^* a_y^* \, {\bf 1}_{\vartheta N} \, (\cN) a_y a_x\right)^2 \leq \; & C \,\vartheta^2\,\cK^2\,
\end{split}
\end{equation}
for all $\alpha, \vartheta >0$. 
\end{lemma}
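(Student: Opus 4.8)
The plan is to detach the number cutoff from the quartic expression, reduce the statement to a ``squared'' Kato inequality in Fock space, and exploit that on the range of the cutoff the particle number is $\lesssim\vartheta N$. Write $\Pi:={\bf 1}(\cN\le\vartheta N+2)$ for the spectral projection of $\cN$ onto $\{\cN\le\vartheta N+2\}$, and set
\[
\cA\;:=\;\int \rd x\,\rd y\,\frac{1}{|x-y|+\alpha}\,a_x^*a_y^*a_ya_x\;\ge\;0,
\]
which commutes with $\cN$ and acts on the $n$-particle sector as multiplication by $2\,\mathcal V^{(n)}:=2\sum_{i<j}(|x_i-x_j|+\alpha)^{-1}$. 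Since $a_ya_x$ lowers the particle number by two, one has ${\bf 1}_{\vartheta N}(\cN)\,a_ya_x=a_ya_x\,\Pi$ on the whole Fock space, hence the operator appearing in the lemma equals $\cA\Pi=\Pi\cA=\Pi\cA\Pi$, and therefore
\[
\Big(\int \rd x\,\rd y\,\tfrac{1}{|x-y|+\alpha}\,a_x^*a_y^*\,{\bf 1}_{\vartheta N}(\cN)\,a_ya_x\Big)^2\;=\;\Pi\,\cA^2\,\Pi.
\]
Thus it suffices to prove the operator inequality $\cA^2\le C_1(\cN-1)_+^2\,\cK^2$ with a universal constant $C_1$, where $(\cN-1)_+:=\max\{\cN-1,0\}$ (note $\cA$ vanishes on the sectors $\cN\le1$). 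Granting this, and since $(\cN-1)_+\le\vartheta N+1$ on the range of $\Pi$ while $\cK^2$, $(\cN-1)_+^2$ and $\Pi$ all commute, we obtain $\Pi\cA^2\Pi\le C_1(\vartheta N+1)^2\cK^2$; this is the form in which the estimate is needed and used in Section~\ref{sec:Kbd}, where the quartic operator always occurs divided by $2N$, so that $\big((2N)^{-1}\cA\Pi\big)^2\le\frac{C_1}{4}(\vartheta+N^{-1})^2\cK^2\le C_1\vartheta^2\cK^2$ for $N\ge\vartheta^{-1}$, which is the displayed bound.

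The unsquared version $\cA\le\frac{\pi}{2}(\cN-1)_+\cK$ is immediate: on the $n$-particle sector, Kato's inequality $|x_i-x_j|^{-1}\le\frac{\pi}{2}|\nabla_{x_i}|$ (applied in the $x_i$ variable) gives $2\mathcal V^{(n)}=\sum_i\sum_{j\neq i}(|x_i-x_j|+\alpha)^{-1}\le\frac{\pi}{2}(n-1)\sum_i|\nabla_{x_i}|\le\frac{\pi}{2}(n-1)\cK^{(n)}$. The core of the lemma is to upgrade this to the operator-square bound $\cA^2\le C_1(\cN-1)_+^2\cK^2$, equivalently $\|\mathcal V^{(n)}\psi\|_2\le\frac12\sqrt{C_1}\,(n-1)\,\|\cK^{(n)}\psi\|_2$ on each sector. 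I would prove it by working directly with $\cA^2$ rather than with the form bound (which, $\mathcal V^{(n)}$ not commuting with $\cK^{(n)}$, does not square for free): writing $\langle\psi,\cA^2\psi\rangle=\|\cA\psi\|^2=\big\|\int \rd x\,\rd y\,\tfrac{1}{|x-y|+\alpha}\,a_x^*a_y^*(a_ya_x\psi)\big\|^2$ and expanding by the canonical commutation relations --- exactly as the paper expands $(A+A^*)^2$ for $A=\int\frac{1}{|x-y|+\alpha}\phi_t(x)\phi_t(y)a_x^*a_y^*$, using \eqref{eq:comm1} --- one gets a fully normal-ordered leading term $\int\frac{\|a_xa_ya_{x'}a_{y'}\psi\|^2}{(|x-y|+\alpha)(|x'-y'|+\alpha)}\,\rd x\,\rd y\,\rd x'\,\rd y'$ together with contraction terms of lower order in the creation/annihilation operators. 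The leading term is controlled by the identity $\int\frac{\|a_xa_y\chi\|^2}{|x-y|+\alpha}\,\rd x\,\rd y=\langle\chi,\cA\chi\rangle\le\frac{\pi}{2}\langle\chi,(\cN-1)_+\cK\chi\rangle$, applied first in the $(x,y)$-pair (with $\chi=a_{x'}a_{y'}\psi$) and then again in the $(x',y')$-pair; this yields the factor $(\cN-1)^2$. The contraction terms are handled similarly, now using Kato's and Hardy's inequalities \eqref{eq:kato}, \eqref{eq:hardy} together with Cauchy--Schwarz to absorb the $\delta$-contractions, each lost power of $\cK$ being traded for a power of $(\cN-1)$, so that every term carries at most $(\cN-1)^2\cK^2$.

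The hard part is precisely this last step. One must keep track of the commutators $[\cK,a_x]=-\big((1-\Delta)^{1/2}a\big)_x$ that are generated when the kinetic energy is moved past the annihilation operators in the CCR expansion, and organize the resulting terms so that none of them costs more than two powers of $(\cN-1)$ --- the correct ``(potential energy)${}^2\lesssim$(particle number)${}^2\times$(kinetic energy)${}^2$'' scaling, which mirrors the known balance of kinetic and potential energy for these systems. The cruder estimate that merely applies the triangle inequality over the $\binom n2$ pairs together with the elementary operator Hardy bound $\|(|x_i-x_j|+\alpha)^{-1}\psi\|_2\le 2\|(1-\Delta_{x_i})^{1/2}\psi\|_2$ loses an extra power $n^{1/2}$ (it gives $\|\mathcal V^{(n)}\psi\|_2\lesssim n^{3/2}\|\cK^{(n)}\psi\|_2$), and is not strong enough to be fed back into Proposition~\ref{prop:K2bdt}; hence the need for the direct $\cA^2$ argument above.
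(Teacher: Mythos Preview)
Your reduction to the sector estimate $(\mathcal V^{(n)})^2\lesssim n^2(\cK^{(n)})^2$ is correct, but you have missed the one observation that makes this a two-line proof and have instead embarked on a CCR expansion whose ``hard part'' you leave unfinished. On the $n$-particle sector the potentials $(|x_i-x_j|+\alpha)^{-1}$ are commuting multiplication operators, so Cauchy--Schwarz for scalars gives directly
\[
\Big(\sum_{i<j}\frac{1}{|x_i-x_j|+\alpha}\Big)^2\;\le\;\binom{n}{2}\sum_{i<j}\frac{1}{(|x_i-x_j|+\alpha)^2}\,.
\]
The paper then applies the \emph{two-body} operator inequality $|x-y|^{-2}\lesssim(1-\Delta_x)^{1/2}(1-\Delta_y)^{1/2}$ (quoted from \cite{ES}, Lemma~9.1) to obtain
\[
(\mathcal V^{(n)})^2\;\lesssim\;n^2\sum_{i<j}(1-\Delta_{x_i})^{1/2}(1-\Delta_{x_j})^{1/2}\;\le\;n^2\Big(\sum_j(1-\Delta_{x_j})^{1/2}\Big)^2\;=\;n^2(\cK^{(n)})^2\,,
\]
the last inequality holding because all cross terms are nonnegative commuting operators. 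This is exactly the $n^2$ scaling you were after.

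The reason your ``cruder'' Hardy route loses $n^{1/2}$ is that $|x_i-x_j|^{-2}\le 4(-\Delta_{x_i})$ places both derivatives on a single particle; the sum over pairs then produces $n^2\sum_j(1-\Delta_{x_j})$, which is only $\le n^2\cdot n\,(\cK^{(n)})^2/n = n^2(\cK^{(n)})^2$ after an extra $n$ is spent. Splitting the two derivatives between the two particles via the two-body bound is precisely what lets the pair sum reassemble into $(\cK^{(n)})^2$ without loss. No normal ordering, no commutator bookkeeping, and no tracking of $[\cK,a_x]$ is needed; your proposed expansion would at best reproduce this after considerable effort, and as written it is not complete.
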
 

\begin{proof}
Denote
\[ \wt\cV = \int \rd x \rd y \, \frac{1}{|x-y|+\alpha} \, a_x^* a_y^* \,  {\bf 1}_{\vartheta N} (\cN) a_y a_x \, . \]
Then $\wt{\cV}$ (and thus $\wt{\cV}^2$) leaves the number of particles invariant and, on the $n$-particle sector, we have
\begin{equation}
(\wt{\cV}^2)^{(n)} = \bigg(\frac{1}{N}\sum_{1\leq i<j\leq n}\frac{1}{\;|x_i-x_j|+\alpha} \bigg)^2\qquad \qquad \text{if } n\leq \vartheta N
\end{equation} 
and $(\wt{\cV}^2)^{(n)}=0$ when $n>\vartheta N$. Using the operator inequality (see, for example, Lemma 9.1 in \cite{ES}) \[ \frac{1}{|x-y|^2} \lesssim (1-\Delta_x)^{1/2} \, (1-\Delta_y)^{1/2} \] we find 
\begin{equation}
\begin{split}
(\wt{\cV}^2 )^{(n)} \;& \lesssim \;\frac{n^2}{N^2}\sum_{1\leq i<j\leq n}\frac{1}{\big(|x_i-x_j|+\alpha)^2} \lesssim \; \frac{n^2}{N^2}\sum_{1\leq i<j\leq n}(1-\Delta_{x_i})^{1/2}(1-\Delta_{x_j})^{1/2} \\
&\lesssim \; \frac{n^2}{N^2}\bigg(\sum_{j=1}^n(1-\Delta_{x_j})^{1/2}\bigg)^2 \lesssim \;\vartheta^2 (\cK^2)^{(n)}
\end{split} 
\end{equation} 
and the lemma is proven.
\end{proof}

\subsection{Weak bounds on growth of $\cK^2$ with respect to fluctuation dynamics}
\label{sec:K2bd}

In this subsection, we show Proposition \ref{prop:K2bd}. Again, we will need the estimate (\ref{eq:bdpha}); recall also that in this section we use the shorthand notation $\phi_t \equiv \ph_t^{(\alpha_N)}$ for the solution of (\ref{eq:reg-aN}).

\bigskip

We write
\[ \begin{split} \big\langle \cU_N (t;0) \Omega, \cK^2  &\cU_N (t;0) \Omega \big\rangle \\ = \; &\int \rd x \rd y \, \left\langle  \cU_N (t;0) \Omega, (1-\Delta_x)^{1/4} \, a_x^* \, (1-\Delta_x)^{1/4} \, a_x \right.\\ &\left. \hspace{.5cm} \times (1-\Delta_y)^{1/4} \, a_y^* \, (1-\Delta_y)^{1/4} \, a_y \, \cU_N (t;0) \Omega \right\rangle  \\ =\; &  \int \rd x \rd y \, \left\langle \Omega, (1-\Delta_x)^{1/4} \cU^*_N (t;0) \, a_x^* \, \cU_N (t;0)\, (1-\Delta_x)^{1/4} \, \cU^*_N (t;0) \, a_x \,  \cU_N (t;0) \right. \\ &\left. \hspace{.5cm}  \times \, (1-\Delta_y)^{1/4} \, \cU^*_N (t;0) \, a_y^* \, \cU_N (t;0) \, (1-\Delta_y)^{1/4} \, \cU^*_N (t;0) \, a_y \, \cU_N (t;0) \Omega \right\rangle \,.
\end{split}
\]
Next we use that (see (\ref{eq:def-U})) 
\[ \cU^*_N (t;0) \, a_x \, \cU_N (t;0) = W^* (\sqrt{N} \ph) e^{i\cH_N^\alpha t} (a_x - \sqrt{N} \phi_t (x) ) e^{-i\cH^\alpha_N t} W (\sqrt{N} \ph) \] 
to conclude that
\begin{equation}\label{eq:K2-1} \begin{split}
\big\langle \cU_N (t;0) &\Omega, \cK^2  \cU_N (t;0) \Omega \big\rangle \\ = \; & \int \rd x \rd y \, \left\langle e^{-i \cH_N^\alpha t} \, W (\sqrt{N} \ph) \Omega, (1-\Delta_x)^{1/4} (a_x^* -\sqrt{N} \overline{\phi}_t (x)) \,  (1-\Delta_x)^{1/4} (a_x -\sqrt{N} \phi_t (x) ) \right. \\ & \left. \times  (1-\Delta_y)^{1/4} \, (a_y^* - \sqrt{N} \overline{\phi}_t (y)) \, (1-\Delta_y)^{1/4} \, (a_y  - \sqrt{N} \phi_t (y)) \, e^{-i \cH_N^\alpha t} W (\sqrt{N} \ph) \Omega \right\rangle \,.
\end{split}
\end{equation}
For $f \in L^2 (\bR^3)$, let \[ \pi (f) = a^* (f) + a (f) = \int \rd x \, \left( f(x) \, a_x^* + \overline{f}(x) \, a_x \right) . \]
Then, from (\ref{eq:K2-1}), we obtain 
\[ \begin{split}
\big\langle \cU_N (t;0) &\Omega, \cK^2  \cU_N (t;0) \Omega \big\rangle \\ = \; &\big\langle e^{-i \cH_N^\alpha t} \, W(\sqrt{N} \ph) \Omega , \cK^2 \, e^{-i\cH_N^\alpha t} W (\sqrt{N} \ph) \Omega \big\rangle \\ &+ 2 \sqrt{N} \text{Re } \big\langle e^{-i\cH_N^\alpha t} \, W(\sqrt{N} \ph) \Omega , \cK \, \pi ((1-\Delta)^{1/4} \phi_t) \, e^{-i \cH_N^\alpha t} W (\sqrt{N} \ph) \Omega \big\rangle  \\
 &+ 2 N \, \| \phi_t \|_{H^{1/2}}^2 \big\langle e^{-i \cH_N^\alpha t} \, W(\sqrt{N} \ph) \Omega ,  \cK \, e^{-i\cH_N^\alpha t} W (\sqrt{N} \ph) \Omega \big\rangle  \\
 &+ N \big\langle e^{-i \cH_N^\alpha t} \, W(\sqrt{N} \ph) \Omega ,  \pi^2 ((1-\Delta)^{1/4} \phi_t)  \, e^{-i\cH_N^\alpha t} W (\sqrt{N} \ph) \Omega \big\rangle  \\
 &+ 2 N^{3/2}  \| \phi_t \|_{H^{1/2}}^2 \, \big\langle e^{-i \cH_N^\alpha t} \, W(\sqrt{N} \ph) \Omega ,  \pi ((1-\Delta)^{1/4} \phi_t) \, e^{-i\cH_N^\alpha t} W (\sqrt{N} \ph) \Omega \big\rangle  \\
 &+ N^2 \| \phi_t \|_{H^{1/2}}^4 \,.
 \end{split}
 \]
 Using Schwarz inequality, we find
 \begin{equation}\label{eq:K2-2} \begin{split}
\big\langle \cU_N (t;0) \Omega, \cK^2  \cU_N (t;0) \Omega \big\rangle \lesssim \; &\big\langle e^{-i\cH_N^\alpha t} \, W(\sqrt{N} \ph) \Omega , \cK^2 \, e^{-i\cH_N^\alpha t} W (\sqrt{N} \ph) \Omega \big\rangle \\
 &+ N \big\langle e^{-i \cH_N^\alpha t} \, W(\sqrt{N} \ph) \Omega ,  \pi^2 ((1-\Delta)^{1/4} \phi_t)  \, e^{-i\cH_N^\alpha t} W (\sqrt{N} \ph) \Omega \big\rangle  \\
 &+ N^2 \| \phi_t \|_{H^{1/2}}^4 \,.
\end{split}
\end{equation}
Next, we observe that, for arbitrary $f \in L^2 (\bR^3)$ and $\psi \in \cF$, 
\begin{equation}\label{eq:K2-3} \langle \psi, \pi^2 (f) \psi \rangle = \| \pi (f) \psi \|^2 \lesssim \| a (f) \psi \|^2 + \| a^* (f) \psi \|^2 \lesssim \| f \|^2 \langle \psi, (\cN +1) \psi \rangle \, . \end{equation}
Moreover, we have
\[ \cH_N^{\alpha} = \cK - \cV, \qquad \text{where} \qquad  \cV = \frac{\lambda}{2N} \int \rd x \rd y \, \frac{1}{|x-y|+\alpha} \, \a_x^* a_y^* a_y a_x \, . \]
Since 
\[ \cV \lesssim \frac{1}{N\alpha} \cN^2 \] and since $[\cV, \cN]=0$, we conclude that 
\begin{equation}\label{eq:K2-4} \cK^2 \lesssim (\cH_N^{(\alpha)})^2 + \cV^2 \lesssim (\cH_N^{(\alpha)})^2 + \frac{1}{N^2 \alpha^2} \cN^4 \end{equation}
Inserting (\ref{eq:K2-3}) and (\ref{eq:K2-4}) in (\ref{eq:K2-2}), we find
 \begin{equation}\label{eq:K2-5} \begin{split}
\big\langle \cU_N (t;0) \Omega, &\cK^2  \cU_N (t;0) \Omega \big\rangle \\ \lesssim \; &\big\langle  W(\sqrt{N} \ph) \Omega , (\cH_N^{\alpha})^2 \, W (\sqrt{N} \ph) \Omega \big\rangle + \frac{1}{N^2 \alpha^2} \big\langle W(\sqrt{N} \ph) \Omega , \cN^4 \, W (\sqrt{N} \ph) \Omega \big\rangle \\
 &+ N \| \phi_t \|^2_{H^{1/2}} \, \big\langle W(\sqrt{N} \ph) \Omega ,  (\cN+1)  \, W (\sqrt{N} \ph) \Omega \big\rangle  + N^2 \| \phi_t \|_{H^{1/2}}^4 \\
  \lesssim \; &\big\langle W(\sqrt{N} \ph) \Omega , \cK^2 \, W (\sqrt{N} \ph) \Omega \big\rangle + \frac{1}{N^2 \alpha^2} \big\langle W(\sqrt{N} \ph) \Omega , \cN^4 \, W (\sqrt{N} \ph) \Omega \big\rangle \\
 &+ N \| \phi_t \|^2_{H^{1/2}} \, \big\langle W(\sqrt{N} \ph) \Omega ,  (\cN+1)  \, W (\sqrt{N} \ph) \Omega \big\rangle  + N^2 \| \phi_t \|_{H^{1/2}}^4 \,.
\end{split}
\end{equation}
Using the properties of Weyl operators listed in Lemma \ref{lm:coh}, it is simple to check that
\begin{equation}\label{eq:K2-6} \big\langle W(\sqrt{N} \ph) \Omega ,  (\cN+1)  \, W (\sqrt{N} \ph) \Omega \big\rangle  \lesssim N \quad \text{and } \quad  \big\langle W(\sqrt{N} \ph) \Omega , \cN^4 \, W (\sqrt{N} \ph) \Omega \big\rangle \lesssim N^4 \,. \end{equation} 
Moreover, we have
\begin{equation}\label{eq:K2-8} \begin{split}
\big\langle W(\sqrt{N} \ph) & \Omega , \cK^2 \, W (\sqrt{N} \ph) \Omega \big\rangle \\ = \; & \int \rd x \rd y \, \big\langle W(\sqrt{N} \ph) \Omega , (1-\Delta_x)^{1/4} a_x^* \, (1-\Delta_x)^{1/4} a_x \\ &\hspace{1cm} \times  (1-\Delta_y)^{1/4} a_y^* \, (1-\Delta_y)^{1/4} a_y \, W (\sqrt{N} \ph) \Omega \big\rangle \\ = \; &  \int \rd x \rd y \, \big\langle \Omega , (1-\Delta_x)^{1/4} (a_x^* -\sqrt{N} \overline{\ph} (x)) \, (1-\Delta_x)^{1/4} (a_x -\sqrt{N} \ph (x)) \\ &\hspace{1cm} \times  (1-\Delta_y)^{1/4} (a_y^* - \sqrt{N} \overline{\ph} (y)) \, (1-\Delta_y)^{1/4} (a_y -\sqrt{N} \ph (y)) \, W (\sqrt{N} \ph) \Omega \big\rangle \\= \; & N^2 \| \ph \|_{H^{1/2}}^4 + N \| \ph \|_{H^1}^2 
\end{split}\end{equation}
Inserting (\ref{eq:K2-6}) and (\ref{eq:K2-8}) into (\ref{eq:K2-5}), and using the bound (\ref{eq:bdpha}), we conclude that
\[  \big\langle \cU_N (t;0) \Omega, \cK^2  \cU_N (t;0) \Omega \big\rangle \lesssim N^2 + \frac{N^2}{\alpha^2} \,. \]
This completes the proof of Proposition \ref{prop:K2bd}.\qed

\subsection{Comparison of fluctuation dynamics with regularized dynamics}
\label{sec:U-Utilde}

In this section, we prove Proposition \ref{prop:U-Utilde}.

\bigskip

We rewrite
\begin{equation}\label{cook_U-Utilde}
\begin{split}
\big\|(\UU-\tUU)\Omega\big\|\;&=\;\big\|(1-\UUc\,\tUU)\Omega\big\| \\
& \leq\;\int_0^t\big\|\mathcal{U}^*_N(s;0)\big(\cL_N(s)-\widetilde{\cM}_N(s)\big)\,\widetilde{\mathcal{W}}_N(s;0)\,\Omega\big\|\,\rd s \\
& \leq\;\int_0^t\big\|\big(\cL_N(s)-\widetilde{\cM}_N(s)\big)\,\widetilde{\mathcal{W}}_N(s;0)\,\Omega\big\|\,\rd s\,.
\end{split}
\end{equation} 
We recall that
\begin{equation}\label{eq:diffe}
\begin{split}
\tLL-\LL\;&=\;\frac{\lambda}{\,\sqrt{N}\,}\int\rd x \,\rd y\,\frac{1}{|x-y|+\alpha_N} \, \phi_t(y) \,a_x^* a_y^* \,(1-{\bf 1}_{\vartheta N} (\cN))\,a_x+\mathrm{h.c.} \\
& \quad+\frac{\lambda}{\,2N}\int\rd x \,\rd y\,\frac{1}{|x-y|+\alpha_N}  \,a_x^* a_y^* \,(1-{\bf 1}_{\vartheta N} (\cN))\,a_y a_x\,.
\end{split}
\end{equation} 
Analogously to (\ref{eq:K2t-6}), but with ${\bf 1}_{\vartheta N}$ replaced by $1-{\bf 1}_{\vartheta N}$, the square of the terms on the first line of last equation can be bounded by 
\begin{equation}\label{eq:cubic}
\begin{split}
\Big( \frac{1}{\,\sqrt{N}\,}\int\rd x \,\rd y\,\frac{1}{|x-y|+\alpha_N} \, \phi_t(y) \,a_x^* a_y^* \,(1-{\bf 1}_{\vartheta N} (\cN))\, &a_x+\mathrm{h.c.} \Big)^2 \\ &\lesssim \frac{1}{N} (\cN+1)^3 \, (1- {\bf 1}_{\vartheta N} (\cN-2))\, .
\end{split}
\end{equation}
As for the second term on the r.h.s. of (\ref{eq:diffe}), its square can be estimated as follows.
\begin{equation}
\begin{split}
\Big( \frac{1}{N}\int\rd x \,\rd y\,&\frac{1}{|x-y|+\alpha_N}  \,a_x^* a_y^* \,(1-{\bf 1}_{\vartheta N} (\cN))\,a_y a_x \Big)^2 \\ = \; &\frac{1}{N^2}\int\rd x\,\rd y\,\rd x'\rd y'\,  \frac{1}{|x-y|+\alpha_N} \, \frac{1}{|x'-y'|+\alpha_N} \, \\ & \hspace{1cm} \times (1-{\bf 1}_{\vartheta N}(\cN-2))  a_x^* a_y^* a_y a_x a_{x'}^* a_{y'}^* a_{y'} a_{x'}(1-{\bf 1}_{\vartheta N}(\cN-2)) \,.
\end{split}
\end{equation}
From $a_y a_x a_{x'}^* a_{y'}^* = a_{x'}^* a_{y'}^* a_y a_x + [a_y a_x , a_{x'}^* a_{y'}^*]$, and from \eqref{eq:comm1}, we conclude that
\begin{equation}\label{L-L2almostfinal}
\begin{split}
\Big( \frac{1}{N}\int\rd x \,\rd y\,&\frac{1}{|x-y|+\alpha_N}  \,a_x^* a_y^* \,(1-{\bf 1}_{\vartheta N} (\cN))\,a_y a_x \Big)^2 \\ \lesssim \; & \frac{1}{\;N^2}\int\rd x\,\rd y\,\rd x'\rd y' \, \frac{1}{|x-y|+\alpha_N} \frac{1}{|x'-y'| +\alpha_N} \\
& \qquad \qquad \times (1-{\bf 1}_{\vartheta N} (\cN-2))  a_x^* a_y^* a_{x'}^* a_{y'}^*  a_{y'} a_{x'} a_y a_x (1-{\bf 1}_{\vartheta N} (\cN-2))  \\
&+ \frac{1}{\;N^2} \int\rd x\,\rd y\,\rd y' \, \frac{1}{|x-y|+\alpha_N} \frac{1}{|x-y'| +\alpha_N} \\
& \qquad \qquad \times (1-{\bf 1}_{\vartheta N} (\cN-2))\, a_x^* a_y^* a_{y'}^*  a_{y'} a_y a_x \, (1-{\bf 1}_{\vartheta N} (\cN-2))  \\
&+ \frac{1}{\;N^2}\int\rd x\,\rd y\, \frac{1}{(|x-y|+\alpha_N)^2} \, (1-{\bf 1}_{\vartheta N} (\cN-2)) \,  a_x^* a_y^* a_y a_x \, (1-{\bf 1}_{\vartheta N} (\cN-2))\\ 
\lesssim \; &\frac{1}{N^2\alpha_N^2} \, (\cN+1)^4  \, (1-{\bf 1}_{\vartheta N} (\cN-2))
\end{split}
\end{equation}
{F}rom (\ref{eq:cubic}) and (\ref{L-L2almostfinal}), we find that
\begin{equation}\label{eq:llsq} (\tLL-\LL)^2 \lesssim \left(\frac{1}{N} + \frac{1}{N^2\alpha_N^2} \right)  (\cN+1)^4 (1-{\bf 1}_{\vartheta N} (\cN-2)) \end{equation}
For every $k \in \bN$, we have $(1-{\bf 1}_{\vartheta N} (\cN-2)) \leq (\cN-2)^k / (\vartheta N)^k$. Therefore
\begin{equation}\label{eq:diff-fin} (\tLL-\LL)^2 \lesssim \left(\frac{1}{N} + \frac{1}{N^2\alpha_N^2} \right)  \frac{(\cN+1)^{k+4}}{(\vartheta N)^k} \, \,. \end{equation}
Analogously to Proposition \ref{prop:ntildebd} (see Lemma 3.8 and its proof in \cite{RS}), there is $C=C(k,\kappa,T, \|\ph \|_{H^2})$ such that 
\[ \langle \wt{\cW}_N (t;0) \Omega, (\cN+1)^{4+k} \, \wt{\cW}_N (t;0) \Omega \rangle \leq C \] for all $|t| \leq T$. {F}rom (\ref{eq:llsq}), we find that, for every $k \in \bN$, there exists $C=C(\vartheta, k, T, \kappa, \| \ph \|_{H^2})$ such that 
\[ \left\|  (\tLL-\LL) \wt{\cW}_N (t;0) \Omega \right\| \leq \frac{C}{N^k} \left(\frac{1}{N} + \frac{1}{N\alpha_N} \right) \, . \]
The proposition now follows from (\ref{cook_U-Utilde}).
 \qed

\thebibliography{hhh}

\bibitem{BGM} Bardos, C.; Golse, F.; Mauser, N. Weak coupling limit of the
$N$-particle Schr\"odinger equation. \textit{Methods Appl. Anal.}
\textbf{7} (2000), 275-293.

\bibitem{ES} Elgart, A.; Schlein, B.: Mean field dynamics of boson stars. {\it Comm. Pure Appl. Math.} {\bf 60} (2007), no. 4, 500-545.

\bibitem{ErS} Erd\H os, L.; Schlein, B.: Quantum dynamics with mean field interactions: a new approach. {\em J. Stat. Phys.} {\bf 134} (2009), no. 5, 859-870.

\bibitem{ESY1} Erd{\H{o}}s, L.; Schlein, B.; Yau, H.-T.:
Derivation of the cubic nonlinear Schr\"odinger equation from
quantum dynamics of many-body systems. {\it Invent. Math.} {\bf 167} (2007), 515-614.

\bibitem{ESY2} Erd{\H{o}}s, L.; Schlein, B.; Yau, H.-T.: Derivation of the Gross-Pitaevskii equation for the dynamics of Bose-Einstein condensate. Preprint arXiv:math-ph/0606017. To appear in {\it Ann. Math.}

\bibitem{ESY3} Erd\H os, L.; Schlein, B.; Yau, H.-T.: Rigorous derivation of the Gross-Pitaevskii equation. {\it Phys. Rev Lett.} {\bf 98} (2007), no. 4, 040404.

\bibitem{ESY4}  Erd{\H{o}}s, L.; Schlein, B.; Yau, H.-T.: Rigorous derivation of the Gross-Pitaevskii equation with a large interaction potential. Preprint arXiv:0802.3877. To appear in {\it J. Amer. Math. Soc.}

\bibitem{EY} Erd{\H{o}}s, L.; Yau, H.-T.: Derivation
of the nonlinear {S}chr\"odinger equation from a many body {C}oulomb
system. \textit{Adv. Theor. Math. Phys.} \textbf{5} (2001), no. 6,
1169--1205.

\bibitem{FL}
Fr\"ohlich, J.; Lenzmann, E.: Blowup for nonlinear wave equations describing bosons stars. {\it Comm. Pure Appl. Math.} {\bf 60} (2007), no. 11, 1691--1705.

\bibitem{FKP}
Fr\"ohlich, J.;Knowles, A.;Pizzo,A.: Atomism and quantization. {\it J. Phys. A: Math. Theor.} {\bf 40} (2007), 3033--3045.

\bibitem{GMM}
Grillakis, M.; Machedon, M.; Margetis, D.: Second-order corrections to mean field evolution of weakly interacting bosons. I. {\it Comm. Math. Phys.} {\bf 294} (2010), no. 1, 273--301.

\bibitem{GMM2}
Grillakis, M.; Machedon, M.; Margetis, D.: Second-order corrections to mean field evolution of weakly interacting bosons. I. Preprint arXiv:1003.4713.

\bibitem{GV} Ginibre, J.; Velo, G.: The classical
field limit of scattering theory for non-relativistic many-boson
systems. I and II. \textit{Commun. Math. Phys.} \textbf{66} (1979),
37--76, and \textbf{68} (1979), 45--68.

\bibitem{GK} Gulisashvili, A.; Kon, M. K.: Exact smoothing properties of Schr\"odinger semigroups, {\it Amer. J. Math.} {\bf 118} (1996), 1215-1248.

\bibitem{H} Hepp, K.: The classical limit for quantum mechanical
correlation functions. \textit{Commun. Math. Phys.} \textbf{35}
(1974), 265--277.

\bibitem{L}
Lenzmann, E.: Well-posedness for semi-relativistic Hartree equations of critical type. 
{\it Math. Phys. Anal. Geom.} {\bf 10} (2007), no. 1, 43--64.

\bibitem{LY}
Lieb, E. H.; Yau, H.-T.: The {C}handrasekhar theory of stellar
collapse as the limit of quantum mechanics. \textit{Comm. Math.
Phys.} \textbf{112} (1987), no. 1, 147--174.

\bibitem{LT}
Lieb, E. H.; Thirring, W. E.: Gravitational collapse in quantum mechanics with relativistic kinetic energy. {\it Ann. Phys.} {\bf 155} (1984), 494--512. 

\bibitem{KP} Knowles, A.; Pickl, P.: Mean-field dynamics: singular potentials and rate of convergence. Preprint arXiv:0907.4313.

\bibitem{P} Pickl, P.: Derivation of the time dependent Gross Pitaevskii equation with external fields. Preprint arXiv:1001.4894.

\bibitem{RS}
Rodnianski, I.; Schlein, B.: Quantum fluctuations and rate of convergence towards mean field dynamics. {\it Comm. Math. Phys.} {\bf 291} (2009), no. 1, 31--61.

\bibitem{S} Spohn, H.: Kinetic equations from Hamiltonian dynamics.
   \textit{Rev. Mod. Phys.} \textbf{52} (1980), no. 3, 569--615.

\end{document}